\documentclass[11pt,a4paper]{article}
\pdfoutput=1
\usepackage[utf8]{inputenc}
\usepackage[hyphens]{url}
\usepackage{graphicx}
\usepackage{natbib} 
\usepackage{caption}

\usepackage[margin=1in]{geometry}
\usepackage{algorithm}
\usepackage{algorithmic}
\usepackage{amsmath}
\usepackage{amssymb}
\usepackage{xspace}
\usepackage{amsthm}
\usepackage{thmtools}
\usepackage{tikz}
\usepackage{soul}
\usepackage{xspace}
\usepackage[svgnames]{xcolor}
\usepackage{hyperref}
\hypersetup{colorlinks={true},urlcolor={blue},linkcolor={DarkBlue},citecolor={black}}
\usepackage{cleveref}
\usepackage{booktabs}
\usepackage{mathtools}
\usepackage{complexity}
\usepackage{cancel}
\usetikzlibrary{calc,shadows.blur,positioning,fit,decorations.pathreplacing}

\author{
  \begin{tabular}{c}
    \textbf{Paul W. Goldberg}\\
  \small{University of Oxford}\\
  \href{mailto:paul.goldberg@cs.ox.ac.uk}
  {\small{\texttt{paul.goldberg@cs.ox.ac.uk}}}
  \end{tabular}
  \and
  \begin{tabular}{c}
    \textbf{Alexandros Hollender}\\
    \small{Technical University of Munich}\\
    \href{mailto:alexandros.hollender@tum.de}
    {\small{\texttt{alexandros.hollender@tum.de}}}
  \end{tabular}
  \and 
  \begin{tabular}{c}
    \textbf{Giannis Tyrovolas}\\
  \small{University of Oxford}\\
  \href{mailto:giannis.tyrovolas@cs.ox.ac.uk}
  {\small{\texttt{giannis.tyrovolas@cs.ox.ac.uk}}}
  \end{tabular}
}

\newtheorem{theorem}{Theorem}

\theoremstyle{definition}
\newtheorem{lemma}[theorem]{Lemma}
\newtheorem{definition}[theorem]{Definition}

\DeclarePairedDelimiter\abs{\lvert}{\rvert}

\newcommand{\PromiseSPNE}[1]{\textsc{PromiseRoundRobinSPNE}\ensuremath{_{#1}}}
\newcommand{\PickOneOf}{\textsc{PickOneOf}}
\newcommand{\OXS}{$\mathit{OXS}$\xspace}
\renewcommand{\phi}{\varphi}
\let\vec\mathbf

\title{Equilibria of Round-Robin: Computational Hardness and Fairness for Few Subadditive Agents}
\date{}
\begin{document}
\maketitle
\thispagestyle{empty}

\begin{abstract}
The round-robin procedure is a simple and well-studied fair division mechanism where agents pick goods in turns. Motivated by draft mechanisms in sports leagues, we investigate strategic behaviour in online round-robin for subadditive agents. This gives rise to an extensive-form game, and we study the computational problem of computing a subgame perfect Nash equilibrium (SPNE). We show that for just two submodular agents, computing an SPNE is $\mathsf{PSPACE}$-hard. Even for the class of $\mathit{OXS}$ utilities, which are a special case of submodular utilities, computing an SPNE remains $\mathsf{NP}$-hard for a small number of agents. We complement our computational results with normative results. We show that for just three additive agents, there exist instances where every equilibrium violates EF1. This separates the online and the direct revelation games. On the positive side, we show that for additive agents every equilibrium allocation is proportional up to one good (PROP1) and for two additive agents it is also EF1. Finally, by showing that round-robin is bossy at equilibrium, we prove that the number of equilibrium allocations can be exponential even if agents have lexicographic preferences.

\end{abstract}

\newpage

\setcounter{page}{1}

\section{Introduction}

Allocating indivisible resources fairly is a fundamental problem in Computational Social Choice and Game Theory.
In fair division, we define notions of fairness and efficiency and construct mechanisms that achieve these notions.
Round-robin is a mechanism that satisfies many fairness guarantees.
In this simple mechanism, agents take turns to pick from available items.
Picking an item makes it unavailable to other agents.

Initially, round-robin was studied in settings where agents have \emph{additive} utilities (e.g., \citet{amanatidis2017approximation, caragiannis2019unreasonable, aziz2022fair}).
Recently, however, there has been increased interest in studying round-robin where agents have \emph{subadditive} valuation functions \citep{barman2020approximation,ghodsi2022fair,amanatidis2023round,tao2024fair,montanari2025weighted,andersen2026computing, connor2026random}.
More than a theoretical challenge, the study of these instances is well-motivated from the drafting of athletes into sports teams.
For example, in the US, the National Basketball Association, National Hockey League and National Football League---often abbreviated as NBA, NHL and NFL---hold an annual draft to allocate new players to teams.
Teams pick from a pool of available players via a generalised version of round-robin.
By a very conservative estimate, the contracts given as part of the NBA draft are worth 100 million USD.\footnote{In 2021 the New Orleans Pelicans sold the 53\textsuperscript{rd} NBA draft pick for 2 million USD \citep{drafttrade}. The value of the 53\textsuperscript{rd} pick is a very conservative lower bound for the value of earlier picks.}

We identify three salient facts about the draft.
Teams have \emph{submodular} preferences: having two players that play the same position has diminishing marginal returns.
Teams are \emph{strategic}: they may use early picks for popular players and later picks for better but less popular players.
That was the tactic of the Seattle Seahawks when drafting Russell Wilson, one of their most successful players.\footnote{\url{https://www.nbcsports.com/nfl/profootballtalk/rumor-mill/news/pete-carroll-praises-john-schneider-for-drafting-russell-wilson}}
The third fact is that the team making the $n$\textsuperscript{th} pick knows which players were picked in the $n-1$ previous picks.\footnote{Teams are given four to five minutes to pick after the previous pick: \url{https://www.nba.com/news/nba-draft-faq}} This makes it an instance of \emph{online} round-robin. This is distinct from the model of \citet{amanatidis2023round, amanatidis2024allocating} where agents directly reveal their preferences to a ``server'' that executes round-robin \emph{offline}.

Motivated by these examples, we study the equilibria of online strategic round-robin with subadditive agents.
As online round-robin is an extensive-form game, the appropriate equilibrium notion is that of a subgame perfect Nash equilibrium (SPNE).
Previous work has studied the complexity of computing SPNEs for additive utilities.
\citet{kohler1971class} present a linear-time algorithm for two agents and \citet{kalinowski2013strategic} show that the problem becomes \PSPACE{}-hard for an unbounded number of agents.
There are no other computational hardness (or tractability) results for a constant number of agents \emph{even for subadditive utilities}. This paper addresses this gap.

\subsection{Our Contributions}

We study the SPNEs of online round-robin for subadditive agents from normative and computational perspectives.

We first investigate the structural complexity of SPNEs. Even for additive utilities and agents holding strict preferences over bundles of items, we show that instances may have exponentially many SPNEs. Additionally, even with the above restrictions, SPNEs violate non-bossiness: an agent is able to affect the bundles of other agents without affecting her own.
We also initiate the normative study of equilibria by constructing an instance with a unique SPNE that violates envy-freeness up to one good (EF1). This is in contrast to the offline setting, where it is known that every pure Nash equilibrium is EF1 \citep{amanatidis2024allocating}. On the positive side, we show that a weaker fairness notion---PROP1---is satisfied in all SPNEs for additive agents.

As our main contribution, we study the computational complexity of computing SPNEs. We show that computing an SPNE is \PSPACE{}-hard even in instances with two agents where one is additive and the other is submodular. This is a substantial increase in computational complexity compared to the linear time algorithm for two additive agents of \citet{kohler1971class}.
If agents have \OXS{} utilities---which are a strict subclass of submodular and a slight generalisation of additive---then for just 11 agents the problem is \NP-hard.
In addition to showing hardness for natural instances with a constant number of agents, we make partial progress in resolving the computational complexity of computing SPNEs for a constant number of additive agents, a \emph{``stubbornly open''} problem in the words of \citet{walsh2020fair}.

\subsection{Related Work}
Round-robin is a simple mechanism for the fair division of indivisible items. 
For a detailed survey of the subject, we refer the reader to the works of \citet{bouveret2016fair} and \citet{amanatidis2023fair}.

There is a substantial literature on the guarantees of round-robin when agents participate truthfully.
For additive utilities, round-robin guarantees EF1 for goods, chores or settings with both goods and chores \citep{caragiannis2019unreasonable,aziz2022fair}.
If agents have disproportionate entitlements in the common resources, as in weighted fair division, then round-robin can be modified to achieve weighted EF1 \citep{chakraborty2021weighted}.
It can also achieve $\frac{1}{2}$-approximate Maximin fair share (MMS) \citep{amanatidis2017approximation}.
For submodular utilities, \citet{barman2020approximation} show that round-robin guarantees $0.21$-approximate MMS which was improved by \citet{ghodsi2022fair} to $\frac{1}{3}$.

Strategic behaviour in round-robin differs between the online and offline settings.
The online setting was first studied for additive utilities by \citet{kohler1971class}.
They provide a linear-time algorithm for computing an SPNE with two additive players.
\citet{kalinowski2013strategic} study a more general setting of sequential allocation where agents pick according to an arbitrary ordering.
Importantly, they also show that for an unbounded number of players---in fact linear in the number of goods---computing an SPNE is \PSPACE-hard.
Although this result is shown for an ordering that is not the alternating ordering of round-robin, it can be easily modified to the round-robin ordering.
Since 2013, there has not been any progress on this front and successive survey papers point out the complexity of computing SPNE for a fixed number of agents as an open problem \citep[Challenge 5]{walsh2016strategic}, \citep{walsh2020fair}.

Offline round-robin is a meaningfully different game.
For additive agents, \citet{aziz2017equilibria} show that a pure Nash equilibrium (PNE) in the form of a ``bluff'' profile always exists and is efficiently computable.
This ``bluff'' profile can be extended to more general utility functions.  \citet{amanatidis2023round} show that the bluff profile is an exact PNE for cancellable utilities and a $\frac{1}{2}$-approximate PNE for submodular utilities. 
However, offline round-robin may not even have a $\frac{3}{4}$-approximate PNE  for submodular functions.
What is surprising in offline round-robin is that the PNEs have fairness guarantees.
For additive agents, any PNE allocation of round-robin is EF1 \citep{amanatidis2024allocating}.
For more general classes of utility functions, weaker but similar guarantees hold at equilibrium \citep{amanatidis2023round}.

\section{Preliminaries}

We study the fair division of indivisible non-negatively valued items. Instances have a set $N$ of agents and a set $M$ of items.
We will write $n = \abs{N}$ and $m = \abs{M}$.
Each agent $A \in N$ has a utility function $u_A \colon 2^{M} \longrightarrow \mathbb{Q}_{\geq 0}$ over bundles of goods.
Utility functions are non-negative and monotone, i.e., $u_A(S) \leq u_A(S')$ whenever $S \subseteq S'$.
Without loss of generality, they are normalised, i.e., $u_A(\varnothing) = 0$.
A special case of utility functions is that of additive utility functions where $u_A(S) = \sum_{j \in S}u_A(\{j\})$.
We will study utility functions that need \emph{not be additive}.

We introduce the following shorthands: for $x \in M$,  we write $u_A(x)$ for $u_A(\{x\})$; for $S \subseteq M$ and $x \in M$, we write $S + x$ for $S \cup \{x\}$ and $S - x$ for $S \setminus \{x\}$.
For an item $x \in M$ and a bundle $S \subseteq M \setminus \{x\}$, we will write $u_A(x \mid S) = u_A(S \cup \{x\}) - u_A(S)$ for the marginal contribution of $x$ to $S$.
For a strategy profile $\boldsymbol{\sigma} = (\sigma_1, \ldots, \sigma_n)$, we write $(\boldsymbol{\sigma}_{-i}, \sigma')$ for the strategy profile $(\sigma_1, \ldots,\sigma_{i-1},\sigma', \sigma_{i+1}, \ldots, \sigma_n)$.

\paragraph{Round-robin.} In round-robin, agents pick items in turns.
Once an agent picks an item, it becomes unavailable to other agents.
Agents pick items based on a fixed ordering, $A_1 > \ldots > A_n$.
If there are more items than agents, agent $A_1$ picks after $A_n$ and the ordering cycles.

Even in the case of two additive agents, agents can benefit from strategising. Consider the example in \Cref{tab:example}. The numbers in the cells represent the value an agent has for a particular item.
Agents pick in order $\mathit{ABA}$ and the agents are additive.
If both players act truthfully and pick their favourite item in each turn, the final allocation will be $S_A = \{x, z\}$ and $S_B = \{y\}$ and agent $A$ receives payoff $4$.
Notice, however, that agent $A$ can strictly improve her utility by picking item $y$ in her first turn.
Then, since agent $B$ has only one remaining turn, it is a dominant strategy for him to pick his favourite available item in the next turn, namely $z$. Finally agent $A$ picks item $x$, resulting in the final allocation $S_A = \{x, y\}$ and $S_B = \{z\}$. So agent $A$'s utility is now 5, a strict improvement of her previous utility.

\begin{table}
    \centering
    \begin{tabular}{lllllll}
    \toprule
        & $x$ & $y$ & $z$ \\ \midrule
    $A$ & 3  & 2   & 1  \\ \midrule
    $B$ & 1   & 3  & 2  \\ \bottomrule
    \end{tabular}
    \caption{An example round-robin instance with agents $A,B$ and items $x, y, z$.}
    \label{tab:example}
\end{table}

Hence, we study round-robin as a game. As this is a sequential game, it is best captured as an extensive-form game.

\begin{definition}
    We denote the possible states of the round-robin game as $Q$.
    Each state is of the form $(i;S; S_1, \ldots, S_n)$ for agent $i \in [n]$, $S, S_1, \ldots, S_n \subseteq M$ . Player $i$ is the player to act, $S$ is the set of available items and $S_i$ are pairwise disjoint partial bundles of each agent at this state.
    At node $(i;S; S_1, \ldots, S_n)$ agent $i$ may take any action $a \in S$, and move to node $({\mathrm{next}(i)}; S -a; S_1,\ldots,S_{i-1},S_i+a, S_{i+1},\ldots, S_n)$ where $\mathrm{next}(i)= i + 1$ if $i < n$ and $\mathrm{next}(i)=1$ if $i = n$.
    The payoff to agent $i$ at a leaf node $(j;\varnothing;S_1, \ldots,S_n)$ is $u_i(S_i)$.
    The initial state of the game, $q_0$ is $(1; M;\varnothing,\ldots, \varnothing)$.
\end{definition}

We write $Q_i$ for the states in which player $i$ acts.
A pure strategy $\sigma_i$ is a mapping from states $Q_i$ to available actions in each state.
Given a strategy profile $\boldsymbol{\sigma}= (\sigma_1, \ldots,\sigma_n)$, the result of the game is determined by starting at $q_0$ and iteratively taking the action given by $\sigma_i$ where $i$ is the owner of the current state until we reach a leaf node.
We write $u_i(\boldsymbol{\sigma})$ as a shorthand for the utility that agent $i$ receives at the leaf node that results from strategy profile $\boldsymbol{\sigma}$.

\begin{definition}[Nash Equilibrium]
    A \emph{pure Nash equilibrium} is a strategy profile $\boldsymbol{\sigma}=(\sigma_1, \ldots, \sigma_n)$ where no agent can benefit by deviating.
    That is, for all $i \in N$, all strategies $\sigma_i'$ of player $i$,
    $u_i(\boldsymbol{\sigma}) \geq u_i(\boldsymbol{\sigma}_{-i},\sigma'_i)$.
\end{definition}

\begin{definition}[SPNE]
    For every node $q$ of an extensive-form game $G$, we define a \emph{subgame} $G_q$.
    $G_q$ is an extensive-form game with initial node $q$ and a strategy space restricted to descendants of $q$ in the game tree.
    A \emph{subgame perfect Nash equilibrium} (SPNE) of a game $G$ is a strategy profile that is a pure Nash equilibrium for every subgame of $G$.
\end{definition}

All finite extensive-form games of perfect information---and in particular round-robin---have an SPNE. 
Notice that the instance given by \Cref{tab:example} also separates the concepts of Nash and SPNE. Consider the strategy profile where players pick the alphabetically first available item.
Then, the final allocation is $S_A = \{x, z\}$ and $S_B = \{y\}$.
This is a Nash equilibrium: if player $A$ does not pick $x$ in his first turn then player $B$ will.
However, the threat of player $B$ is not credible: in the subgame where only $x$ and $z$ are available and it is player $B$'s turn, player $B$ will always pick $z$.
The only SPNE in this game is the one we describe earlier, namely for agents to pick items in order $y \rightarrow z \rightarrow x$.

\paragraph{Utility hierarchy.}
We will repeatedly discuss preferences which are a special case of additive preferences: lexicographic preferences.
\begin{definition}
    Agent $A$ has strict \emph{lexicographic preferences} over items $a_1, \ldots, a_m$ if $A$'s preferences are ordinally equivalent to additive preferences with $u_A(a_k) = 2^{m-k}$. 
\end{definition}
The primary focus of the paper are utilities that are not additive. We will use the hierarchy of complement-free utility functions as given in \citet{lehmann2001combinatorial}.

The smallest strict superset of additive utilities in the hierarchy are \OXS utilities.
\OXS utilities were first introduced by \citet{shapley1962complements} as assignment valuations, and we adapt his presentation to the setting of sports teams.
Suppose a team has $m$ different positions in its formation. 
Let $[n]$ be the set of available athletes to the team. Athlete $i$ provides utility $w_{ij}$ by playing position $j$.
The utility of a team across players playing different positions is additive.
Given a subset of athletes $S \subseteq [n]$, the coach matches athletes to positions so as to maximise the team's utility.
A function $f \colon 2^{M} \rightarrow \mathbb{Q}_{\geq 0}$ is \OXS if it admits this representation.
Hence, \OXS valuations have concise representations.
More formally:

\begin{definition}[\OXS]
    \label{def:OXS}
    A function $f \colon 2^{M} \rightarrow \mathbb{Q}_{\geq 0}$ is \emph{unit demand} if for all $S \subseteq M$, $f(S) = \max_{s \in S}{f(s)}$.
    A function $g \colon 2^{M} \rightarrow \mathbb{Q}_{\geq 0}$ is \OXS if there exist unit demand functions $f_1, \ldots, f_k$, such that for all $S \subseteq M$, $g(S) = \max(\{\sum_{i = 1}^{k} f_i(S_i) \mid S_1, \ldots, S_k \text{ a partition of } S\})$.
\end{definition}

A strictly larger class than \OXS is that of submodular functions.
Submodular functions capture the essential economic property of diminishing marginal returns.

\begin{definition}
    A function $f \colon 2^{M} \rightarrow \mathbb{Q}_{\geq 0}$ is \emph{submodular} if for all $x \in M$ and $S \subseteq T \subseteq M - x$, $f(x \mid S) \geq f(x \mid T)$.
\end{definition}

\section{Structural Properties of SPNEs}

As round-robin is an extensive-form game of perfect information, SPNEs are guaranteed to exist.
However, in the general case, there is no clear way to compute SPNEs other than exhaustive search. We present lemmas that allow us to reason over particular instances.

\begin{restatable}{lemma}{lemmaGreedyAgents}
    Consider state $q$ of round-robin where it is $I$'s turn to play, he is allocated items $S_I'$ and the set of unallocated items is $M'$.
    If it is $I$'s last turn to play, then in any SPNE he will play greedily and pick an item $g\in M'$ that maximises $u_I(S_I' + g)$.
    \label{lemma:agents-are-greedy}
\end{restatable}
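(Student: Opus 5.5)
The plan is to use the subgame condition in the definition of SPNE directly at the state $q$. Because $\boldsymbol{\sigma}$ is an SPNE, its restriction to the subgame $G_q$ must be a pure Nash equilibrium of $G_q$. The task is therefore to show that, inside $G_q$, agent $I$'s payoff is fully determined by the item he picks at $q$, whatever everyone else does.

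The key step is to formalise ``last turn'' and draw the consequence. It means that no descendant of $q$ is a state in $Q_I$. Agent $I$ acts at $q$, picks some $g \in M'$, and afterwards his bundle never changes, since bundles only grow when their owner acts. So at every leaf of $G_q$ that is reached after the action $g$, agent $I$'s bundle is exactly $S_I' + g$, and his payoff there is $u_I(S_I' + g)$. This holds regardless of the strategies of the other agents in the continuation. Monotonicity is not needed here, because the bundle is literally fixed.

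Next, suppose for contradiction that $\sigma_I(q) = g$ with $u_I(S_I'+g) < u_I(S_I'+g^*)$ for some $g^* \in M'$. Let $\sigma_I'$ agree with $\sigma_I$ everywhere except that $\sigma_I'(q) = g^*$. Since $I$ has no later decision nodes in $G_q$, this is a valid strategy in $G_q$. Under $(\boldsymbol{\sigma}_{-I}, \sigma_I')$, play in $G_q$ reaches a leaf where $I$ holds $S_I' + g^*$. Agent $I$ therefore strictly gains, which contradicts $\boldsymbol{\sigma}$ being a Nash equilibrium of $G_q$. Hence $\sigma_I(q)$ must be a maximiser of $u_I(S_I' + g)$ over $g \in M'$.

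There is no real obstacle. The only point needing care is the bookkeeping that $I$'s bundle is frozen after $q$, which follows immediately from the transition rule in the state definition: only the bundle of the acting agent is modified.
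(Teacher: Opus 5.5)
Your proof is correct and follows the same idea as the paper's one-line argument. Since $q$ is $I$'s final decision node, his bundle is fixed at $S_I'+g$ once he picks, so any non-maximising choice admits a profitable deviation at $q$ alone within the subgame $G_q$; you simply make explicit the bookkeeping (bundle frozen after $q$, the deviation being a valid strategy in $G_q$) that the paper calls ``immediate''.
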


\begin{proof}
    The proof is immediate. This is the final opportunity for the agent to play. Hence, he will play the action that maximises his utility in this turn. That means, picking an item with the highest marginal contribution to his utility.
\end{proof}

\begin{restatable}{lemma}{lexicographicSpne}
    \label{lemma:lexicographic-spne}
    Consider a state $q$ of round-robin where it is $I$'s turn to pick an item, he has already been allocated items $S_I'$ and the set of unallocated items is $M'$.
    Suppose there exists $g \in M'$ such that $u_I(S_I' + g) > u_I(S_I' \cup M' - g)$.
    Then, in every SPNE, agent $I$ receives item $g$ in subgame $G_q$.
\end{restatable}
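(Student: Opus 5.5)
The plan is a direct deviation argument using subgame perfection. Fix an SPNE $\boldsymbol{\sigma}$ and look at the subgame $G_q$. Since $\boldsymbol{\sigma}$ restricted to $G_q$ is a Nash equilibrium of $G_q$, it suffices to show two things: agent $I$ has a deviation in $G_q$ guaranteeing utility at least $u_I(S_I'+g)$, and every outcome of $G_q$ in which $I$ does not receive $g$ gives $I$ strictly less than that.

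\textbf{The deviation.} Let $\sigma_I'$ agree with $\sigma_I$ except at state $q$, where it picks $g$. Under $(\boldsymbol{\sigma}_{-i},\sigma_I')$, play in $G_q$ starts with $I$ taking $g$. Whatever happens afterwards, $I$'s final bundle contains $S_I'+g$. By monotonicity of $u_I$, agent $I$ then obtains at least $u_I(S_I'+g)$. Because $\boldsymbol{\sigma}$ is a Nash equilibrium of $G_q$, we get $u_I(\boldsymbol{\sigma}\text{ on }G_q)\ge u_I(S_I'+g)$.

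\textbf{The upper bound.} Suppose, for contradiction, that on the equilibrium path of $G_q$ agent $I$ does not end up with $g$. From $q$ onwards, $I$ can only acquire items in $M'$, so his final bundle is contained in $S_I'\cup M'$. Since $g$ is not in it, the bundle is contained in $S_I'\cup M' - g$. By monotonicity, his utility is at most $u_I(S_I'\cup M'-g)$. By hypothesis this is strictly less than $u_I(S_I'+g)$, contradicting the previous paragraph. Hence in every SPNE agent $I$ receives $g$ in $G_q$.

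There is no real obstacle. The one point to handle carefully is the claim ``$\boldsymbol{\sigma}$ induces a Nash equilibrium on $G_q$'', which is exactly the definition of SPNE. One should also note that $g$ need not be picked at $q$ itself; $I$ could pick another item first and $g$ later. The conclusion is only that $g$ ends up in $I$'s bundle, which is precisely what the statement asserts.
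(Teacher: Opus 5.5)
Your proof is correct and follows the paper's argument: bound $I$'s utility by $u_I(S_I'\cup M'-g)$ via monotonicity if he misses $g$, and contradict this with the deviation of picking $g$ at $q$, which guarantees at least $u_I(S_I'+g)$. You spell out the subgame-perfection step, $\boldsymbol{\sigma}$ being a Nash equilibrium of $G_q$, more explicitly than the paper does, but the route is the same.
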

\begin{proof}
    Suppose not. Then, by monotonicity, agent $I$'s utility can be upper bounded by $u_I(S_I' \cup M' - g)$.
    But, agent $I$ can beneficially deviate by picking item $g$ at state $q$ and receive utility which is at least $u_I(S_I' + g) >u_I(S_I' \cup M' - g)$.
\end{proof}

\begin{restatable}{lemma}{lemmaLexicographicRR}
    \label{lemma:lexicographic-rr}
    Consider a state $q$ of round-robin, where the set of available items is $M'$, where $I$,$J$ are players and it is the turn of player $I$ to play.
    Agents $I$ and $J$ are currently allocated items $S_I', S_J'$ respectively and there is an item $g \in M'$ such that $u_I(S_I' + g) > u_I(S_I' \cup M' -g)$ and $u_J(S_J' + g) > u_J(S_J' \cup M' - g)$. Then, in any SPNE, player $I$ must pick item $g$ in state $q$.
\end{restatable}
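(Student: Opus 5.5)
The plan is to argue by contradiction and reduce to \Cref{lemma:lexicographic-spne}. Fix an SPNE $\boldsymbol{\sigma}$ and suppose that at state $q$ player $I$ picks some item $h \neq g$. This moves the game to a state $q'$ where it is $J$'s turn or, if there are other agents between $I$ and $J$, the turn of some intermediate agent. In $q'$ the set of available items is $M'' = M' - h$, which still contains $g$, and $J$'s bundle is still $S_J'$. I will show that in the subgame following $q'$, player $I$ does not receive $g$, and then derive a profitable deviation for $I$ at $q$.

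First, the claim that $I$ loses $g$ after picking $h$. Since $\boldsymbol{\sigma}$ is subgame perfect, it induces an SPNE of every subgame. I apply \Cref{lemma:lexicographic-spne} to agent $J$ at the first state $q''$ after $q$ where $J$ acts. Two cases arise.

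\emph{Case 1: $g$ is already taken by the time $J$ acts.} It was taken by one of the intermediate agents (or by $J$), not by $I$, since $I$ has not acted since $q$. So $I$ does not receive $g$.

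\emph{Case 2: $g$ is still available at $q''$.} Here $J$'s bundle is still $S_J'$ and the available set $M'''$ satisfies $M''' \subseteq M'$. By monotonicity,
\[
u_J(S_J' \cup M''' - g) \le u_J(S_J' \cup M' - g) < u_J(S_J' + g).
\]
So the hypothesis of \Cref{lemma:lexicographic-spne} holds for $J$ at $q''$, and $J$ receives $g$ in that subgame. In either case $I$ ends without $g$.

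Next comes the deviation. Along the play from $q$ in which $I$ picks $h$, $I$'s final bundle lies inside $S_I' \cup M' - g$. By monotonicity his utility is at most $u_I(S_I' \cup M' - g)$. If instead $I$ picks $g$ at $q$, any continuation gives him utility at least $u_I(S_I' + g)$, which is strictly larger. This contradicts $\boldsymbol{\sigma}$ being a Nash equilibrium in the subgame $G_q$, so $I$ must pick $g$ at $q$.

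The main obstacle I expect is Case 2, specifically making sure the premise of \Cref{lemma:lexicographic-spne} still holds at $J$'s later state. This needs two facts. First, $J$'s bundle has not grown since $q$, because $J$ has not acted yet. Second, the available set has only shrunk, and monotonicity goes in the right direction for that. One degenerate case is $J = I$, but the statement implicitly takes $I$ and $J$ to be distinct players.
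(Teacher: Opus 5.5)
Your proof is correct and follows essentially the same route as the paper: if $I$ does not take $g$ at $q$, then $g$ goes either to an intermediate agent or, by \Cref{lemma:lexicographic-spne} applied at $J$'s next turn, to $J$, which contradicts $I$ having to receive $g$. The only differences are that you inline the deviation argument for $I$ instead of citing \Cref{lemma:lexicographic-spne} for $I$ at $q$, and that you explicitly check the lemma's hypothesis at $J$'s later state (unchanged $J$-bundle plus monotonicity); the paper leaves that check implicit.
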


\begin{proof}
    By \Cref{lemma:lexicographic-spne}, agent $I$ must receive item $g$ in any SPNE allocation. If he does not pick the item in state $q$, then the item will either be picked by another agent or be available at the start of agent $J$'s turn.
    So, \Cref{lemma:lexicographic-spne} applies and $g$ must be in $J$'s allocation as well.
    This is a contradiction as only one agent may receive the item.
\end{proof}

Our first contribution is on the structure of equilibria. 
\citet{kalinowski2013strategic} show that there exist instances where the number of equilibria is exponential as a function of the number of items.\footnote{\citet{kalinowski2013strategic} actually undercount the number of SPNE in their construction. 
They claim that each instance with $4m$ items has $2^m$ equilibria, but using exhaustive search, we found that the number of equilibria for $m \leq 5$ is $2, 6, 25, 140, 1656$.}
However, their construction relies on agents being indifferent among exponentially many bundles.
One could reasonably hope that if agents have strict preferences over \emph{bundles} of items, then the equilibrium allocation is unique.
The following lemma shows that this is \emph{not} the case, even if agents have lexicographic preferences.

\begin{lemma}
    There are instances of round-robin with multiple equilibrium allocations, even for three agents and lexicographic preferences.
    \label{lemma:multiplicity-of-equilibria}
\end{lemma}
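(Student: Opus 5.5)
The plan is to give an explicit small instance and verify its SPNEs by backward induction. Lemma~\ref{lemma:agents-are-greedy}, Lemma~\ref{lemma:lexicographic-spne} and Lemma~\ref{lemma:lexicographic-rr} will prune most of the game tree. The key observation fixes the shape of the instance. With strict lexicographic preferences every agent has strict preferences over bundles. So two SPNEs with different allocations can only arise if some agent is indifferent between two actions at a node reached on the equilibrium path. That agent must therefore end up with the \emph{same} bundle under both actions, while the bundles of the other agents differ; in other words, the instance must exhibit bossiness. I would construct the instance around this phenomenon. Take three agents $A,B,C$ picking in order $A>B>C$, and two rounds, so six items. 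Agent $A$ ranks two items $x>y$ at the top and gets $\{x,y\}$ whichever of the two she picks first. Taking the other one first lets her pick up the remaining one in her second (greedy, by Lemma~\ref{lemma:agents-are-greedy}) turn.

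The steps, in order, are as follows. First, choose the rankings of $B$ and $C$ so that neither ever takes $x$ or $y$ before $A$'s second turn. The cleanest way is to make $x,y$ the bottom items for both $B$ and $C$, and to use Lemma~\ref{lemma:lexicographic-spne} to show that each of them is forced to secure their own top item instead. Second, make the choice of $C$ (or $B$) depend on which of $x,y$ is still available. If $B,C$ rank $x,y$ at the bottom, availability alone cannot change their picks directly. So I would instead let $A$'s two possible first moves leave \emph{different} items for $A$'s greedy second pick to threaten. Concretely, $A$'s third-ranked item $z$ is something $A$ would take in her last turn if $x$ or $y$ were gone. $B$ or $C$ then has to decide whether to ``block'' $z$ or take their own favourite, and the right decision depends on which of $x,y$ remains. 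If this two-round design fails, I would allow three rounds (nine items), which gives more room for such conditional threats.

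Third, and in practice this is how I would find the numbers, run an exhaustive search over lexicographic profiles for three agents and six (then nine) items. The paper already uses exhaustive search in the footnote above, so this is in keeping with its methods. From the output I would pick an instance in which $A$'s root choice between $x$ and $y$ yields the same bundle for $A$ but different bundles for $B$ and $C$. Finally, verify it by hand. In each of the two subgames after $A$'s first move, compute the unique continuation using Lemma~\ref{lemma:agents-are-greedy} for the last round and Lemma~\ref{lemma:lexicographic-spne} for the middle picks. Then check that every other first move of $A$ gives her a strictly worse bundle. Both root choices are then SPNE behaviour, and they produce distinct allocations.

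The main obstacle is the second step: arranging that $B$'s and $C$'s equilibrium behaviour genuinely differs between the two subgames while $A$'s final bundle stays identical. In particular, neither $B$ nor $C$ may find it profitable to grab $x$ or $y$, since that would break $A$'s indifference. I expect the search to settle this quickly. The hand verification is then routine case analysis on a tree of depth six.
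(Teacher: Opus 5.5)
Your proposal has a genuine gap. It never exhibits an instance, and the one structure it commits to cannot produce two allocations.

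Suppose, as in your first step, $x>y$ are $A$'s top two items, both $B$ and $C$ rank $x,y$ at the bottom, and $R$ is the set of the other four items. After $A$ takes $x$, the subgame starts at $(B;R+y;\{x\},\varnothing,\varnothing)$. After she takes $y$, it starts at $(B;R+x;\{y\},\varnothing,\varnothing)$. The relabelling $x\leftrightarrow y$ maps the first subgame onto the second and preserves every agent's ordinal preferences over reachable bundles. This is because the surviving item of $\{x,y\}$ is ranked first among the available items by $A$ and last by $B$ and $C$ in both subgames. Hence SPNEs correspond under the swap, and any allocation in which $A$ holds $\{x,y\}$ is fixed by it: $B$ and $C$ receive exactly the same subsets of $R$ after either first move. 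So $A$'s root choice between $x$ and $y$ can never yield different allocations. More generally, this fails whenever $x$ and $y$ are adjacent in all three rankings. The refinement with $z$ does not rescue it, since the symmetry argument uses nothing about the rest of the profile. An exhaustive search is a fine way to \emph{find} an example, but the proof then consists of the example and its verification, which you do not supply.

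The missing idea is that $A$ should be indifferent between first picks that the other agents do \emph{not} see as interchangeable, so that her first pick changes what she threatens at her second turn. In \Cref{tab:bossy}, $A$ ends with $\{u,w\}$, her first and third items, and chooses whether to take $u$ or $w$ first.
\begin{itemize}
\item After $u$, her next target would be $v$, which is $B$'s favourite. So \Cref{lemma:lexicographic-rr} forces $B$ to take $v$ immediately, and then forces $C$ to take $x$. Greedy last picks give $\{u,w\},\{v,y\},\{x,z\}$.
\item After $w$, her next target is $u$, so $B$ can safely postpone $v$ and take $x$ first. This leaves $C$ with $\{y,z\}$ and gives $\{u,w\},\{v,x\},\{y,z\}$.
\end{itemize}
Note that $B$ receives $v$ in every SPNE of both branches; what differs is \emph{when} he picks it. This is also why \Cref{lemma:lexicographic-spne}, which only says who eventually receives an item, cannot determine the middle picks as your final step proposes. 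You need \Cref{lemma:lexicographic-rr}, or explicit backward induction, to pin down the timing. The paper then uses exhaustive search to confirm that both lines are SPNE play and that $A$ has no better first move.
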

\begin{proof}

\begin{restatable}{table}{bossyexample}
    \centering
    \begin{tabular}{lllllll}
    \toprule
        & $u$ & $v$ & $w$ & $x$ & $y$ & $z$ \\ \midrule
    $A$ & 32  & 16   & 8   & 4   & 2 & 1  \\ \midrule
    $B$ & 16   & 32  & 4   & 8   & 2 & 1  \\ \midrule
    $C$ & 8   & 4   & 2   & 32  & 16 & 1  \\ \bottomrule
    \end{tabular}
    \caption{In this instance there are two SPNE allocations: $\{u,w\}, \{v, y\}, \{x, z\}$ and $\{u,w\}, \{v, x\}, \{y, z\}$.}
    \label{tab:bossy}
\end{restatable}

    Consider the instance given by \Cref{tab:bossy} with additive agents $A, B, C$.
    As for each agent $I$, the utilities for individual items are distinct powers of two, we have $u_I(M) \neq u_I(M')$ for all $M, M' \subseteq \{u,v,w,x,y,z\}$ with $M \neq M'$.
    
    Let us study the possible equilibria.
    Suppose agent $A$ picks item $u$ in her first turn.
    Then, we can apply \Cref{lemma:lexicographic-rr} to agents $B, A$ and item $v$ to deduce that agent $B$ must pick item $v$ next.
    We can again apply \Cref{lemma:lexicographic-rr} to agents $C, B$ and item $x$ to deduce that agent $C$ must pick item $x$ next.
    As it is now the agents' last turn, agents all pick greedily by \Cref{lemma:agents-are-greedy}, resulting in the allocation  $\{u,w\}, \{v, y\}, \{x, z\}$ and utilities of $(40, 34, 33)$.

    Using computer aided exhaustive search we can verify that the above is indeed an SPNE and find a second SPNE allocation. The second one is given by:  $A$ picks item $w$, $B$ picks $x$, $C$ picks $y$, $A$ picks $u$, $B$ picks $v$ and $C$ picks $z$.
    This results in an SPNE allocation $\{u,w\}, \{v, x\}, \{y, z\}$ with utility profiles $(40, 40, 17)$.
\end{proof}

By concatenating copies of this instance, we can construct instances with exponentially many equilibria. 
\begin{restatable}{theorem}{exponentialAllocations}\label{theorem:exponential-equilibria}
    There exist round-robin instances with $\Omega(2^k)$ equilibrium allocations, where there are 3 additive agents and $6k$ items and agents have lexicographic preferences.
\end{restatable}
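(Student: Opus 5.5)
We take $k$ disjoint copies $M^{(1)},\dots,M^{(k)}$ of the six items of \Cref{tab:bossy} and use lexicographic valuations in which every item of copy $j$ is worth more, for every agent, than all items of copies $j+1,\dots,k$ combined. Concretely, agent $I$ gives item $t$ of copy $j$ the value $64^{k-j}\cdot u_I(t)$, where $u_I(t)$ is the value from \Cref{tab:bossy}. All values are distinct powers of two, so preferences remain lexicographic and strict over bundles. Because each copy has six items and there are three agents, each copy lasts exactly two full rounds $ABC\,ABC$, and the next copy begins again with $A$.

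The plan has three steps.

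\textbf{Step 1: copies are consumed in order.} We show that in every SPNE all items of copy $j$ are picked before any item of copy $j+1$. We argue by induction along the play, from the start of the game. Let $S$ be the set of remaining items of copy $j$, and suppose the agent to move takes an item $g\notin M^{(j)}$ while $S\neq\varnothing$. There is at least one item $h\in S$ that this agent values above everything in later copies. The difficulty is to show that deviating to $h$ is strictly better, since later play also changes after the deviation. We handle this with \Cref{lemma:lexicographic-spne}-type dominance. An agent's utility is determined lexicographically: first by which items of copy $j$ she ends with, and only then by later copies. We then show that taking a later-copy item while copy-$j$ items remain can never improve her copy-$j$ bundle. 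This is the main obstacle. The cleanest route is to compare the two subgames directly. In both, the other agents facing copy-$j$ items will take them before later items, by the inductive structure. So grabbing $g$ just gives up a copy-$j$ item to someone else and leaves her copy-$j$ share no better. If this direct argument is too delicate, we would instead impose the ordering by restricting attention to the induced subgame and checking via the lexicographic dominance.

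\textbf{Step 2: the game factorises.} Once Step 1 holds, each agent's lexicographic preference makes her outcome on copy $j$ dominate everything afterwards. Play within copy $j$ is therefore an SPNE of the six-item game of \Cref{tab:bossy}. The reason is that any deviation that helps within copy $j$ is a strict improvement, whatever happens later. Conversely, a strategy profile that plays some SPNE of the base instance on each copy, conditioning only on the current copy's history, is an SPNE of the whole game. Subgame perfection holds because, after any history, the continuation on later copies is independent of earlier copies: the later copies start from the same state up to bundles that are irrelevant for marginal values, since utilities are additive.

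\textbf{Step 3: counting.} By \Cref{lemma:multiplicity-of-equilibria}, each copy has two distinct SPNE allocations. In the combined game we may choose either of them independently on each copy. This yields $2^k$ distinct equilibrium allocations, which gives the $\Omega(2^k)$ bound for $3$ agents and $6k$ items.

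The delicate part is the rigorous justification of Step 1 in every subgame, including off-path subgames where copies are partially mixed. For those subgames we only need existence of \emph{some} SPNE extending our choices, so it suffices to prove the ordering and factorisation on-path. Off-path we can use any SPNE, which exists by finiteness, and this still gives $2^k$ distinct equilibrium outcomes.
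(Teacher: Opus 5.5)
Your Step~1 is false, and the paper explicitly warns about this trap. Suppose that in the current copy $M^{(j)}$ play goes $A\colon w$, $B\colon x$, $C\colon z$. Each of the leftover items $u, v, y$ of that copy is worth more to $A$, $B$, $C$ respectively than all other remaining items combined. So by \Cref{lemma:lexicographic-spne} each agent ends up with its own one. Moreover, neither $B$ nor $C$ can take $u$ before $A$'s next turn without forfeiting its own reserved item ($v$, resp.\ $y$) to someone else. Hence $A$ loses nothing in copy $j$ by taking an item of copy $j+1$ right now, and the paper notes that this interleaving does occur in an SPNE. Your justification that ``grabbing $g$ just gives up a copy-$j$ item to someone else'' breaks down exactly here: the agent to move has no copy-$j$ item at risk.

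This also undermines Step~2. Once interleaving is possible, the leftover items $u, v, y$ act as one-time ``passes'' for $A$, $B$, $C$ in copy $j+1$. Play there is therefore not the six-item game of \Cref{tab:bossy}, and a profile that ``conditions only on the current copy's history'' is not even well defined at mixed histories. Your fallback (arbitrary SPNEs off-path, existence only) does not remove the burden. To certify that your on-path play is an SPNE, you must still show that certain deviations are not profitable, for example $A$ starting copy $j+1$ after $w, x, z$, or $C$ choosing $z$ rather than $y$. That requires analysing exactly this pass-augmented game.

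The paper handles it by a direct case analysis of the top copy:
\begin{itemize}
\item $A$'s first pick must be $u$ or $w$; picking outside the copy is ruled out by a guarantee argument.
\item After $u$, everything is forced, with lower copies untouched.
\item After $w$, $B$ must take $x$, and $C$ takes $y$ or $z$.
\item After $y$, the rest is forced.
\item After $z$, the copy's allocation is fixed at $\{u,w\},\{v,x\},\{y,z\}$. The leftover items are treated as strategically inert, and $A$ may open the next copy with its $u$ or $w$.
\end{itemize}
Some argument of this kind about the leftover items is the missing ingredient in your proof.
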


\begin{proof}[Proof sketch]
    We construct an instance with $k$ copies of each good in the instance presented in \Cref{tab:bossy}. We write $G^i = \{u^i, v^i, w^i, x^i, y^i, z^i\}$ for the items in each copy.
    The utility of an agent $I$ for an item $g^i$ is $u_I(g^i) = 2^{6(i-1)}u_I(g)$, where $u_I(g)$ is the utility of agent $I$ for good $g$ in \Cref{tab:bossy}.

    We then show that the SPNE allocations that are possible are exactly the product of SPNE allocations in each instance $G^i$.
    As each instance $G^i$ has two SPNE allocations, this results in $2^k$ SPNE allocations.

    One might be tempted to show that at any SPNE agents must pick items in $G^k$ before picking items in $G^{k-1}$.
    Surprisingly, this is not completely true.
    It is possible at an SPNE for agents to pick items $w^k \rightarrow x^k \rightarrow z^k$ and then for agent $A$ to start picking items in $G^{k-1}$. The reason is that the favourite available item from each agent is different. Hence, each agent can delay picking their favourite item.

    A further complication is that it is not immediate that there are no games in which agents cannot benefit by strategically skipping a turn. In the copied game, this could be simulated by intentionally picking an item not in $G^k$ to essentially skip a turn and result in a new equilibrium allocation.

    A careful case analysis can be found in \Cref{sec:exponential-spne-proof}.
\end{proof}

The equilibria of the instance in \Cref{tab:bossy} feature a further interesting pathology.
Although agent $A$ is indifferent between the two equilibrium allocations, agent $A$'s first pick uniquely determines the equilibrium allocation.
This is a violation of \emph{non-bossiness} \citep{satterthwaite1981bossy}.
Notice, that \emph{non-bossiness} is violated even if we only allow agents to play equilibrium strategies.

\begin{definition}
    Let $\Gamma$ be an extensive-form game with SPNE $\boldsymbol{\sigma}$. Suppose there is a strategy $\sigma_i'$ for player $i$ such that  $\boldsymbol{\sigma}' = (\boldsymbol{\sigma}_{-i}, \sigma_i')$ is an SPNE and $u_i(\boldsymbol{\sigma}) = u_i(\boldsymbol{\sigma}')$ and there is player $j$ with $u_j(\boldsymbol{\sigma}) \neq u_j(\boldsymbol{\sigma}')$.
    Then $\Gamma$ is \emph{bossy at equilibrium}.
\end{definition}

\begin{theorem}\label{lemma:bossy}
    Online round-robin can be bossy at equilibrium. This is true even for instances with three agents with lexicographic preferences.
\end{theorem}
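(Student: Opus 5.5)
We will use the instance of \Cref{tab:bossy} itself as the witness, and build two SPNEs $\boldsymbol{\sigma}$ and $\boldsymbol{\sigma}'$ that agree on the strategies of $B$ and $C$ everywhere and differ only in $A$'s action at the root. The natural choice is $\sigma_A(q_0)=u$ and $\sigma'_A(q_0)=w$. The proof of \Cref{lemma:multiplicity-of-equilibria} shows that play after $u$ must lead to $\{u,w\},\{v,y\},\{x,z\}$ with utilities $(40,34,33)$. The second equilibrium path after $w$ leads to $\{u,w\},\{v,x\},\{y,z\}$ with utilities $(40,40,17)$. So $A$ receives $40$ under both, while $B$ (and $C$) receive different utilities, which is exactly the definition of bossy at equilibrium.

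The construction proceeds as follows.

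\textbf{Common profile for $B$ and $C$.} Fix one SPNE continuation strategy for every proper subgame $G_q$ with $q \neq q_0$. This exists by backward induction, since each subgame is a finite game of perfect information. We choose it so that in $G_{q_u}$ (after $A$ picks $u$) the induced play is the one forced by \Cref{lemma:lexicographic-rr} and \Cref{lemma:agents-are-greedy}. In $G_{q_w}$ (after $A$ picks $w$) we choose the continuation realising $B{:}x$, $C{:}y$, $A{:}u$, $B{:}v$, $C{:}z$.

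\textbf{Checking the subgame after $w$.} We verify by hand that this continuation is an SPNE of $G_{q_w}$. The last round is greedy by \Cref{lemma:agents-are-greedy}. For the earlier moves:
\begin{itemize}
\item[(i)] At $B$'s turn with $\{u,v,x,y,z\}$ available, taking $x$ yields $40$. This is $B$'s maximum possible, since $u$ is contested by $A$'s lexicographic dominance.
\item[(ii)] At $C$'s turn, \Cref{lemma:lexicographic-spne} and a short comparison show $y$ is optimal given the remaining responses.
\end{itemize}
The remaining subgames are handled by the backward-induction choice.

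\textbf{Checking the root.} Both profiles are then SPNEs: every proper subgame is played according to an SPNE, and at the root $A$'s choice of $u$ or $w$ each yields $40$. This is $A$'s maximum attainable value, as argued next, so neither root choice admits a profitable deviation.

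\textbf{Main obstacle.} The hard step is the root optimality check: showing that no other first pick ($v,x,y,z$) gives $A$ more than $40$ under the fixed continuations. By lexicographic preferences, $A$ can exceed $40$ only by obtaining $u$ together with $v$, i.e.\ $\{u,v\}$ worth $48$. I would argue this is impossible. After any first pick other than $v$, \Cref{lemma:lexicographic-rr} applied to $B$, $A$ and item $v$ forces $B$ to take $v$ immediately. If $A$ picks $v$ first, then $B$ prefers $u$, and the analogous argument makes $B$ take $u$. Any residual cases I would settle by the exhaustive search already used for \Cref{lemma:multiplicity-of-equilibria}.
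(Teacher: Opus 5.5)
Your proposal is correct and follows the paper's proof. It uses the same instance from \Cref{tab:bossy}: $A$'s root choice between $u$ and $w$ selects between the two SPNE allocations, and $A$ gets $40$ either way. Your explicit gluing of the disjoint subgames after $u$ and after $w$ into a single strategy for $B$ and $C$ is a worthwhile addition. The definition of bossiness requires $\boldsymbol{\sigma}_{-A}$ to be identical in both profiles, and the paper's two-line proof leaves this implicit.

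One step in your root check is wrong as stated, although the conclusion survives. You claim that after any first pick other than $v$, \Cref{lemma:lexicographic-rr} applied to $B$, $A$ and $v$ forces $B$ to take $v$. The lemma's hypothesis for $A$ holds only after the first pick $u$, where $u_A(\{u,v\}) = 48 > 47 = u_A(\{u,w,x,y,z\})$. After $w$, for instance, $u_A(\{w,v\}) = 24 < 47 = u_A(\{w,u,x,y,z\})$. Indeed your own continuation after $w$ has $B$ taking $x$, not $v$, so the claim contradicts your construction.

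The simpler correct argument is as follows. $A$ ends with exactly two items, one of which is its first pick, so $\{u,v\}$ is reachable only from a first pick of $u$ or $v$. After $u$, the lemma forces $B$ to take $v$, as computed above. After $v$, it forces $B$ to take $u$, since $16 > 15$ for $B$ and $48 > 31$ for $A$, so $A$ gets at most $24$. First picks $x,y,z$ cap $A$ at $36$, $34$ and $33$ respectively, and a first pick of $w$ gives exactly $40$. So no exhaustive search is needed for any residual cases.
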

\begin{proof}
    Consider the two equilibrium allocations in the instance given by \Cref{tab:bossy}.
    As shown in the analysis of \Cref{lemma:multiplicity-of-equilibria}, agent $A$ is indifferent among the equilibrium allocations, while other players are not.
    By picking item $u$ or $w$ in her first turn, agent $A$ determines the equilibrium allocation.
\end{proof}

This violation is structurally surprising and normatively meaningful.
In our example, player $C$ would be incentivised to offer a large side-payment to player $A$ to select his preferred equilibrium.

\section{Equilibrium Fairness}
Equilibrium allocations hold surprising fairness guarantees in \emph{offline} round-robin as shown by \citet{amanatidis2024allocating}.
This motivates the study of equilibrium fairness at SPNEs of \emph{online} round-robin.
As is standard, we consider a relaxation of envy-freeness---envy-freeness up to one good (EF1)---which always exists and for additive agents is attainable by \emph{truthful} round-robin.

\begin{definition}
    An allocation $S_1, \ldots, S_n$ is \emph{envy-free up to one good} (EF1) if for all agents $i,j \in N$, we either have $S_j = \varnothing$ or there exists an item $g \in S_j$ such that $u_i(S_i) \geq u_i(S_j - g)$.
\end{definition}

\begin{restatable}{lemma}{lemmaStrategicAllocation}
    For additive utilities, in any SPNE of round-robin the first player receives utility of at least $\frac{u_1(M)}{n}$.
    \label{lemma:at-least-1-over-n}
\end{restatable}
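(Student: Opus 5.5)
The plan is to exhibit a simple strategy for player $1$ that guarantees her $\frac{u_1(M)}{n}$ no matter how the others play. Then use the fact that in any SPNE (indeed in any Nash equilibrium) a player's payoff is at least what any deviation guarantees against the others' fixed strategies. The natural candidate is the \emph{truthful greedy} strategy $\tau_1$: at each of her turns, player $1$ picks the available item of highest value $u_1(\cdot)$, with ties broken arbitrarily. Since $\boldsymbol{\sigma}$ is an SPNE, it is in particular a Nash equilibrium of the whole game. Hence $u_1(\boldsymbol{\sigma}) \geq u_1(\boldsymbol{\sigma}_{-1},\tau_1)$, and it suffices to show $u_1(\boldsymbol{\sigma}_{-1},\tau_1)\geq u_1(M)/n$ for every strategy profile of the other players.

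For the guarantee, I would run the standard round-robin counting argument against adversarial opponents. Let $m = m' n + r$ with $0\le r<n$, and group the picks into rounds, where round $t$ consists of player $1$'s $t$-th pick followed by the next $n-1$ picks made by the others (the last round may be truncated). Let $g_t$ be player $1$'s pick in round $t$. Every item picked by another player in round $t$ was still available when player $1$ chose $g_t$, so by greediness each such item $h$ satisfies $u_1(h)\le u_1(g_t)$. Summing over rounds, the items of round $t$ have total $u_1$-value at most $n\,u_1(g_t)$. Every item lies in some round, because player $1$ moves first and each round starts with her pick. By additivity, $u_1(M)\le n\sum_t u_1(g_t) = n\,u_1(S_1)$. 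This gives the claim.

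The only subtle point is the bookkeeping of the round decomposition. Player $1$ moves first, so every item picked by others is preceded in its round by a pick of hers, and a truncated final round only helps. Greediness must also be compared against the availability at the moment of her pick rather than at the start of play. I expect this to be routine. The step worth stating carefully is the passage from SPNE to the deviation bound: it uses only that the SPNE is a Nash equilibrium at the root $q_0$, and it holds even though $(\boldsymbol{\sigma}_{-1},\tau_1)$ need not itself be an equilibrium.
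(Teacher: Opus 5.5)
Your proof is correct and takes essentially the paper's approach: player 1 deviates to the truthful greedy strategy, and every item an opponent takes was still available at her corresponding pick. The paper phrases this as $u_1(S_1)\ge u_1(S_i)$ for each agent $i$ and then sums over agents, whereas you sum round by round; your explicit note that only the Nash condition at the root is needed, even though $(\boldsymbol{\sigma}_{-1},\tau_1)$ need not be an equilibrium, makes precise a step the paper states informally.
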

\begin{proof}
    Consider a round robin instance with additive utilities, where $N$ is the set of players and $M$ is the set of items.
    The first player can receive a payoff that is at least $\frac{u_1(M)}{n}$ by playing truthfully.
    In each round, the first player picks its favourite item from the available items.
    Hence, any item that other agents pick in that round is less preferred than the item agent $1$ picked.
    So at the end of the division, for all agents $i \in N$, $u_1(S_1) \geq u_1(S_i)$.
    As $u_1$ is additive and $S_i$ disjoint, we have that $u_1(S_1) \geq \frac{u_1(M)}{n}$.
    By the definition of an SPNE, agent $1$ will act strategically to improve its utility, so the payoff of the truthful strategy acts as a lower bound for the utility of agent $1$ at an SPNE.
    Hence, at any SPNE, agent $1$ receives payoff that is at least $\frac{u_1(M)}{n}$.
\end{proof}

\begin{restatable}{theorem}{theoremTwoPlayerEF}
    For two players with additive utilities, any allocation that is the result of an SPNE is EF1.
\end{restatable}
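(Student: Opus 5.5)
For two players the statement reduces to the lower bound of \Cref{lemma:at-least-1-over-n}, applied once to the whole game and once to the subgame that follows the first pick. Write $A$ for the first player and $B$ for the second, and let $(S_A,S_B)$ be the allocation produced by an SPNE $\boldsymbol{\sigma}$. If $M=\varnothing$ there is nothing to show, so assume $m\ge 1$.

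\emph{Player $A$.} By \Cref{lemma:at-least-1-over-n} with $n=2$, $u_A(S_A)\ge \frac{u_A(M)}{2}$. Since $u_A$ is additive and $S_A,S_B$ partition $M$, this gives $u_A(S_A)\ge u_A(S_B)$. So $A$ does not envy $B$ at all, and in particular the EF1 condition holds for $A$ towards $B$, whichever item is removed from $S_B$.

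\emph{Player $B$.} Let $g$ be the item $A$ picks first on the equilibrium path, and let $q=(B;M-g;\{g\},\varnothing)$ be the resulting state. By definition of SPNE, $\boldsymbol{\sigma}$ restricted to $G_q$ is an SPNE of $G_q$. I will argue that $G_q$ is, up to payoffs, a round-robin game on the item set $M-g$ in which $B$ picks first.
\begin{itemize}
\item The game tree of $G_q$ coincides with that of round-robin on $M-g$ with order $B>A$.
\item $B$'s payoff at every leaf is exactly $u_B$ of his bundle.
\item $A$'s payoff at every leaf is $u_A(g)$ plus $u_A$ of her bundle from $M-g$, by additivity.
\end{itemize}
Adding a constant to a player's payoff at every leaf does not change best responses in any subgame. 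Hence $\boldsymbol{\sigma}$ induces an SPNE of the fresh instance on $M-g$ with $B$ as first player.

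Applying \Cref{lemma:at-least-1-over-n} to that instance gives $u_B(S_B)\ge \frac{u_B(M-g)}{2}$. The bundles $S_B$ and $S_A-g$ partition $M-g$, so $u_B(S_B)\ge u_B(S_A-g)$. Since $g\in S_A$, this is exactly the EF1 condition for $B$ towards $A$.

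The only step needing care is the reduction of the subgame $G_q$ to a standalone round-robin instance. It depends on two facts: $A$'s already-held item contributes only an additive constant to her payoff, and SPNE is invariant under adding constants to payoffs. Both are straightforward here because utilities are additive. This is also why the same argument does not obviously extend beyond additive utilities or beyond two players.
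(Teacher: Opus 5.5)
Your proof is correct and follows the paper's argument: apply \Cref{lemma:at-least-1-over-n} once to the whole game to show the first player is envy-free, and once to the subgame after the first pick $g$, where the second player moves first on $M-g$, to obtain $u_B(S_B)\ge u_B(S_A-g)$. Your extra justification that this subgame is a genuine additive round-robin instance (the held item only shifts $A$'s payoff by a constant) is sound, and it spells out a step the paper leaves implicit.
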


\begin{proof}
    For two players, by \Cref{lemma:at-least-1-over-n} we have that at an SPNE $u_1(S_1) \geq \frac{u_1(M)}{2}$. As $u_1$ is additive, this in particular means that $u_1(S_1) \geq u_1(S_2)$, so agent $1$ is envy-free.
    Suppose that at an SPNE agent 1 picks item $a_1$. Then, we can apply \Cref{lemma:at-least-1-over-n} to the instance with items $M \setminus \{a_1\}$ where agent $2$ plays first.
    Hence, $u_2(S_2) \geq \frac{u_2(M \setminus \{a_1\})}{2}$ which implies that $u_2(S_2) \geq u_2(S_1 \setminus \{a_1\})$ and so agent 2 is envy free towards agent $1$ up to one good.
\end{proof}

Unfortunately, this result does not extend to instances with three agents.

\begin{theorem}
    There is a three-agent additive instance whose unique SPNE allocation is not EF1.
\end{theorem}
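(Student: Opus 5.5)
The plan is to prove the statement by an explicit counterexample, in the spirit of \Cref{tab:bossy}: three additive agents $A,B,C$ (picking in order $A>B>C$) with lexicographic valuations, i.e.\ distinct powers of two, and a small number of items (I would aim for $6$ to $9$, so two or three rounds). The construction has to get around one constraint. If an agent $J\in\{B,C\}$ picked truthfully against the \emph{equilibrium} continuation, then in every round its pick would be at least as valuable to it as $A$'s pick in the next round. This would already give EF1 towards $A$. So the violation cannot come from a naive ``$A$ grabs everything'' instance. It must come from the envying agent, say $C$, facing a continuation in which any deviation towards its favourite items triggers a different response by $B$ (or $A$) that leaves $C$ no better off. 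That is, $C$'s threat to take contested items is not credible, as in the non-credible threat discussion after \Cref{tab:example}. Concretely, I would design the instance so that $A$ ends up holding two items that $C$ values more than its whole bundle: $A$ delays taking an item she values but $B$ does not, and uses her early pick on an item $C$ covets. $C$ then prefers to secure an item that $B$ would otherwise take, because $B$'s preferences make $C$'s alternative line worse.

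To find the instance I would search computationally over small lexicographic instances. For each instance I would compute all SPNE outcomes by backward induction, keeping every tie-branch, and stop at the first one whose equilibrium allocation set is a single allocation that fails EF1. I would then turn the computer certificate into a readable argument using the structural lemmas.

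\Cref{lemma:lexicographic-spne} and \Cref{lemma:lexicographic-rr} force specific picks whenever some item outweighs all remaining items together for the agent(s) concerned; this is where lexicographic utilities help. \Cref{lemma:agents-are-greedy} then pins down every last-round pick. This would leave only a handful of first-round choices to compare by hand. For each first-move choice of $A$ (and each reply of $B$ and $C$ not excluded by the lemmas), I would compute the forced continuation and show that the claimed path is the strict optimum at each decision node.

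The main obstacle is \emph{uniqueness} of the SPNE allocation. Distinct powers of two make each agent's preferences over \emph{its own bundles} strict, but different leaves can still give an agent the same bundle while giving others different ones. Such ties are exactly what produced the multiplicity in \Cref{lemma:multiplicity-of-equilibria}. I would handle this by checking that, along every subgame reachable by a deviation, the acting agent's own bundle strictly differs across its available actions. If ties do appear, I would verify that all tie-broken continuations yield the same allocation. Otherwise I would perturb the instance until the search reports a single SPNE allocation. Finally, I would exhibit the EF1 failure directly: the envious agent's value for its own bundle is below its value for the other bundle minus its most valuable item there.
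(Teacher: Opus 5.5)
There is a genuine gap. The proposal never exhibits an instance, and the class you commit to searching provably contains none. With distinct powers of two, every SPNE allocation is EF1, for any number of agents.

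To see this, let agent $i$ make its first pick when the available set is $M'$, and let $g$ be its favourite item in $M'$. Then $u_i(g) > u_i(M' - g)$, so by \Cref{lemma:lexicographic-spne} (with $S_I' = \varnothing$) agent $i$ receives $g$ in every SPNE. Any item that $i$ values above $g$ was taken before this turn, and every other agent has picked at most one item by then. So each $S_j$ contains at most one such item $h$. Removing it (or any item, if there is none) leaves only items that $i$ values below $g$. Hence $u_i(S_j - h) < u_i(g) \le u_i(S_i)$.

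In particular, your design sketch cannot occur, namely $A$ ending up with two items that $C$ values above its whole bundle. Both items would have to be taken before $C$'s first pick, but $A$ takes only one item by then. Your computational search would therefore return nothing, and perturbing within lexicographic instances cannot help.

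The missing idea is that the envious agent's value must be spread over several items of comparable size, so that no single item outweighs the rest. This is also exactly the regime where \Cref{lemma:lexicographic-spne} and \Cref{lemma:lexicographic-rr} no longer pin down play, so hand-verification along your lines is not available either.

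The paper's instance in \Cref{tab:ef1-counterexample} is of this kind. $C$ values $a_1, a_2, a_8$ at $13, 14, 14$ and its own equilibrium bundle $\{a_4, a_5, a_7\}$ at $26$. Yet $A$'s bundle $\{a_1, a_2, a_8\}$ minus any single item is worth at least $27$ to $C$. Uniqueness of the SPNE allocation is certified by exhaustive search. Strict preferences over bundles are obtained only afterwards, by adding $2^{-k}$ to every agent's value for $a_k$, a perturbation that stays far from lexicographic.

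Your certification procedure is the right one: backward induction keeping all tie branches, then a direct EF1 check. To repair the proof, run it over general small additive instances and actually write down the instance it finds.
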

\begin{proof}
    \begin{table}
        \centering
        \begin{tabular}{llllllllll}
            \toprule
             &$a_1$ & $a_2$ & $a_3$ & $a_4$ & $a_5$ & $a_6$ & $a_7$ & $a_8$ &$ a_9$ \\ \midrule
             $A$ & 19 & 20 & 2 & 3 & 0 & 4 & 1 & 1 & 0 \\ \midrule 
             $B$ &  2  & 0 & 5 & 0 & 3 & 6 & 4 & 0 & 1  \\  \midrule
             $C$ & 13  & 14& 9 & 10& 5 & 0 & 11 & 14 & 0 \\ \bottomrule
        \end{tabular}
        \caption{The unique SPNE allocation of this instance violates EF1.}
        \label{tab:ef1-counterexample}
    \end{table}

    We analyse the instance given in \Cref{tab:ef1-counterexample} using computer-aided exhaustive search, the code for which appears in the ancillary files. The instance features a unique SPNE allocation: $\{a_8, a_1, a_2\}, \{a_9, a_3, a_6\}, \{a_4, a_5, a_7\}$ where the utility profile for agents is $(40, 12, 26)$.
    Notice that agent $C$ EF1-envies $A$: $u_C(\{a_1, a_2\}) = 27$, $u_C(\{a_1, a_8\}) = 27$ and $u_C(\{a_2, a_8\})= 28$, this is strictly greater than the value of agent $C$ for his own bundle.
    We also note that if we perturb the utilities by adding $2^{-k}$ to each agent's utility of item $a_k$, the unique equilibrium is unchanged. The SPNE allocation still violates EF1. 
    So, even if agents have strict preferences over bundles, the counterexample works.
\end{proof}

This result separates the fairness properties of equilibria of online and offline round-robin.
A counterexample of similar effect was also obtained in independent work by \citet{amanatidispersonalcommunication}.
On the positive side, we show that additive valuations satisfy a weaker notion of fairness: PROP1.

\begin{definition}[\citet{conitzer2017fair}]
    An allocation $S_1, \ldots, S_n$ is \emph{proportional up to one good} (PROP1) if for all agents $i \in N$, $u_i(S_i) \geq \frac{u_i(M)}{n}$ or there exists an item $g \in M \setminus S_i$ such that $u_i(S_i + g) \geq \frac{u_i(M)}{n}$.
\end{definition}

\begin{restatable}{theorem}{theoremProp}
    For additive instances, all SPNE allocations are PROP1.
\end{restatable}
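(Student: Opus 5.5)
The plan is to reduce to \Cref{lemma:at-least-1-over-n} by looking at the subgame that starts at agent $i$'s first turn. Fix an SPNE $\boldsymbol{\sigma}$ with allocation $S_1,\ldots,S_n$, and fix an agent $i$. Let $R$ be the set of the (at most $i-1 \le n-1$) items picked by agents $1,\ldots,i-1$ before $i$'s first turn, and let $q$ be the state at that turn. Let $M' = M\setminus R$ be the items still available. If $i=1$ then $R=\varnothing$, and \Cref{lemma:at-least-1-over-n} gives $u_1(S_1)\ge u_1(M)/n$ directly.

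For $i>1$, I would observe that $G_q$ is itself a round-robin game on $M'$ with $n$ agents, in the order $i, i+1,\ldots,n,1,\ldots,i-1$, so agent $i$ moves first. Since $\boldsymbol{\sigma}$ is an SPNE, its restriction to $G_q$ is an SPNE of $G_q$. The other agents already hold partial bundles, so their utilities in $G_q$ are additive plus a constant. This is harmless: the proof of \Cref{lemma:at-least-1-over-n} only uses that the first mover plays truthfully against arbitrary behaviour of the others, and agent $i$ holds nothing at $q$. I will state this explicitly, and also note that the argument is unaffected if $|M'|$ is not a multiple of $n$. Hence $u_i(S_i)\ge u_i(M')/n$.

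Next, handle the items in $R$. If $R$ contains only items of value $0$ to $i$, then $u_i(M)=u_i(M')$ and $i$ is already proportional. Otherwise let $g^*\in R$ maximise $u_i$ over $R$. Then $g^*\notin S_i$, and since $|R|\le n-1$ we have $u_i(g^*)\ge u_i(R)/(n-1)\ge u_i(R)/n$. By additivity,
\[
u_i(S_i+g^*) = u_i(S_i)+u_i(g^*) \ge \frac{u_i(M')}{n}+\frac{u_i(R)}{n} = \frac{u_i(M)}{n},
\]
which is exactly the PROP1 condition for agent $i$.

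The only step needing care is justifying that \Cref{lemma:at-least-1-over-n} applies to the subgame with a rotated order and pre-assigned bundles for the other agents. I expect this to go through by simply re-running its truthful-play argument: in each round of $G_q$, agent $i$ picks before everyone else and takes her favourite remaining item. So each other agent's set of items picked within $G_q$ is worth at most $u_i(S_i)$ to her, and summing over the $n$ disjoint sets that partition $M'$ yields the $1/n$ bound.
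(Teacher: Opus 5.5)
Your proposal is correct and matches the paper's proof. Both arguments pass to the subgame at agent $i$'s first turn and invoke \Cref{lemma:at-least-1-over-n} there to get $u_i(S_i)\ge u_i(M\setminus R)/n$, then add back the most valuable of the at most $n-1$ previously picked items. Your explicit check that the lemma's truthful-play argument survives the rotated order and the other agents' pre-assigned bundles is a step the paper leaves implicit.
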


\begin{proof}
    By \Cref{lemma:at-least-1-over-n}, player 1's utility is at least $\frac{u_1(M)}{n}$.
    Now, let us argue for player $i$.
    Suppose that at an SPNE, items $a_1, \ldots, a_{i-1}$ were picked before player $i$ plays for the first time.
    Then, this is equivalent to a new instance of round-robin where agent $i$ plays first and the set of items is $M \setminus \{a_1, \ldots, a_{i-1}\}$.
    So, by \Cref{lemma:at-least-1-over-n}, agent $i$ can guarantee a payoff of at least $u_i(M \setminus \{a_1, \ldots, a_{i-1}\})/n$.
    Suppose agent $i$ values $a_j$ the maximum amongst items $a_1, \ldots, a_{i-1}$.
    Then, $\frac{u_i(M \setminus\{a_1, \ldots, a_{i-1}\})}{n} + u_i(a_j) \geq \frac{u_i(M \setminus\{a_1, \ldots, a_{i-1}\}) + nu_i(a_j)}{n} \geq \frac{u_i(M)}{n}$.
    Hence, any allocation that arises from an SPNE satisfies PROP1.
\end{proof}

\section{Hardness for Submodular Utility Functions}

Now that we have shown structural properties of SPNEs, we focus on the computational complexity of computing them.
We define---and later show hardness for---the following promise problem.

\begin{definition}
    We define the promise decision problem \PromiseSPNE{k}, for $k$ a natural number greater than $1$.
    The input to the problem is the number of items $m$ and $k$ Boolean circuits $C_1, \ldots, C_k$ that represent the utilities of each agent.
    As part of the input there is also an agent $i$ and a threshold $v$. There is a promise that either for all SPNE $\boldsymbol{\sigma}$ of the instance, $u_i(\boldsymbol{\sigma}) \geq v$ or for all $\boldsymbol{\sigma}$, $u_i(\boldsymbol{\sigma}) < v$. The decision problem is to determine which of the two cases holds.
\end{definition}

Notice that this problem reduces to the search problem of finding an SPNE.
We can solve \PromiseSPNE{k} by searching for an SPNE and evaluating agent $i$'s utility at the SPNE allocation.
Also, \PromiseSPNE{k} reduces to deciding if an instance has an SPNE where agent $i$ receives utility at least $v$ trivially: it is the same decision problem without the promise.
So, showing hardness for \PromiseSPNE{k} shows hardness for both of these natural computational problems.

\begin{theorem}[\PSPACE-hardness for Submodular Utilities]
     \PromiseSPNE{2} is \PSPACE-hard even if one of the agents has a submodular utility function and the other has an additive utility function.
\end{theorem}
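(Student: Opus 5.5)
We will reduce from TQBF. Given a fully quantified formula $\Phi = \exists x_1 \forall x_2 \exists x_3 \cdots Q x_n \, \phi(x_1,\ldots,x_n)$ with $\phi$ in CNF, we build a two-agent round-robin instance. Agent $A$ (the existential player) picks first and has a submodular utility given by a Boolean circuit; agent $B$ (the universal player) is additive. For each variable $x_j$ there is a pair of items $t_j, f_j$. The intended play is that in round $j$ the owner of the quantifier of $x_j$ picks one of $t_j, f_j$ and the other agent is then forced to take the complementary item. The whole formula is evaluated at the end through the value of $A$'s bundle: we threshold on whether $A$'s final bundle encodes (together with $B$'s) a satisfying assignment. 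The promise in \PromiseSPNE{2} holds because every SPNE gives $A$ exactly the game value.

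\textbf{Forcing the order.} We use geometrically separated weights, in the spirit of \Cref{lemma:lexicographic-spne} and \Cref{lemma:lexicographic-rr}. We give both agents very large values on the items of earlier variable pairs, scaled like $2^{K(n-j)}$, so that the lemmas force play to proceed pair by pair. Each variable pair needs two picks, but round-robin alternates $A,B,A,B$. We therefore schedule the pair for $x_j$ as a consecutive $(\text{chooser},\text{other})$ slot. When the alternation does not match the quantifier, we insert a dummy item of huge value to the agent whose turn must be absorbed. For example, $B$ takes a dummy $d_j$ that is worth a lot to $B$ and nothing to $A$, which shifts parity. For $B$ additive, the choice between $t_j$ and $f_j$ must be genuinely free. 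We make $B$ value $t_j$ and $f_j$ equally at the top level, and encode $B$'s preference for falsifying $\phi$ in a lowest-order additive bonus. Additivity is the obstacle here, since $B$ cannot directly ``see'' satisfaction. So we add clause items: for each clause $c$, an item $e_c$ that $B$ values slightly, and that $A$ values only if $c$ is not yet satisfied by $A$'s literal items.

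\textbf{Submodular utility for $A$.} We define $u_A(S) = W(S) + \epsilon \cdot g(S)$, where $W$ is the large additive part forcing the order and $g$ is a coverage-type function. The term $g(S)$ counts clauses covered by the literal items in $S$. Equivalently, it is the number of clauses $c$ such that some literal item in $S$ satisfies $c$, or $e_c \in S$. A weighted coverage function is submodular, and a sum of submodular functions plus an additive part is submodular, so $u_A$ is submodular and computable by a polynomial circuit. We arrange that $A$ ends with exactly the literal items making true the literals she owns, namely $t_j$ if $x_j$ is true and $f_j$ otherwise. The clause items are handled in a final phase, in which $A$ can take only a bounded number of them and $B$ greedily grabs the rest. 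The target is then that $A$ reaches full coverage value $v$ iff the assignment satisfies $\phi$. One complication is that $A$ holds both existential and universal choices' complements. We let item $t_j$ encode ``$x_j$ true'' when held by $A$ and item $f_j$ encode ``$x_j$ true'' when held by $B$, so that $A$'s bundle always determines the full assignment.

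\textbf{Correctness and main obstacle.} We argue by backward induction that the SPNE utility of $A$ is at least $v$ iff $\Phi$ is true, so the promise holds. The main obstacle is proving that no off-path deviation helps. We must show that neither agent benefits from picking out of order, for instance grabbing a later variable item or a clause item early to skip a choice. We would handle this by an exchange argument using the scale separation. Any deviation loses an item worth $2^{K(n-j)}$ to the deviator, which exceeds the total of all lower-order terms. This is exactly the inequality hypothesis of \Cref{lemma:lexicographic-spne} applied to the dummy and top items. The remaining tie for $B$ between $t_j$ and $f_j$ is then broken only by the low-order coverage consequences, which yields exactly the $\forall$ semantics.
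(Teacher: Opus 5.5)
Your architecture is close to the paper's: a reduction from QBF, per-variable item pairs that both agents value equally at the top level, extra items to fix parity, and geometric weights. But two steps fail as you describe them, and the second is the crux of the reduction.

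\textbf{The parity dummies can be postponed.} You make $B$'s dummy $d_j$ worth a lot to $B$ and nothing to $A$, so nothing forces $B$ to take $d_j$ on its absorbing turn. Instead $B$ can pick $t_{j+1}$. $A$ must then answer with $f_{j+1}$, since otherwise $B$ takes it too. $B$ collects $d_j$ on its next turn, because $A$ never wants it. $B$ ends with exactly the same high-order items but has taken over the existential choice of $x_{j+1}$. The same happens with $A$'s dummies if $B$ does not value them, letting $A$ steal universal choices.

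So your claim that every deviation loses an item worth $2^{K(n-j)}$ is false. \Cref{lemma:lexicographic-spne} only tells you who \emph{receives} $d_j$, not \emph{when}, and \Cref{lemma:lexicographic-rr} requires both agents to value the item. The paper's buffer goods $g_i$ are valued by both agents at $2^{2n+2-2i}$, just below $0_i,1_i$. Each triple $0_i,1_i,g_i$ is therefore played chooser--other--chooser, and any reordering costs the deviator.

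\textbf{$B$ has no strict incentive to falsify.} You flag this as the obstacle yourself, and the clause items $e_c$ do not resolve it. Turns alternate and every item is eventually allocated, so there are two cases. If $A$ gets no clause items, $B$'s payoff from them is constant. If $A$ gets every other one, $B$'s payoff depends on which $e_c$'s $A$ removes. When the CNF (called $\psi$ in the paper) is satisfied, those are arbitrary ties for $A$, with no monotone relation to falsification. Worse, $A$ can then take $e_c$ for unsatisfied clauses and reach full coverage even though the assignment falsifies $\psi$. With $B$ indifferent between $t_j$ and $f_j$, different SPNEs give different outcomes, so neither the $\forall$ semantics nor the promise holds.

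The missing idea is the paper's final pair $\varphi,\neg\varphi$, picked on $A$'s last turn. $A$'s marginal value for $\varphi$ is $\frac{8n+1}{8n}>1=u_A(\neg\varphi)$ if her bundle encodes a satisfying assignment, and $\frac{8n-1}{8n}<1$ otherwise. $B$ values $\varphi$ at $2$ and $\neg\varphi$ at $1$. Hence $B$ gets its better item exactly when $\psi$ is falsified, which gives the $\forall$ semantics. The decrease of $u_A(\varphi\mid S)$ for $\abs{S^*}>n^*$ is what keeps $u_A$ submodular.

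Your coverage function could be repaired in the same spirit. For example, add a single final item covering every clause, whose marginal value to $A$ is positive iff some clause is unsatisfied, and pair it with an item that $B$ prefers. Some such last-pick signal is indispensable.
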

    We reduce from the \PSPACE-complete problem of Quantified Boolean Formula (QBF).
    The input to the decision problem is a quantified Boolean formula $\varphi = \exists x_1 \forall x_2 \ldots \exists x_{n-1} \forall x_n \psi$ where $\psi$ is a quantifier-free Boolean formula in CNF and $n$ is even. 
    The decision problem is to determine if $\varphi$ is true.

    Given $\varphi$ we construct a round-robin instance with two agents as follows.
    Agent $A$ controls the odd variables and wishes to satisfy $\varphi$ and agent $B$ controls the even variables and wishes to falsify $\varphi$.
    We simulate this by adding multiple items for each variable. For a variable $x_i$ we construct items $0_i$ and $1_i$.
    We add some buffer goods $g_i$ for $1 \leq i \leq n$ and two final goods $\varphi$ and $\neg\varphi$.

    Agent $B$ is additive and has weak lexicographic preferences over the items. Agent $B$'s preferences are given by the following utilities: $u_B(0_i) = u_B(1_i) = 2^{2n + 3 - 2i}, u_B(g_i) = 2^{2n + 2 - 2i}, u_B(\varphi)=2$ and $u_B(\neg\varphi)=1$.
    Agent $A$ values all singletons except for $\phi$ identically to $B$: $u_A(0_i) = u_A(1_i) = 2^{2n + 3 - 2i}, u_A(g_i) = 2^{2n + 2 - 2i}, u_A(\neg\varphi)=1$.
    Agent $A$ is also additive over all bundles that do not include $\varphi$.
    
    The final element of the construction is $A$'s utility for bundles containing $\varphi$.
    Some bundles of goods $S$ induce a Boolean assignment as follows: for $\ell \in \{0, 1\}$, if $\ell_i$ is in $S$, then $x_i \leftarrow \ell$.
    If this assignment is complete, no index appears twice and the assignment satisfies $\psi$, we say that $S$ satisfies $\psi$ and write $S \models \psi$.
    For brevity, let $n^* = n + \frac{n}{2}$ and $S^* = S - \neg \varphi$.
    Then, the marginal contribution of $\varphi$ to $S$ is:

    \begin{equation*}
        u_A(\varphi \mid S) =
        \begin{cases}
            \frac{8n + 1}{8n} & \text{if $\abs{S^*} < n^*$,}\\
            \frac{8n + 1}{8n} & \text{if $\abs{S^*} = n^*$ and $S^* \models \psi$,}\\
            \frac{8n-1}{8n} & \text{if $\abs{S^*} = n^*$ and $S^* \not\models \psi$,}\\
             \frac{8n + 1 - 4(\abs{S^*} -n^*)}{8n} &\text{otherwise.}
        \end{cases}
    \end{equation*}

    \begin{restatable}{lemma}{lemmaSubmodular}\label{lem:submodular}
        The utility function $u_A$ is monotone and submodular. The utility function $u_B$ is monotone and additive.
    \end{restatable}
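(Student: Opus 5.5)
The plan is to write $u_A$ in a decomposed form and then check monotonicity and submodularity item by item, reducing both to properties of a one-variable step function of $\abs{S^*}$. The claim about $u_B$ is immediate: it is a sum of nonnegative item weights, hence monotone and additive.

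For $u_A$, let $a(\cdot)$ denote the additive function given by the singleton values of $A$ on all items other than $\varphi$. For $T \subseteq M - \varphi$, let $h(T)$ be the marginal $u_A(\varphi \mid T)$ defined by the case formula, which depends only on $T^* = T - \neg\varphi$. Then
\[
u_A(S) = a(S - \varphi) + [\varphi \in S]\, h(S - \varphi).
\]
The first step is to record three facts about $h$ as a function of $k = \abs{T^*}$.
\begin{enumerate}
\item $h$ is strictly positive. Indeed $\abs{T^*} \le 3n$, because there are $2n$ variable items and $n$ buffers. So $k - n^* \le \frac{3n}{2}$, and the smallest value is at least $\frac{2n+1}{8n} > 0$.
\item $h$ is nonincreasing under inclusion. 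It equals $\frac{8n+1}{8n}$ for $k<n^*$. At $k = n^*$ it lies in $\{\frac{8n+1}{8n}, \frac{8n-1}{8n}\}$. For $k>n^*$ it is at most $\frac{8n-3}{8n}$ and strictly decreasing in $k$. Since strict inclusion increases $k$, monotonicity along chains follows.
\item Each one-step drop $h(T) - h(T+x)$ lies in $\{0, \frac{2}{8n}, \frac{4}{8n}\}$, and in particular is at most $\frac{1}{2n} < 1$.
\end{enumerate}

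Monotonicity of $u_A$ then follows by checking single-item additions. Adding $\varphi$ contributes $h > 0$. Adding $\neg\varphi$ contributes exactly $1$, since $h$ ignores $\neg\varphi$. Adding any other item $x$ contributes $a(x) + [\varphi\in S](h(S^*+x)-h(S^*))$, which is at least $1 - \frac{1}{2n} > 0$, because every singleton value is at least $1$.

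For submodularity I will show that, for each item $x$, the marginal $u_A(x \mid S)$ is nonincreasing in $S$. The first two cases are direct.
\begin{itemize}
\item For $x = \varphi$ the marginal is $h(S - \varphi)$, which is nonincreasing by fact 2.
\item For $x = \neg\varphi$ the marginal is the constant $1$.
\end{itemize}

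For any other $x$ the marginal is $a(x) + [\varphi\in S]\,\Delta_x(S^*)$, where $\Delta_x(T) = h(T+x) - h(T) \le 0$. Passing from $S$ to $S' \supseteq S$ can switch the indicator on, which only adds a nonpositive term. So it suffices to show $\Delta_x(T) \ge \Delta_x(T')$ for $T \subseteq T'$, with $x \notin T'$.

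This discrete concavity is the main obstacle, because the value of $h$ at $k=n^*$ depends on satisfaction of $\psi$ and not only on $k$. I will tabulate $\Delta_x$ by $k = \abs{T}$:
\begin{itemize}
\item $k < n^*-1$: $\Delta_x = 0$.
\item $k = n^*-1$: $\Delta_x \in \{0, -\frac{2}{8n}\}$.
\item $k = n^*$: $\Delta_x \in \{-\frac{4}{8n}, -\frac{2}{8n}\}$, landing at $\frac{8n-3}{8n}$.
\item $k > n^*$: $\Delta_x = -\frac{4}{8n}$.
\end{itemize}
If $\abs{T} = \abs{T'}$ then $T = T'$, so the inequality is trivial. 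Otherwise $\abs{T} < \abs{T'}$, and the table shows that every value at a smaller size is at least every value at a larger size. The one delicate pair is $\abs{T}=n^*-1$ against $\abs{T'}=n^*$: the worst value at the smaller size is $-\frac{2}{8n}$, and this is at least the best value at the larger size, also $-\frac{2}{8n}$. The choice of constants $\pm 1$ and the slope $4$ in the definition is exactly what makes this comparison go through.
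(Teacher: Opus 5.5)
Your proposal is correct and takes essentially the same route as the paper. Both arguments reduce to single-item marginals: $\neg\varphi$ has constant marginal $1$, $\varphi$'s marginal is nonincreasing in $|S^*|$, and any other item has marginal $u_A(v)$ minus a drop in $\{0,\frac{1}{4n},\frac{1}{2n}\}$ that can only grow as the set grows. The only difference is bookkeeping: you tabulate that drop by the size of the base set, whereas the paper does its case analysis on the value of the marginal at the smaller set.
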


\begin{proof}
    For $u_A$, by definition of submodularity, it suffices to show that for all $v \in M$, $S \subsetneqq T \subseteq M - v$ we have that $u_A(v \mid S) \geq u_A(v \mid T)$.

    Item $\neg \varphi$ has a marginal contribution of $1$ for all $S \subseteq M \setminus \{\neg \varphi\}$.
    From this fact, it follows that for all $v \in M \setminus \{\neg\varphi\}$ and $S \subseteq M$, $u_A(v \mid S) = u_A(v \mid S+\neg \varphi)$.
    Hence, when considering the marginal contributions of items $v$ other than $\neg \varphi$ we assume without loss of generality that $S \subsetneqq T \subseteq M \setminus \{v, \neg \varphi\}$.
    
    For item $\varphi$, and  $S \subsetneqq T \subseteq M \setminus \{\varphi, \neg \varphi\}$, $\abs{S} < \abs{T}$ implies $u_A(\varphi \mid S) \geq u_A(\varphi \mid T)$ by definition.
    
    Now, let us consider $v \in M \setminus\{ \varphi, \neg \varphi\}$ and $S \subsetneqq T \subseteq M \setminus \{v, \neg \varphi\}$.
    For all $S \subseteq M \setminus\{v, \neg\varphi\}$, if $\varphi \notin S$ then $u_A(v \mid S) = u_A(v)$.
    If on the other hand $\phi \in S$, we know that $u_A(v \mid S ) =  u_A(v) +u_A(\varphi \mid S+v -\varphi)- u_A(\varphi \mid S -\varphi)$, and we conclude that the possible value of $u_A(v \mid S)$ is one of $u_A(v)$, $u_A(v) - \frac{1}{4n}$ or $u_A(v) - \frac{1}{2n}$.

    If $u_A(v \mid S) = u_A(v)$, then $u_A(v \mid S) \geq u_A(v \mid T)$ as $u_A(v \mid S)$ attains the maximum possible value.
    If $u_A(v \mid S) = u_A(v) - \frac{1}{4n}$, then $\varphi \in S$ and $\abs{S + v - \varphi} \geq n^*$, hence $\abs{T -\varphi} \geq n^*$ and $u_A(v \mid T) \leq u_A(v) - \frac{1}{4n}$.
    Finally, if $u_A(v \mid S) = u_A(v) - \frac{1}{2n}$, then $\varphi \in S$ and $\abs{S + v - \varphi} > n^*$. Thus, $\abs{T} > n^*$ and $\varphi \in T$ and so $u_A(v \mid T) = u_A(v) - \frac{1}{2n}$.
    This concludes the proof that $u_A$ is submodular.
    Note, that it is also \emph{monotone}. The marginal contribution of $\neg \phi$ is always $1$, the marginal contribution of $v \in M \setminus \{\varphi, \neg \varphi\}$ is at least $2 - \frac{1}{2n}\geq 0$, and the marginal contribution of $\varphi$ is at least $\frac{1}{4}$.

    Utility $u_B$ is monotone and additive by definition.
\end{proof}

    \begin{restatable}{lemma}{lemmaSubmodularCorrectness}\label{lemma:submodular-example}
    
        At any SPNE, for even $0\leq k < n$ in round $i=3k+1$, items $0_i, 1_i, g_i$ will be available and agent $A$ will pick one of them.
        In round $3k+2$, agent $B$ will pick one of $0_i, 1_i$ and in round $3k+3$, agent $A$ will pick the remaining item from $0_i, 1_i, g_i$.
        For odd $k$, the above holds with $A$ and $B$ swapped.
        
    \end{restatable}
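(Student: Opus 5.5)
We read the index in the statement as $i=k+1$, so round $3k+1$ opens \emph{block} $i$, consisting of the items $0_i,1_i,g_i$. Round $r$ is an $A$-pick when $r$ is odd and a $B$-pick when $r$ is even. Hence for even $k$ the picks $3k+1,3k+2,3k+3$ belong to $A,B,A$, and for odd $k$ to $B,A,B$. The claim therefore says that every block is consumed in its own three rounds. The \emph{opener} (the agent making picks $3k+1$ and $3k+3$) gets two of its three items, and the other agent gets one of $0_i,1_i$. The plan is a downward induction on the subgame, formulated for every reachable state rather than only on the equilibrium path. The invariant is that at the start of round $3k+1$ blocks $1,\dots,k$ are exhausted and each agent has made the number of picks prescribed by the pattern. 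We use backward induction from the end of the game, with \Cref{lemma:agents-are-greedy} handling the last picks. The final two picks involve only $\varphi,\neg\varphi$ and possibly the last block, and we settle them by direct inspection.

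The first step is a \emph{near-additivity} bound for $A$. By the definition of $u_A(\varphi\mid S)$ and \Cref{lem:submodular}, $A$'s utility differs from the additive function $\tilde u_A$ (with $\tilde u_A(\varphi)=1$) by less than $1$ on every bundle. Every item of a block has value at least $4=u_A(g_n)$, and all values are integers except the $\varphi$ term. So for bundles that differ in block items, $A$'s preference is decided by $\tilde u_A$, which agrees with $u_B$ on all items except $\varphi$. Consequently, as far as the block items are concerned, both agents compare bundles by the same additive weights $w(0_i)=w(1_i)=2^{2n+3-2i}$ and $w(g_i)=2^{2n+2-2i}$. The exact form of $u_A(\varphi\mid S)$ matters only for ties, that is, for which of $0_i,1_i$ is taken.

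The second step is a counting and exchange argument inside the inductive step. Suppose the opener $X$ faces a fresh block $i$, with all later blocks untouched and the remaining picks fixed in number. We show three things. First, the responder $Y$ must take one of $0_i,1_i$ at round $3k+2$. Second, $X$ must then take the last block-$i$ item at round $3k+3$. Third, $X$'s first pick must lie in block $i$. For each, we compare the candidate continuation with the best alternative, using the inductive hypothesis that later blocks are split by the prescribed pattern.

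The key numerical facts are that $w(0_i)=8\cdot 2^{2n-2i}$, while the total weight of all blocks $j>i$ together with $\varphi,\neg\varphi$ is below $\tfrac{20}{3}\cdot 2^{2n-2i}+3$. Moreover, under the pattern each agent's share of the later blocks is essentially fixed up to one swap. Deviating to a later item therefore gains at most the weight difference in a later block, but loses at least $w(0_i)-w(g_i)=2^{2n+2-2i}$ or more in block $i$. A short lexicographic comparison of the two resulting bundles, using the fixed cardinalities, settles each case.

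The main obstacle is that the weights are \emph{not} lexicographic. Item $g_i$ is worth less than block $i+1$ together with everything after it, so \Cref{lemma:lexicographic-spne} and \Cref{lemma:lexicographic-rr} do not apply directly. A deviation can also shift the parity of who opens later blocks, which would invalidate a naive use of the inductive hypothesis. To handle this, we will strengthen the induction hypothesis to arbitrary states in which some block items of the current block are already taken. We will then argue that any parity shift costs the deviator an entire $0_j$- or $1_j$-item in some block, which outweighs every possible later gain. We plan to verify this by first trying a potential-style comparison: the deviator's loss is at least $2^{2n+2-2j}$, while the gain is at most the tail sum after block $j$.
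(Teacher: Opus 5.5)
\textbf{There is a genuine gap, and it cannot be repaired, because the statement as written is false.} Your reading of the index as $i=k+1$ is right, and so is your observation that the weights are not lexicographic. The step that fails is your claim that, once block weights are settled, the $\varphi$-terms only decide \emph{which} of $0_i,1_i$ is taken. You use it to conclude that the responder must take one of $0_i,1_i$ in round $3k+2$, and that leaving the pattern costs at least $w(0_i)-w(g_i)$. Neither conclusion holds.

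Let the opener $X$ take $g_i$ (with $i<n$) and the responder $Y$ answer with $g_{i+1}$. Then $X$ takes one of $0_i,1_i$, $Y$ the other, $X$ one of $0_{i+1},1_{i+1}$, and $Y$ the other. $Y$ loses nothing in block $i$, because $X$ can remove only one of the twins before $Y$ moves again. After these six picks, each agent holds exactly the weights the pattern gives it on blocks $i$ and $i+1$, and $X$ is to move at the start of block $i+2$. So this deviation is weight-neutral for both agents; its only effect is to hand $X$ the choice of both $x_i$ and $x_{i+1}$. The only remaining tie-breakers are the $\varphi$-terms, and these can be constant. In particular, your invariant that blocks $1,\dots,k$ are exhausted at round $3k+1$ is not forced.

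\textbf{A concrete counterexample.} Take $n=2$ and $\psi=(x_1)\wedge(\neg x_1)$.
\begin{itemize}
\item Block weights are multiples of $4$, and the non-block part of either utility varies by less than $4$. So at every node of the play below, the mover's SPNE block weight equals its greedy value in the constant-sum block game: $56$ for $A$ and $44$ for $B$.
\item Since $\psi$ is unsatisfiable, every SPNE continuation from such a node gives $A$ at most $57$ and $B$ at most $46$.
\item The play $g_1,g_2,0_1,1_1,0_2,1_2,\neg\varphi,\varphi$ gives $A$ exactly $57$ and $B$ exactly $46$. So every move on it is a best reply, and completing the strategies off the path by backward induction yields an SPNE.
\end{itemize}
On this path $B$ takes neither $0_1$ nor $1_1$ in round $2$.

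The endgame you planned to settle by direct inspection breaks in the same way. With $n=2$ and $\psi=(x_1)$, the play $1_1,0_1,g_1,g_2,\varphi,0_2,1_2,\neg\varphi$ is also an SPNE path. Once $A$ holds $1_1$ it is indifferent to $x_2$, so it can take $\varphi$ in round $5$ at no cost.

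\textbf{What to do instead.} No strengthening of your induction can prove the lemma verbatim. You need a weaker statement that is insensitive to such payoff-neutral reorderings. One natural choice is a statement about final bundles, which is what the next lemma actually uses. Note that the paper's own proof has the same hole: it simply asserts that after $A$ opens with $g_1$, $B$ answers with one of $0_1,1_1$.
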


    \begin{restatable}{lemma}{lemmaSubmodularDecidingPhi}
        At any SPNE, agent $A$'s bundle $S_A$ has the following structure. For all $1 \leq i \leq n$, exactly one of $0_i$ and $ 1_i$ is in $S_A$. Also, for $1 \leq i \leq n$ and $i$ odd, $g_i \in S_A$.
        If in addition $\phi$ is true, item $\phi \in S_A$ and further the bundle $S_A$ induces a satisfying assignment of $\psi$.
        Otherwise $\neg\phi \in S_A$. Agent $A$'s bundle contains no other items.
    \end{restatable}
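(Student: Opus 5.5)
The plan is to take the round structure of \Cref{lemma:submodular-example} as given. From it I first read off the composition of $S_A$ just before the last two items are picked. After that, the question of whether $\varphi$ or $\neg\varphi$ ends up in $S_A$ becomes a two-player win/lose game on the quantifier prefix. That game is resolved by backward induction.

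\textbf{Step 1 (counting).} There are $3n+2$ items: $0_i,1_i,g_i$ for $1\le i\le n$, together with $\varphi$ and $\neg\varphi$. By \Cref{lemma:submodular-example}, the triple $\{0_i,1_i,g_i\}$ is exhausted in rounds $3(i-1)+1$ to $3(i-1)+3$.
\begin{itemize}
\item For odd $i$ (even $k=i-1$) the picking pattern is $ABA$. $B$ takes one of $0_i,1_i$, so $A$ takes $g_i$ and exactly one of $0_i,1_i$.
\item For even $i$ the pattern is $BAB$. $A$'s single pick is one of $0_i,1_i$.
\end{itemize}
So after $3n$ rounds, $S_A$ contains exactly one literal per index and $g_i$ for each odd $i$. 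Hence $\abs{S_A^*}=n+\frac n2=n^*$ and $S_A$ induces a complete, consistent assignment. Since $n$ is even, the $n$ blocks alternate starting players an even number of times, so round $3n+1$ belongs to $A$. Only $\varphi,\neg\varphi$ remain: $A$ picks one and $B$ receives the other. Thus $S_A$ contains no further items.

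\textbf{Step 2 (last pick).} Round $3n+1$ is $A$'s last turn, so by \Cref{lemma:agents-are-greedy} she picks greedily.
\begin{itemize}
\item The marginal value of $\neg\varphi$ is $1$.
\item The marginal value of $\varphi$ is $\frac{8n+1}{8n}>1$ if $S_A^*\models\psi$, and $\frac{8n-1}{8n}<1$ otherwise.
\end{itemize}
Hence $\varphi\in S_A$ iff the induced assignment satisfies $\psi$, and otherwise $\neg\varphi\in S_A$.

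\textbf{Step 3 (game value).} It remains to show that at any SPNE the induced assignment satisfies $\psi$ iff $\varphi$ is true. By Steps 1 and 2, all of $B$'s utility except his final item is fixed, since $u_B(0_i)=u_B(1_i)$. $B$ gets $2$ from $\varphi$ and $1$ from $\neg\varphi$, so he strictly prefers that the assignment falsifies $\psi$. Likewise $A$'s utility is a constant plus either $\frac{8n+1}{8n}$ or $1$, so she strictly prefers a satisfying assignment. The game is therefore strictly competitive with a binary outcome.
\begin{itemize}
\item In an odd block, $A$ can fix $x_i$ by taking the corresponding literal first; $B$ then takes the other literal and $A$ takes $g_i$. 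Taking $g_i$ first instead hands the choice of $x_i$ to $B$, which is weakly worse for $A$.
\item Symmetrically, in an even block $B$ moves first and can force $x_i$: taking literal $\ell_i$ leaves $A$ with $(1-\ell)_i$ in round $3k+2$. Any other move leaves the choice to $A$, which is weakly worse for $B$.
\end{itemize}
By backward induction on $i$ from $n$ down to $1$, I would show that the subgame at the start of block $i$, with partial assignment $x_1,\dots,x_{i-1}$, is won by $A$ iff $Q_ix_i\cdots\forall x_n\psi$ is true under that assignment. The inductive step uses exactly the two observations above. At $i=1$ this gives: $\psi$ is satisfied at every SPNE iff $\varphi$ is true.

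\textbf{Main obstacle.} The delicate step is the inductive step, specifically showing that "ceding" the choice never helps the ceding player. For $A$ in an odd block this holds because $B$ picks in round $3k+2$ whichever literal is best for $B$, and that can only be worse for $A$ than her own best choice. The even case is symmetric. One must also check that \Cref{lemma:submodular-example} already rules out all off-pattern moves, such as picking $\varphi$ early, so no other deviations need to be considered.
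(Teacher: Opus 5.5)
Your proposal is correct and follows the same route as the paper. Both arguments read off the block structure from the preceding lemma, both note that $A$'s last, greedy pick between $\varphi$ and $\neg\varphi$ depends only on whether her induced assignment satisfies $\psi$, and both use that $A$ fixes the odd-indexed variables and $B$ the even-indexed ones, with strictly opposed interests, so the outcome reduces to the QBF game. Your version spells out steps the paper only sketches: the count $\abs{S_A^*}=n^*$ that the marginal-value formula needs, the strict competitiveness of the reduced game, and the backward induction over blocks, including the check that subgames at block boundaries still follow the prescribed pattern.
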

    
    Hence, \textsc{QBF} reduces to \PromiseSPNE{2}.
    For a given instance, either all SPNE allocations of $A$ satisfy $\psi$ or all do not.
    Let $v=u_A(\{y_1, \ldots, y_n, g_1, g_3, 
    \ldots, g_{n-1}, \phi\})$ where $y_i = 0_i$ or $y_i = 1_i$, and the $y_i$ induce a satisfying assignment of $\psi$.
    Hence, in all SPNEs, the utility of $A$ is at least $v$ or it is less than $v$ in all SPNEs. So, the promise is fulfilled.
    Deciding whether $A$ receives value at least $v$ decides whether the quantified Boolean formula $\varphi$ is true.

\section{Hardness for \OXS Utility Functions}

We now arrive at the most technically involved result of this paper.
We show \NP-hardness for the smallest superset of additive valuations studied by \citet{lehmann2001combinatorial} for a constant number of agents.
We describe the construction in detail, while presenting the proof of correctness in the appendix.

\begin{theorem}[\NP-hardness for \OXS utilities]
    \PromiseSPNE{11} is \NP-hard, even if the utility functions of agents are \OXS.
    \label{thm:np-hardness-oxs}
\end{theorem}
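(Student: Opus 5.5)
We will reduce from 3-SAT (or a bounded-occurrence variant such as 3-SAT where each variable appears at most a constant number of times, so that the number of distinct ``roles'' stays constant), in the spirit of the QBF reduction for submodular utilities. Since we only need \NP-hardness, a single agent $A$ (the ``chooser'') will make all variable decisions. The remaining agents act as enforcers and verifiers. Because \OXS valuations are sums of unit-demand functions, we cannot directly encode the global predicate ``$S \models \psi$'' as we did with the item $\varphi$ in the submodular case. Instead, we will encode each clause by a unit-demand slot and check satisfaction through competition among a constant number of agents whose valuations are lexicographic-like.

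The plan is as follows. First, for each variable $x_i$ we create items $0_i,1_i$ and buffer goods. With weights decreasing lexicographically in $i$, we force the rounds to process variables in order. The pattern is the same as in \Cref{lemma:submodular-example}, now argued through \Cref{lemma:lexicographic-spne} and \Cref{lemma:lexicographic-rr}, with ``enforcer'' agents picking predetermined filler items in every round so that the round structure is rigid. Agent $A$ chooses one of $0_i,1_i$, and a designated agent then takes the complementary literal item. Second, for each clause $c_j$ we create a clause item. The \OXS valuation of a verifier agent $V$ has one unit-demand function per clause, which values the clause item and the literal items of $c_j$. The complementary literal items held by other agents therefore determine whether $V$'s slot for $c_j$ is already ``covered''. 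The idea is that $A$ receives a final bonus item $\top$ only if, at the end, no verifier prefers to grab $\top$. This happens exactly when every clause slot can be filled, which in turn happens exactly when the chosen assignment satisfies $\psi$. We use unit-demand structure so that an item's marginal value drops to zero once a slot is filled. Third, we fix the threshold $v = u_A(\text{bundle of the literal items and the bonus})$ and check the promise, showing that every SPNE gives the same $A$-utility class, mirroring the end of the submodular proof.

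The count of $11$ agents would arise from partitioning the roles (chooser, complement-taker, a few verifiers handling clauses by position within a literal-occurrence colouring, and fillers that keep the round-robin timing aligned), so that each agent's valuation is a constant-size combination of unit-demand pieces and distinct agents never compete for items outside their intended phase.

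The main obstacle will be rigidity. We must show that in \emph{every} SPNE no agent benefits from deviating out of phase: for instance, by picking a later-phase item early, or by using a filler pick to ``skip'' a turn (the issue already noted in the proof of \Cref{theorem:exponential-equilibria}). We also need the verifiers' end-game behaviour to be determined solely by clause satisfaction. My first attempt would be to scale valuations in geometrically separated blocks, say with factor $2^{c}$ between phases, so that \Cref{lemma:lexicographic-spne} pins down each phase's picks by backward induction from the last phase. The delicate part is then checking that the \OXS marginals in the clause phase, which are small relative to the phase weights, cannot overturn these forced picks.
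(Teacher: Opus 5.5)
\textbf{There is a genuine gap.} The clause-checking step is the heart of the reduction. Your proposal never specifies it, and as described it cannot work.

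First, a verifier's utility depends only on its own bundle, so literal items ``held by other agents'' cannot cover $V$'s slot for $c_j$. Suppose instead that $V$ fills its slots through its own picks. After your variable phase every $0_i$ and $1_i$ already belongs to $A$ or to the complement-taker. So the only items left for slot $j$ are $\top$ and the clause item $c_j$. But $c_j$ fills slot $j$ regardless of the assignment, unless some unspecified mechanism removes $c_j$ exactly when $c_j$ is falsified.

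More fundamentally, an \OXS{} valuation is a maximum-weight assignment of items to unit-demand functions. A single item $1_i$ therefore fills at most one slot and cannot certify all clauses containing $x_i$. ``Every slot can be filled'' becomes a Hall-type matching condition, not $S \models \psi$. Repairing this forces one item per literal occurrence, and then whoever collects those occurrence items must collect them consistently with a single assignment. Nothing in your plan does this. The complement-taker acts only on $0_i,1_i$, and once both are gone the available set no longer records which value $A$ chose. You also specify no agent whose later picks depend on that choice. The verifier is strategic too, so its picks, including its race with $A$ for $\top$, would have to be pinned down in every SPNE. Your account of where $11$ agents come from is correspondingly retrofitted rather than derived.

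\textbf{The missing idea is that no agent ever evaluates $\psi$.} In the paper the items are literal occurrences $\ell^i_a$. Agent $A$ has lexicographic preferences that put the tied literals of $C_i$ at the top of the $i$th block. He takes one literal of $C_i$ in the $i$th two-round block, with the timing fixed by $B$ and the buffer goods $g_{3k+1}, g_{3k+2}, g_{3k+3}$. The conjunction over clauses is realised by time and lexicographic weights: $A$ reaches the threshold only if some literal of every clause is still available at his turn.

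Consistency is enforced locally by nine agents in three groups running chains of \PickOneOf$(a>\{b,c\})$ and \PickOneOf$(a>b>c)$ gadgets. Their conditional behaviour is exactly what the unit-demand components buy: take $a$ if available, otherwise take $b$ and $c$; or take the first available of $a,b,c$. In each block the gadgets remove the unchosen literals of $C_i$ and delete later occurrences of the complement of the literal $A$ took. In the $x_j,\overline{x_j},x_j$ case, the fresh item $z$ of the first-occurrence gadget serves as a flag. Bounded occurrence together with the $3$-colouring of \Cref{lem:graph-colouring} ensures each group handles each variable and each clause at most once, so chained gadgets share no items. This is why $11 = 1 + 1 + 3\cdot 3$.

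Your geometric scaling for rigidity matches the paper's use of lexicographic preferences and \Cref{lemma:lexicographic-rr}. It cannot, however, substitute for these consistency gadgets.
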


In this proof, agents have a particular type of \OXS utilities.
For all singletons $j$, agent $i$'s utility of $\{j\}$ will be a power of 2.
For each agent $i$ we partition the items of $M$ into $k$ components $S_1, \ldots, S_k$.
These components are unit-demand: for a bundle $S \subseteq S_j$, $u_i(S) = \max_{s \in S}(u_i(s))$.
Agent $i$'s utility across components is additive, formally $u_i(S) = \sum_{j = 1}^k \max_{s \in S_j \cap S} u_i(s)$.
All preferences of this form are \OXS utilities as per \Cref{def:OXS} with $f_j(S) = \max_{s \in S_j \cap S} u_i(s)$.
To elaborate on an example, suppose we write that agent $A$ has $a_1 > a_2 > a_3 > a_4$ preferences over goods and unit demand components of $\{a_1, a_3\}$ and $\{a_2, a_4\}$. This is shorthand for: $u_A(a_1) = 2^3, u_A(a_2) = 2^2, u_A(a_3) = 2^1,u_A(a_4) = 2^0$ over singletons.
But, $u_A(\{a_1, a_3\}) =2^3$ as both items are a subset of the unit demand component $\{a_1, a_3\}$. Agent  $A$ attains its maximum utility (up to ties) with bundle $\{a_1, a_2\}$ and receives utility $2^3+2^2$.

We construct gadgets for the preferences of agents.
We write these gadgets as functions of existing items. 
Throughout our definitions we will create fresh items.
These are new items and all agents outside the gadget have zero marginal utility towards them.

\begin{definition}
    For agents $A_1, A_2, A_3$ and goods $a, b, c$, we construct gadget \PickOneOf$(a > \{b, c\})$ as follows.
    We construct fresh items $u, v, x, y, z$. All agents $A_1, A_2, A_3$ have the same lexicographic preferences over singletons given by $a > u > v> b > c> x> y > z$.
    Agents have unit demand components in their utility functions.
    The non-singleton unit-demand components for $A_1$ are  $\{a, b, c, z\}$, for $A_2$, $\{u, b, c\}$ and $\{x, z\}$ and for $A_3$, $\{v, b, c\}$ and $\{y, z\}$.
\end{definition}

Under some conditions of non-interaction, if item $a$ is available, this gadget picks it, while items $b$ and $c$ remain available.
If instead $a$ is already picked, then $b$ and $c$ are picked. This is made formal below.

\begin{restatable}{lemma}{lemmaPickOneOfBraces}
    \label{lemma:pick-one-of-braces}
    Let $S$ be the set of items constructed for gadget $\PickOneOf(a > \{b, c\})$.
    Consider a state $q$ of round-robin where $A_1$ is next to play. Let the available items be $M$ and $S -a \subseteq M$.
    Suppose that for each item $s \in S$, for each $A_\ell$, $u_\ell(s) > u_\ell(M \setminus S).$
    Then, if $a \in M$, at every SPNE of $G_q$, agent $A_1$ will pick $a$ in state $q$.
    After all $A_\ell$ agents have taken two turns, items $b, c$ will be available and $z$ will be unavailable.
    If $a \notin M$, at every SPNE, after all $A_\ell$ agent have taken two turns, items $b,c$ will be unavailable and item $z$ will be available.
    After two turns, all $A_\ell$ will have marginal utility $0$ for all available items of $S$.
\end{restatable}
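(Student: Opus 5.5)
The plan is to treat the gadget as a lexicographic game played by $A_1,A_2,A_3$ in that order, and to force every one of the six picks (two per agent) by repeated use of \Cref{lemma:lexicographic-spne} and \Cref{lemma:lexicographic-rr}. First I would record the numerical consequence of the hypothesis. Write the singleton values as $a=2^7, u=2^6, v=2^5, b=2^4, c=2^3, x=2^2, y=2$, $z=1$. Since $u_\ell(s) > u_\ell(M\setminus S)$ for every gadget item $s$, the total value of all non-gadget items is below $z=1$. Hence at every step, comparing ``take gadget item $g$ now'' with ``take everything else that remains'' reduces to comparing powers of two, where the rest is bounded by the sum of smaller gadget values plus something below $1$. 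Items $g$ of different agents' components are evaluated with the unit-demand structure, so a component already containing a higher item contributes $0$.

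\emph{Case $a\in M$.} I will chain \Cref{lemma:lexicographic-rr} along the turn order.
\begin{itemize}
\item $A_1$ takes $a$: without $a$, $A_1$ gets at most $u+v+b+x+y+(<1) < 128$. The same bound holds for $A_2$, with $b$ replaced by $0$ since $b$ shares a component with $u$. So the lemma applies with $I=A_1$, $J=A_2$.
\item $A_2$ takes $u$: without $u$, $A_2$ gets at most $v+b+x+y<64$. The same holds for $A_3$, so the lemma applies with $J=A_3$.
\item $A_3$ takes $v$: without $v$, $A_3$ gets at most $b+x+y<32$. $A_1$, holding $a$, has only $x+y<32$ left, since $b,c,z$ lie in $a$'s component. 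So the lemma applies with $J=A_1$.
\end{itemize}
In the second round $b,c$ have marginal $0$ for everyone, since each agent's own component already holds a higher item. The only remaining positive marginals come from $x,y,z$. For $A_1$ these are $x>y$ and $z=0$; for $A_2$ they are $x$, then $y$, then $z$ only if $x$ is missing; for $A_3$ they are $y$, then $z$ only if $y$ is missing, then $x$. I would show by \Cref{lemma:lexicographic-spne} that each agent must pick a positive-marginal gadget item, rather than $b$, $c$ or an outside item, whenever one exists, because all outside items are worth less than $1$. Then a short case check on $A_1$'s choice between $x$ and $y$ shows the three picks are $\{x,y,z\}$ in some order. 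In particular $z$ is taken and $b,c$ survive.

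\emph{Case $a\notin M$.} The argument has the same shape.
\begin{itemize}
\item $A_1$ takes $u$, via \Cref{lemma:lexicographic-rr} with $J=A_2$: for $A_1$ the rest is at most $v+b+x+y$, since $b$ is now the best of $A_1$'s $\{a,b,c,z\}$ component; for $A_2$ it is at most $v+b+x+y$.
\item $A_2$ takes $v$, via \Cref{lemma:lexicographic-rr} with $J=A_3$.
\item $A_3$, whose $\{v,b,c\}$ component is now empty, takes one of $b,c$. Here \Cref{lemma:lexicographic-spne} gives that $A_3$ ends with one of them, and the top choice is $b$.
\end{itemize}
In round two, $A_1$'s $\{a,b,c,z\}$ component is empty, so $A_1$ must take the remaining one of $b,c$ (value $\ge 8 > x+y+z+\ldots$). $A_2$ then takes $x$ and $A_3$ takes $y$. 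So $b,c$ are gone and $z$ remains. The final claim in either case, that all available gadget items have marginal value $0$ to every $A_\ell$, is then read off from the components: $z$ and the surviving items of $b,c$ each sit in a component the agent has already filled.

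The main obstacle I expect is the second round. Ties at marginal value $0$, and the possibility that an agent ``wastes'' a turn on an outside item or on $b$, must be excluded. Moreover, \Cref{lemma:lexicographic-spne} only says who \emph{ends up} with an item, not when it is taken. I would handle this by applying \Cref{lemma:lexicographic-spne} to each agent with the target item being its best remaining positive-marginal gadget item; the value gap of at least $1$ over everything else makes this valid. Together with the fact that every other agent with a competing claim plays before that agent's next turn, this pins down the picks in these two rounds.
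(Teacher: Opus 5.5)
Your proposal is correct and follows the paper's route: the paper also just chains \Cref{lemma:lexicographic-rr} through the six picks, reaching $\{a,x\},\{u,y\},\{v,z\}$ when $a\in M$ and $\{u,c\},\{v,x\},\{b,y\}$ when $a\notin M$, and your power-of-two bounds are exactly the inequalities that justify each application. Two small fixes: $A_3$ values $x$ at $4>y$, and in round two no case split on $A_1$'s choice is needed, since \Cref{lemma:lexicographic-rr} with $J=A_2$ forces $A_1\to x$, then $A_2\to y$ (competitor $A_3$), then $A_3\to z$ (competitor $A_2$, whose $\{x,z\}$ component is now empty) --- this matters because in your $y$-branch no other agent wants $z$, so its timing would not be pinned down (likewise, justify $A_3\to b$ when $a\notin M$ via \Cref{lemma:lexicographic-rr} with $J=A_1$, not \Cref{lemma:lexicographic-spne} alone).
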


\begin{definition}
    For three agents $A_1, A_2, A_3$, existing items $a, b, c$ and fresh items $g_1, g_2, x, y, z$, gadget \PickOneOf$(a > b > c)$ captures the following preferences: $
        A_1: g_1 > g_2$,
        $A_2: g_1 > a > x > b > g_2 > y > z > c$,
        $A_3: a > x > b > y > z > c.$
    If an item is not mentioned in these preferences then it always has a marginal utility contribution of $0$.
    Agent $A_2$ has unit demand components of $\{a, b\}$ and $\{y, c\}$ and $A_3$ has unit demand components $\{x, b\}$ and $\{z, c\}$.
\end{definition}

Under similar conditions of non-interaction, this gadget ``picks'' the first item among $a, b$ and $c$.

\begin{restatable}{lemma}{lemmaPickOneOf}\label{lem:pickOneOf}
    Let $S$ the set of items constructed as part of the $\PickOneOf(a>b>c)$ gadget.
    Let $q$ a state where agent $A_1$ is next to play, the set of available items is $M$ and $S\setminus\{a, b, c\} \subseteq M$.
    Suppose further that for all $s \in S$ and all $A_\ell$, $u_\ell(s) > u_\ell(M\setminus S)$.
    Then, in any SPNE, in the next two turns of players $A_\ell$, the alphabetically first item from $a, b, c$ that is in $M$ will be picked. 
    The other items among  $b, c$ will not be picked.
    Also, if the picked item is $a$, it will be picked on the first turn of player $A_2$.
    All unpicked items from $S$ will have marginal value $0$ to all agents $A_\ell$.
\end{restatable}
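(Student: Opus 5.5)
The plan is to proceed by backward induction over the six moves in the order $A_1, A_2, A_3, A_1, A_2, A_3$. Throughout I will use the domination hypothesis $u_\ell(s) > u_\ell(M\setminus S)$ together with the fact that all singleton values are distinct powers of two. Together these make every agent behave lexicographically, first over gadget items and then over unit-demand components, so that \Cref{lemma:lexicographic-spne} and \Cref{lemma:lexicographic-rr} apply directly. At each step I will check the hypothesis $u_I(S_I'+g) > u_I(S_I'\cup M'-g)$ in the form ``$g$ is worth more than the sum of all remaining items that still have positive marginal value to $I$''. For example, for $A_3$ the item $x$ beats $b+y+z+c$ plus everything outside $S$; the same-component structure only helps here, since it kills some marginal values. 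The second-turn moves will be handled by \Cref{lemma:agents-are-greedy}.

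\textbf{Step 1 (common to all cases).} Both $A_1$ and $A_2$ rank $g_1$ above everything else, so \Cref{lemma:lexicographic-rr} with $I=A_1$, $J=A_2$ forces $A_1$ to pick $g_1$ first. Similarly, at $A_1$'s second (last) turn, $g_2$ is available and is the only item $A_1$ values, so $A_1$ picks $g_2$ greedily. From now on $A_1$ is irrelevant, and I will split into cases according to the first available item among $a,b,c$.

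\textbf{Case $a\in M$.} Both $A_2$ and $A_3$ rank $a$ top, so \Cref{lemma:lexicographic-rr} gives that $A_2$ picks $a$ on its first turn. Now $b$ is dead for $A_2$ (component $\{a,b\}$), so both $A_2$ and $A_3$ rank $x$ top, and again \Cref{lemma:lexicographic-rr} gives that $A_3$ picks $x$. After $A_1$ takes $g_2$, the last turns are greedy: $A_2$ takes $y$ (since $b$ and $g_2$ are worthless or gone), and then $A_3$ takes $z$ (since $b$ is dead with $x$ and $y$ is gone). Hence $b,c$ remain, and $c$ is dead to $A_2$ via $y$ and to $A_3$ via $z$.

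\textbf{Case $a\notin M$, $b\in M$.} Both $A_2$ and $A_3$ rank $x$ top, so $A_2$ picks $x$. Then $A_3$ ranks $b$ top; although $A_2$ still values $b$, \Cref{lemma:lexicographic-rr} with $A_3,A_2$ gives that $A_3$ picks $b$. The greedy last turns are $A_2\to y$ and $A_3\to z$, so $c$ is unpicked and dead to both.

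\textbf{Case only $c\in M$.} $A_2$ picks $x$ as before. Then I would analyse $A_3$'s first move by explicit backward induction rather than by the lemmas. If $A_3$ takes $y$, then greedy $A_2$ takes $z$ (since $z>c$ and $z$ is not in $A_2$'s $\{y,c\}$ component), and $A_3$ ends with $y+c$. Every alternative first move (for instance $z$, after which $A_2$ takes $y$ and $c$ is dead for $A_3$) gives $A_3$ strictly less. So $c$ gets picked by $A_3$.

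In all cases I will verify the claimed marginal-zero property of the leftover gadget items directly from the components. The main obstacle is the last case: there the lexicographic lemmas do not pin down $A_3$'s first move, and I must compare the full continuation values explicitly. A second delicate point is confirming that outside items never interfere; this is exactly where the domination hypothesis is used.
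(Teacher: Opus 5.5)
Your proposal is correct and follows essentially the same route as the paper, which also forces each pick via \Cref{lemma:lexicographic-rr} and reaches the same three outcomes: $\{g_1,g_2\},\{a,y\},\{x,z\}$; $\{g_1,g_2\},\{x,y\},\{b,z\}$; and $\{g_1,g_2\},\{x,z\},\{y,c\}$. Your explicit continuation comparison for $A_3$'s first move in the only-$c$ case is more careful than the paper's one-line appeal, because there the hypothesis of \Cref{lemma:lexicographic-rr} genuinely fails for $J=A_2$ (the still-available $g_2$ outweighs $y$ for $A_2$); the one thing to tidy is that the domination hypothesis allows the gadget agents later turns, so justify the second-turn picks by \Cref{lemma:lexicographic-rr} rather than \Cref{lemma:agents-are-greedy}, and its hypotheses do hold for each of them (e.g.\ $A_3$'s second pick against $A_2$ in the following round).
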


For distinct items $a, b, c, d, e, f$, we can concatenate gadgets \PickOneOf$(a > b >c)$ and $\PickOneOf(d> e>f)$.
Let $S_1$ and $S_2$ the set of items in each gadget.
So, agents $A_1, A_2, A_3$ have preferences defined by \PickOneOf$(a > b >c)$ over items in $S_1$ and preferences defined by $\PickOneOf(d> e>f)$ over items in $S_2$. Agents $A_i$ have additively separable preferences between the items in different gadgets: $u_i(S) = u_i(S \cap S_1) + u_i(S \cap S_2)$.
However, by multiplying the value of every singleton in $S_1$ by an appropriate power of 2, we ensure that for all $s_1 \in S_1$, $u_i(s_1) > u_i(S_2)$.
As the sets of items in the gadgets are disjoint, the chaining defines valid preferences.
Because of the lexicographic preferences, we can apply \Cref{lemma:lexicographic-rr} freely and deduce that the gadgets will not interact with each other. We are now ready to prove \Cref{thm:np-hardness-oxs}.

    We reduce from \textsc{Satisfiability}.
    Consider an instance of \textsc{Satisfiability} $\varphi = \bigwedge_{i = 1}^m C_i$ with $m$ clauses and $n$ variables.
    \textsc{Satisfiability} remains \NP-hard even if each clause has size at most\footnote{Interestingly, if for all $i$, $\abs{C_i} = 3$, and each variable occurs at most 3 times, the formula is always satisfiable and so the problem is trivial \citep{tovey1984simplified}.} 3, i.e., $\abs{C_i} \leqslant 3$, and each variable occurs at most 3 times \citep{tovey1984simplified}.
    We make the following simplifying assumptions that maintain hardness: no variable appears twice in the same clause $C_i$ and every variable $x_i$ appears both negated and unnegated in the formula.
    
    We introduce one item for every occurrence of a literal in the formula.
    So, for a clause $C_i = x_a \lor \overline{x_b} \lor x_c$, we introduce items $x_a^i, \overline{x_b^i}, x_c^i$.
    We also introduce some buffer goods $g_1, \ldots, g_{3m}$.

    We construct agent $A$, who maximises his utility by satisfying the formula.
    Agent $A$ has lexicographic preferences over items.
    For $l_a^i$ the $a$-th literal appearing in clause $C_i$, the complete preference ordering of agent $A$ is:
$
        A\colon l_1^1 = l_2^1 = l_3^1 > g_1 > g_2 > g_3 > l_1^2 = l_2^2 = l_3^2 > g_4 > g_5 > g_6 > \cdots
 $.
    We also construct agent $B$ who will ``delay'' agent $A$.
    Agent $B$ has lexicographic preferences and a preference order $g_1 > g_2 > \cdots > g_{3m}$.

    Our construction will ensure that agent $A$ will not be able to pick two items that result in different assignments for the same variable, unless the formula is unsatisfiable.
    We achieve this with 9 additional agents, split in groups of three, they are named $P_1, P_2, P_3, Q_1, Q_2, Q_3, R_1, R_2, R_3$.
    Agents pick in order $A, P_1, P_2, P_3, Q_1, Q_2, Q_3, R_1, R_2, R_3, B$.
    We use agents $P_i$, $Q_i$ and $R_i$ to implement three chains of \PickOneOf{} gadgets.
    The details of $j$\textsuperscript{th} \PickOneOf{} gadget for agents $P_i$ are determined by the first literal of the $j$\textsuperscript{th} clause, $\ell^j_1$. For agents $Q_i$ and $R_i$ the details depend on the second and third literal of the clause, $\ell^j_2, \ell^j_3$ respectively.
    Some clauses may only have two literals. For those clauses, we instead chain lexicographic preferences  $a > b > c > d > e > f$ over fresh items to some agents $P_i, Q_i$ or $R_i$.
    The details of the \PickOneOf{} gadgets depend on whether this is the first, second, or third time the literal appears in the formula. Without loss of generality, we assume that the first instance of a variable appears unnegated.
    
    \paragraph{First appearance of variable $x_j$.}
    Suppose the first appearance of $x_j$ in formula $\phi$ is in clause $C_i$.
    If the literal $\overline{x_j}$ appears once in the formula in $C_a$ with $a > i$, we chain gadget \PickOneOf$(x_j^i > \{\overline{x_j^a}, g\})$ for a fresh good $g$.
    If instead $\overline{x_j}$ appears twice in the formula, say in clauses $C_a$ and $C_b$ with $a, b > i$, we chain \PickOneOf$(x_j^i > \{\overline{x_j^a}, \overline{x_j^b}\})$.
    We chain the \PickOneOf{} gadgets to agents $P, Q, R$, depending on if $x_j$ is respectively the first, second, or third literal of $C_i$.
    
    \paragraph{Second appearance of variable $x_j$.}
    Suppose the literal appears for a second time in clause $C_k$.
    If variable $x_j$ does not appear a third time, we simply append \PickOneOf$(x_j^k > a > b)$ for fresh items $a, b$.

    If a third literal appears in clause $C_l$, there are four cases to consider. By assumption, no variable will always appear unnegated. So, we can exclude ordering  $x_j, x_j, x_j$.
    If the literals appear in order $x_j, x_j, \overline{x_j}$, we use gadget \PickOneOf$(x_j^k > \overline{x_j^l} > a)$ for fresh item $a$.
    If the literals appear in order $x_j, \overline{x_j}, x_j$, we use gadget \PickOneOf$(\overline{x_j^k} > z > x_j^l)$, where $z$ is the item used in the first gadget that handled the first occurrence of $x_j$, i.e., \PickOneOf$(x_j^i > \{\overline{x_j^k}, g\})$.
    Item $z$ is available if and only if agent $P$ picked item $x_j^i$.
    If literals appear in order $x_j, \overline{x_j}, \overline{x_j}$ then we only need to use gadget \PickOneOf$(\overline{x_j^k} > a > b)$ for fresh $a$ and $b$.
    
    \paragraph{Third appearance of $x_j$.}
    For the third appearance of a literal, we simply need to ``gobble'' the item if it is available, so we can use \PickOneOf$(x_j^l > a > b)$ for fresh $a, b$.

    As we remarked earlier, it is not possible to chain \PickOneOf{} gadgets that share items, so each group $P, Q, R$ must consider each variable at most once.
    To do so, we pre-process the instance and assign each literal a colour $j\in \{1,2,3\}$
    The restriction is that every occurrence of the same variable is assigned a different $j$ and all instances of variables in a clause are assigned a different $j$. 
    Instances of variables will then appear in the preference lists of agents $P$ if $j=1$, $Q$ if $j=2$ and $R$ if $j=3$.
    Such an assignment always exists and can be computed in polynomial time by the below lemma.

\begin{restatable}{lemma}{graphColouring}
\label{lem:graph-colouring}
    Consider a graph $G = (V, E)$ with vertices $V \subseteq \{x_{i,j} \mid i \in [n], j \in [m]\}$ and for each $i \in [n]$, there are at most three occurrences of $x_{i,j'}$ in $V$ and for each $j \in [m]$ there are at most three occurrences of $x_{i', j}$.
    There is an edge between $x_{i, j}$ and $x_{i', j'}$ if and only if $i = i'$ or $j = j'$, but not if both equalities hold (i.e., no self-loops).
    Then, graph $G$ is 3-colourable and we can find such a 3-colouring in polynomial time.
\end{restatable}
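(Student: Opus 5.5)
The plan is to recognise $G$ as a subgraph of the line graph of a bipartite multigraph and then apply K\"onig's edge-colouring theorem. Let $H$ be the bipartite graph with left vertex set $\{r_1,\ldots,r_n\}$ (one per index $i$, i.e.\ per variable) and right vertex set $\{c_1,\ldots,c_m\}$ (one per index $j$, i.e.\ per clause). Each vertex $x_{i,j}\in V$ becomes an edge $r_i c_j$ of $H$. Since the pair $(i,j)$ determines $x_{i,j}$, there are no parallel edges, so $H$ is a simple bipartite graph.

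Two distinct vertices $x_{i,j}$ and $x_{i',j'}$ are adjacent in $G$ exactly when $i=i'$ or $j=j'$. This holds exactly when the corresponding edges of $H$ share the endpoint $r_i$ or the endpoint $c_j$. Hence $G$ is precisely the line graph $L(H)$, and a proper vertex $3$-colouring of $G$ is the same thing as a proper edge $3$-colouring of $H$. The hypothesis gives at most three occurrences of $x_{i,j'}$ for each $i$ and at most three of $x_{i',j}$ for each $j$. So every vertex of $H$ has degree at most $3$, i.e.\ $\Delta(H)\le 3$.

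K\"onig's line colouring theorem says that every bipartite (multi)graph has chromatic index equal to $\Delta(H)$. Therefore $H$ has a proper edge colouring with $3$ colours, and transferring it to $V$ gives the required $3$-colouring of $G$.

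For the algorithmic part, I would use the standard constructive proof of K\"onig's theorem, which is the only step needing care. Colour the edges one at a time. For the current edge $r_ic_j$, both endpoints have degree at most $3$ and the edge is still uncoloured. So each endpoint misses at least one colour: say $\alpha$ is missing at $r_i$ and $\beta$ at $c_j$. If some colour is missing at both, use it. Otherwise, consider the maximal path from $c_j$ whose edges alternate between colours $\alpha$ and $\beta$, and swap $\alpha$ and $\beta$ along it. Bipartiteness ensures the path cannot end at $r_i$: it would have even length, yet it starts with an $\alpha$-edge at $c_j$, and $\alpha$ is missing at $r_i$, a contradiction. After the swap, $\alpha$ is free at both endpoints, so we colour $r_ic_j$ with $\alpha$. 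Each step costs $O(|V|)$, giving $O(|V|^2)$ overall, which is polynomial. Alternatively, one could repeatedly extract matchings covering all maximum-degree vertices.
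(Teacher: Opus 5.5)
Your proof is correct, and it takes a different and cleaner route than the paper's. The paper never identifies $G$ as a line graph. It colours $G$ greedily, and when a vertex $v_0$ sees all three colours it recolours along a path $v_0v_1v_2\cdots$ in $G$. Along that path the edges alternate between ``type A'' (same $i$) and ``type B'' (same $j$), and the colours alternate between $1$ and $3$. Two ingredients drive it: the observation that two same-type neighbours of a vertex are adjacent, and a parity argument showing the path never re-enters at $v_0$.

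Under your identification $G = L(H)$, this is exactly the Kempe-chain step in the proof of K\"onig's theorem, written out by hand for $\Delta = 3$. Type A/B adjacencies in $G$ are pairs of edges of $H$ meeting at a left/right vertex. The paper's ``crucial observation'' is just that the edges at a vertex of $H$ form a clique in $L(H)$. Its non-cycling argument is the bipartite parity argument. The only cosmetic difference is that the paper's chain starts at $r_i$ with the colour missing at $c_j$, rather than at $c_j$.

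Your framing buys brevity and a correctness proof resting on a standard theorem. It also generalises immediately: at most $d$ occurrences per index gives a $d$-colouring. The paper's version buys self-containment.

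There is one small slip in your algorithmic step. A path from $c_j$ to $r_i$ in the bipartite graph $H$ has \emph{odd} length, not even. Since it starts with an $\alpha$-edge and alternates, it would end with an $\alpha$-edge at $r_i$, contradicting that $\alpha$ is missing there. In fact every left vertex on the path is entered by an $\alpha$-edge, so $r_i$ cannot lie on it at all. With the parity word corrected, the argument is complete.
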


\begin{proof}[Proof sketch]
We partition the set of edges into type A and type B. An edge between $x_{i, j}$ and $x_{i', j'}$ is of type A if $i=i'$, and of type B if $j=j'$. By construction of the graph, every vertex has at most two neighbours of type A and at most two neighbours of type B. Furthermore, we can make the following crucial observation: for every vertex, if it has two neighbours of type A (resp.\ type B), then those are also connected by an edge of type A (resp.\ B).

We can colour the graph in a greedy manner. In each round, we pick an uncoloured vertex $v_0$. If there is a colour that has not been assigned to any neighbour of $v_0$, we colour $v_0$ with that colour.
If that is not possible, we colour $v_0$ to one of the colours that appear once in its $4$ neighbours, without loss of generality colour $1$.
We label the neighbour of $v_0$ with colour $1$ as $v_1$. If we can recolour $v_1$ to a separate colour, then we are done.
If not, we recolour $v_1$ to a colour that is not $1$ that appears once in $v_1$'s neighbours.
We can repeat this simple process and by a parity argument on edges of type A and type B, we can show that this process never cycles and we reach a valid colouring.
\end{proof}
\begin{figure}
  \centering
  \begin{tikzpicture}[
  scale=1,
  line cap=round, line join=round, thick,
  vtx/.style={circle, fill=black, inner sep=1.6pt, outer sep=0pt}
]

\colorlet{myblue}{blue!70!black}
\colorlet{myred}{red!75!black}
\colorlet{mygreen}{green!60!black}

\node[vtx] (A) at (1,4.2) {};
\node[vtx] (B) at (2.4,3.5) {};

\node[vtx] (C) at (0.4,2.5) {};
\node[vtx] (D) at (-1.4,3.3) {};

\node[vtx] (F) at (-1.7,1.5) {};
\node[vtx] (G) at (-2,-0.5) {};

\node[vtx] (P) at (0,0) {};

\node[vtx] (I) at (2,1) {};
\node[vtx] (J) at (2,-1) {};

\path[myblue]
  (A) edge (B)
  (B) edge (C)
  (C) edge (A)
  (F) edge (G)
  (G) edge (P)
  (P) edge (F);

\path[myred]
  (D) edge (C)
  (C) edge (F)
  (F) edge (D)
  (P) edge (I)
  (I) edge (J)
  (J) edge (P);

\node[above left]  at (A) {$2$};
\node[below right] at (B) {$\cancel{1}\ \textcolor{mygreen}{3}$};
\node[below right]       at (C) {$\cancel{3}\ \textcolor{mygreen}{1}$};

\node[above left]        at (D) {$2$};
\node[left]        at (F) {\textcolor{mygreen}{$3$} $\cancel{1}$};

\node[below left]       at (G) {$2$};
\node[above]       at (P) {\textcolor{mygreen}{$1$}};

\node[above right] at (I) {$2$};
\node[below right] at (J) {$3$};

\node[below, text=gray, font=\normalsize] at (P) {$v_0$};
\node[right,  text=gray, font=\normalsize] at ($(F)+(0,-0.05)$) {$v_1$};
\node[above,  text=gray, font=\normalsize] at ($(C)+(-0.15,0)$) {$v_2$};
\node[above,       text=gray, font=\normalsize] at (B) {$v_3$};

\end{tikzpicture}
  \caption{An example of the recolouring procedure in the proof of \Cref{lem:graph-colouring}. Edges of type A are shown in blue, while edges of type B are shown in red. The original colouring of vertices is shown in black. Updates to the colouring are shown in green.}
  \label{fig:graph-colouring}
\end{figure}
\begin{restatable}{lemma}{lemmaOxsCorrectness}\label{lem:OXS-correctness}
If $\varphi$ is satisfiable, at every SPNE, agent $A$ is allocated exactly one of the literal items for each clause and all goods of the form $g_{3k+2}$.
If $\varphi$ is not satisfiable, at an SPNE $A$ receives a bundle with strictly less utility.
\end{restatable}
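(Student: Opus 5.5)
Throughout, I will argue round by round: the round consists of the picks of $A, P_1, P_2, P_3, Q_1, Q_2, Q_3, R_1, R_2, R_3, B$ in that order, and round $k$ corresponds to clause $C_k$. The tools are \Cref{lemma:lexicographic-spne} and \Cref{lemma:lexicographic-rr}, together with the gadget lemmas \Cref{lemma:pick-one-of-braces} and \Cref{lem:pickOneOf}. Because every agent has lexicographic preferences (with weak ties only for $A$ among the literals of a single clause), each agent's valuation is dominated by the highest-ranked item it can still secure.

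I will first establish an invariant by induction on $k$. At the start of round $k$, the following hold:
\begin{itemize}
\item the goods $g_{3k-2}, g_{3k-1}, g_{3k}$ are available;
\item the gadgets for clauses $C_1, \ldots, C_{k-1}$ have been fully resolved;
\item the $k$-th gadget in each of the chains $P, Q, R$ has its hypotheses satisfied.
\end{itemize}
The non-interaction hypothesis $u_\ell(s) > u_\ell(M\setminus S)$ holds by the power-of-two scaling in the chaining. In round $k$, agent $B$ picks $g_{3k-2}$: this is $B$'s top available item, and $B$ is the last mover, so \Cref{lemma:lexicographic-rr} applies to $B$ and $A$ with respect to the $g$'s. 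I still need to check the exact indices so that $A$ ends up with $g_{3k+2}$-type goods as stated. The role of $B$ is to enforce that $A$ obtains only one item per clause block in a timely way. Using \Cref{lemma:lexicographic-spne}, I will show that $A$ essentially wants one literal of $C_k$ at round $k$. Taking a literal dominates everything below it, since the literals of $C_k$ outrank all later items. If every literal of $C_k$ has already been taken, then $A$ falls back to a $g$.

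The gadget lemmas then translate $A$'s choices into consistency constraints. Suppose $A$ takes $x_j^i$ at its first appearance. Then the $\PickOneOf(x_j^i>\{\cdot\})$ gadget does not pick $x_j^i$, so it gobbles the negated occurrences $\overline{x_j^a}, \overline{x_j^b}$. Conversely, if $A$ leaves $x_j^i$ available, the gadget takes it. The second- and third-appearance gadgets propagate the same information through the item $z$. One point needs care: the gadget acting on clause $C_k$ fires in round $k$ after $A$'s pick. So I must check that each literal item of $C_k$ is still available to $A$ at the start of round $k$ exactly when it is consistent with the earlier choices, and is eaten otherwise. This gives a clean correspondence: the set of literal items $A$ can obtain, one per clause, equals the set of choices consistent with some truth assignment, and this is a case check over the at most four occurrence patterns.

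With the correspondence in hand, the two directions follow.
\begin{itemize}
\item If $\varphi$ is satisfiable, then $A$ has a strategy, namely following a satisfying assignment, that secures one literal in every clause together with the designated $g$'s. By the lexicographic domination of clause $k$'s literals over everything later, every SPNE must give $A$ a literal in every clause, which is exactly the stated bundle.
\item If $\varphi$ is unsatisfiable, then in any play some earliest clause $C_k$ has all its literals removed before $A$ can take one. $A$ then loses an item that outweighs all later items, so its utility is strictly smaller.
\end{itemize}

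The main obstacle is proving the invariant against off-path deviations. $A$ (or a gadget agent) might deliberately pick a later item early to "skip" a turn and desynchronise the gadgets, as the paper warns in \Cref{theorem:exponential-equilibria}. I would handle this by showing that each such deviation forfeits an item ranked above everything gained, again via \Cref{lemma:lexicographic-spne}.
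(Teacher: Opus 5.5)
\textbf{Gap: the round structure.} You take one pass of the turn order $A,P_1,\dots,R_3,B$ per clause. The construction actually needs \emph{two} passes per clause, and this is more than the index bookkeeping you flagged. \Cref{lemma:pick-one-of-braces} and \Cref{lem:pickOneOf} only resolve after every gadget agent has moved twice.

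Here is what happens in the first pass for $C_k$:
\begin{itemize}
\item $A$ takes a literal of $C_k$;
\item the gadgets' first moves remove the other literals of $C_k$;
\item $B$ takes $g_{3k-2}$.
\end{itemize}
At this point $A$'s best available item is $g_{3k-1}$. It outranks every literal of $C_{k+1}$ and is also $B$'s top item, so \Cref{lemma:lexicographic-rr} forces $A$ to take $g_{3k-1}$. Then the gadget agents make their second moves and $B$ takes $g_{3k}$. Only at pass $2k+1$ does $A$ face $C_{k+1}$.

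This is where the goods $g_2, g_5, \ldots$ in the statement come from. It is also the real role of $B$ and the buffer goods: they use up $A$'s second turn so the gadgets can finish. Your invariant already fails at $k=2$. At the start of your second round, the gadgets for $C_1$ still owe their second moves. $A$ is forced to take $g_2$ rather than a literal of $C_2$, and $B$ then takes $g_3$, not $g_{3k-2}=g_4$ as you claim.

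\textbf{Why the delay matters.} The extra pass is essential for your consistency correspondence, not cosmetic. Some negated occurrences are only removed on the gadgets' \emph{second} moves. For example, in $\PickOneOf(a>\{b,c\})$ with $a$ already taken by $A$, item $c$ is taken by $A_1$ only on its second turn. Without the intervening pass, $A$ moves before that. If $c$ lies in the next clause, $A$ could grab it and so use both a literal and its negation.

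\textbf{Fix.} State the invariant at the start of pass $2k-1$, with the clause-$k$ gadgets completing during passes $2k-1$ and $2k$.

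\textbf{A smaller point.} \Cref{lemma:lexicographic-spne} does not directly force $A$'s literal choice, because $A$ is indifferent among the literals of $C_k$. When another equally valued literal remains, the condition $u_A(S_A'+g) > u_A(S_A'\cup M'-g)$ fails. The paper's argument instead is that every literal of $C_k$ that $A$ leaves is a top item of some gadget agent who moves before $A$'s next turn. Since a literal of $C_k$ outweighs everything below it, $A$ must take one now.

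With these corrections, your satisfiable and unsatisfiable conclusions match the paper's argument.
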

From \Cref{lem:OXS-correctness}, we conclude that the promise of \PromiseSPNE{11} is satisfied. Additionally, deciding if the utility of agent $A$ is at least a threshold decides whether $\phi$ is satisfiable. This completes the reduction.
\section{Conclusion}
In this paper, we analyse the structural properties and computational complexity of SPNEs for subadditive agents.
Under strict conditions,  there may be exponentially many SPNE allocations and agents may violate non-bossiness when selecting them.
We show that equilibria need not be EF1 even for three additive agents, but recover the weaker PROP1 notion for additive instances.
On the computational front, we provide the first hardness results for a constant number of subadditive agents.
We show that one submodular agent is enough to make the two-agent case \PSPACE-hard.
Separately, by constructing significantly more involved gadgets, we are able to show \NP-hardness for 11 agents with \OXS utilities.
These are the first results for the computational complexity of SPNEs in round-robin for a constant number of agents since those of \citet{kohler1971class}.
\newpage
\bibliographystyle{plainnat}
\bibliography{arxiv}

\pagebreak
\appendix

\section{Omitted Proofs}
\label{sec:ommitted-proofs}

\subsection{Proof of Exponential Number of SPNE allocations}
\label{sec:exponential-spne-proof}
\begin{table}[h]
    \centering
    \begin{tabular}{lllllll}
    \toprule
        & $u$ & $v$ & $w$ & $x$ & $y$ & $z$ \\ \midrule
    $A$ & 32  & 16   & 8   & 4   & 2 & 1  \\ \midrule
    $B$ & 16   & 32  & 4   & 8   & 2 & 1  \\ \midrule
    $C$ & 8   & 4   & 2   & 32  & 16 & 1  \\ \bottomrule
    \end{tabular}
    \caption{A copy of \Cref{tab:bossy}. By copying this instance $k$ times, we can construct an instance with $\Omega(2^k)$ SPNE allocation.}
    \label{tab:bossy-2}
\end{table}

\exponentialAllocations*

    We construct an instance with agents $A, B, C$ and items $G^i = \{u^i, v^i, w^i, x^i, y^i, z^i\}$ for $i \in \{1,\ldots ,k\}$. We let the available set of items $M$ be $\bigcup_{i \in [k]}G^i$.
    For each agent, the utility for a good $g^i$ is $2^{6(i-1)}$ times the utility of good $g$ in the instance given by \Cref{tab:bossy-2}.
    For example, $u_A(v^6) = 2^{30} \times 16$.

    We claim that because of the lexicographic preferences, each copy of the instance given by \Cref{tab:bossy-2} doubles the number of equilibrium allocations. 
    We define these allocations formally.
    We use shorthands for particular bundles: $B_i^0 = \{v^i, y^i\}$, $B_i^1= \{v^i, x^i\}$, $C_i^0 = \{x^i, z^i\}$ and $C_i^1 = \{y^i, z^i\}$.
    In every allocation, agent $A$ always receives bundle $\{u^i, w^i \mid i\in [k]\}$.
    For each binary string $\vec{t} \in \{0, 1\}^n$ we define an allocation for agents $B, C$: agent $B$ receives $\bigcup_{i \in [k]} B_i^{t_i}$ and agent $C$ receives  $\bigcup_{i \in [k]} C_i^{t_i}$.
    This defines $2^k$ allocations. We claim that all such allocations are allocations that occur at some SPNE.

    The most natural direction would be to prove that all items in $G^k$ are picked before any other items. 
    Surprisingly, that is not the case. Hence, we need a more complex argument.

    \begin{lemma}
        \label{lemma:utility-of-A}
        At an SPNE agent $A$ receives utility of at least $u_A(\{u^k, w^k\})$.
    \end{lemma}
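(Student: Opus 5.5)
The plan is a deviation argument. Fix an SPNE $\boldsymbol{\sigma}$. I will exhibit a strategy $\sigma_A'$ for $A$ such that $(\boldsymbol{\sigma}_{-A},\sigma_A')$ gives $A$ utility at least $u_A(\{u^k,w^k\})$. Since $\boldsymbol{\sigma}$ is a Nash equilibrium of the whole game, $u_A(\boldsymbol{\sigma}) \geq u_A(\boldsymbol{\sigma}_{-A},\sigma_A')$, and the bound follows. The deviation is: \emph{pick $u^k$ in the first turn, and pick $w^k$ in the second turn if it is still available}. The work is to show that $w^k$ is indeed still available at $A$'s second turn.

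\textbf{Step 1: the subgame after $A$ takes $u^k$.} Let $q$ be the state after $A$ has picked $u^k$, so $B$ is to move. The key observation is that all items in $G^1,\ldots,G^{k-1}$ are negligible compared with copy $k$. For every agent $I$, $u_I\bigl(\bigcup_{i<k}G^i\bigr) \leq 63\cdot 2^{6(k-2)} < 2^{6(k-1)}$, which is less than the smallest value $u_I(z^k)$ of any item in $G^k$. Because $\boldsymbol{\sigma}$ is subgame perfect, its restriction to $G_q$ is an SPNE of $G_q$, so \Cref{lemma:lexicographic-rr} applies there. I will apply it with $I=B$, $J=A$ and $g=v^k$, using the current bundles $S_B'=\varnothing$ and $S_A'=\{u^k\}$.
\begin{itemize}
\item For $B$: $u_B(v^k)=32\cdot 2^{6(k-1)}$. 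All other available items are worth at most $(16+4+8+2+1)\cdot 2^{6(k-1)}$ minus $u_B(u^k)=16\cdot 2^{6(k-1)}$, i.e.\ $(4+8+2+1)\cdot 2^{6(k-1)}$, plus the lower copies. This is less than $16\cdot 2^{6(k-1)}+2^{6(k-1)} < 32\cdot 2^{6(k-1)}$.
\item For $A$: $u_A(\{u^k,v^k\}) = 48\cdot 2^{6(k-1)}$, whereas $u_A(\{u^k\}\cup M'-v^k) < (32+8+4+2+1+1)\cdot 2^{6(k-1)} = 48\cdot 2^{6(k-1)}$.
\end{itemize}
Hence $B$ picks $v^k$ at $q$.

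\textbf{Step 2: $C$'s move.} At the next state, $C$ is to move with $S_C'=\varnothing$ and $S_B'=\{v^k\}$. I will apply \Cref{lemma:lexicographic-rr} with $I=C$, $J=B$ and $g=x^k$.
\begin{itemize}
\item For $C$: $u_C(x^k)=32\cdot 2^{6(k-1)}$, while the rest of the available items are worth less than $(2+16+1+1)\cdot 2^{6(k-1)}$.
\item For $B$: $u_B(\{v^k,x^k\}) = 40\cdot 2^{6(k-1)}$, while $u_B(\{v^k\}\cup M'-x^k) < (32+4+2+1+1)\cdot 2^{6(k-1)} = 40\cdot 2^{6(k-1)}$.
\end{itemize}
So $C$ picks $x^k$, and $w^k$ is still available at $A$'s second turn. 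Under $\sigma_A'$, $A$ takes it. By monotonicity, $A$ finishes with utility at least $u_A(\{u^k,w^k\})$ whatever happens afterwards.

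\textbf{Main obstacle.} The main care point is the strict inequalities in the two applications of \Cref{lemma:lexicographic-rr}. The tight cases are the $A$-side in Step 1 and the $B$-side in Step 2. In both, the slack comes only from the lower copies summing to strictly less than $2^{6(k-1)}$, so the geometric tail bound has to be verified explicitly. I also need the remark that the lemma is invoked inside the subgame $G_q$ (and its successor), which is legitimate precisely because SPNE strategies remain SPNE in every subgame, including subgames reached only off the equilibrium path by $A$'s deviation.
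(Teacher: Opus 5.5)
Your proposal is correct and follows the paper's own argument: deviate to $u^k$, apply \Cref{lemma:lexicographic-rr} to force $B$ to take $v^k$ and then $C$ to take $x^k$, and take $w^k$ at $A$'s second turn; you additionally spell out the Nash-deviation step and the numerical inequalities, which the paper leaves implicit. One small slip: the lower copies are worth $63\sum_{i=1}^{k-1}2^{6(i-1)} = 2^{6(k-1)}-1$ to each agent, whereas your $63\cdot 2^{6(k-2)}$ counts only copy $k-1$. The bound $<2^{6(k-1)}$ that you actually use is still correct, so all four strict inequalities hold.
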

    \begin{proof}
        If agent $A$ picks $u^k$ as his first choice, by repeated application of \Cref{lemma:lexicographic-rr}, $B$ picks $v^k$ and $C$ picks $x^k$ and so agent $A$ can now pick item $w^k$ as required.
    \end{proof}

    \begin{lemma}
        \label{lemma:A-must-pick-Gk}
        Agent $A$ cannot pick an item not in $G^k$ in his first choice.
    \end{lemma}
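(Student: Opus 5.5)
The plan is a contradiction argument that compares agent $A$'s best achievable utility with and without an early pick from $G^k$. Since preferences are lexicographic with scaling factor $2^{6(i-1)}$, every item of $G^k$ is worth more to every agent than all items of $\bigcup_{i<k} G^i$ together. Within $G^k$ the original powers-of-two table applies.

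Suppose, for contradiction, that at some SPNE agent $A$'s first pick is an item $h \notin G^k$. By \Cref{lemma:utility-of-A}, $A$ must end with utility at least $u_A(\{u^k,w^k\})$. By the lexicographic scaling, this forces $A$'s final bundle to contain $G^k$-items worth at least $u_A(\{u^k,w^k\}) = 40\cdot 2^{6(k-1)}$. From the table ($u{=}32$, $v{=}16$, $w{=}8$, $x{=}4$, $y{=}2$, $z{=}1$), any bundle of $G^k$-items reaching $40$ contains $u^k$ and at least one of $v^k, w^k$. So it suffices to show that after $A$ wastes her first pick, $A$ cannot secure $u^k$ together with one of $v^k, w^k$ within $G^k$ in any SPNE continuation.

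The key step is the continuation after $A$ picks $h$: the state is $B$'s turn with all of $G^k$ available. I would apply \Cref{lemma:lexicographic-spne} and \Cref{lemma:lexicographic-rr} to $B$ and $C$.

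\textbf{Agent $B$.} Item $v^k$ satisfies $u_B(v^k) > u_B(M' - v^k)$ at $B$'s turn, since $32 > 16+4+8+2+1 = 31$ plus lower-copy items worth less than one unit of $G^k$. Hence $B$ must end up with $v^k$ in every SPNE.

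\textbf{Agent $C$.} Similarly, $C$'s value $32$ for $x^k$ exceeds $8+4+2+16+1 = 31$, so $C$ must end up with $x^k$.

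Consequently $A$ can never get $v^k$, and her remaining route to $40$ is $\{u^k, w^k\}$. I then need to show that $B$ or $C$ can profitably take $u^k$ or $w^k$ before $A$ collects both.

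Here is the argument I would try first. $B$ could pick $u^k$ immediately, worth 16 to $B$, and still obtain $v^k$ later; this works if $v^k$ remains protected for $B$ because $A$ values $u^k$ more. The cleaner route is a count of remaining picks. $A$ has already spent one pick on $h$, so within the relevant early rounds she has fewer effective picks in $G^k$. I would enumerate $B$'s options at this state: picking $u^k$ first, then securing $v^k$ on his next turn, which $A$ has no incentive to block since $v^k$ alone gives her less than $u^k$ would have. I would check this against the lemmas to conclude that $B$'s utility from taking $u^k$ strictly exceeds his utility in any continuation where $A$ gets $u^k$. Therefore every SPNE continuation denies $A$ the pair $\{u^k,w^k\}$, contradicting \Cref{lemma:utility-of-A}.

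The main obstacle is this last step: rigorously showing that $B$ (or $C$) strictly prefers grabbing $u^k$ or $w^k$, given that SPNE continuations may interleave lower-copy items to effectively skip turns. This is exactly the subtlety flagged in the proof sketch of \Cref{theorem:exponential-equilibria}. I would handle it by comparing only $G^k$-components of utilities, which dominate lexicographically, and treating any pick outside $G^k$ as a pass within $G^k$. This reduces the question to a small finite game on the six items of $G^k$ with passes allowed, which can be checked by a short case analysis.
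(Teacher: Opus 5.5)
Your setup is fine. By \Cref{lemma:utility-of-A}, $A$ needs at least $40\cdot 2^{6(k-1)}$, so she must hold $u^k$ and one of $v^k,w^k$. By \Cref{lemma:lexicographic-spne}, $B$ must hold $v^k$. Hence $\{u^k,w^k\}\subseteq S_A$.

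There is a genuine gap: the step that should produce the contradiction is never carried out, and the route you sketch for it fails. Suppose $B$ takes $u^k$ at his first turn. Then $v^k$ is either taken by $C$, or at $A$'s next turn it satisfies the hypothesis of \Cref{lemma:lexicographic-rr} for both agents:
\begin{itemize}
\item for $A$, $v^k$ is worth $16$ units (of $2^{6(k-1)}$), while everything else left is worth at most $8+4+2+1$ plus less than one unit;
\item for $B$, his bundle is worth $48$ units with $v^k$ and less than $32$ without it.
\end{itemize}
So $A$ must take $v^k$; once $u^k$ is gone, $v^k$ is exactly what she wants. Either way $B$ finishes below $32$ units. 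This move is therefore never profitable for $B$ (it is precisely what \Cref{lemma:lexicographic-spne} excludes), so the comparison you intend to prove goes the wrong way. Your side claim that $C$ must end with $x^k$ is also unjustified: \Cref{lemma:lexicographic-spne} applies at $C$'s turn only if $B$ has not already taken $x^k$. Finally, ``reduce to a six-item game with passes and check by cases'' is a plan, not a proof, and the reduction itself would need an argument.

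The missing idea is to use guaranteed (security) payoffs instead of tracking the continuation. In an SPNE, every player's payoff in the subgame after $A$ picks $h$ is at least what greedy play guarantees him there.
\begin{itemize}
\item $B$ takes $v^k$. At most two further items of $G^k$ disappear before his next turn, so he still gets at least $w^k$. He thus guarantees $u_B(\{v^k,w^k\}) = 36$ units.
\item $C$ takes $x^k$, or $y^k$ if $B$ took $x^k$, and then still gets at least $w^k$. He thus guarantees $u_C(\{y^k,w^k\}) = 18$ units.
\end{itemize}
Now combine these with $u^k,w^k\in S_A$ and $v^k\in S_B$. $B$'s bound forces $x^k\in S_B$, since $y^k,z^k$ add only $3<4$ units. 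That leaves $C$ with at most $y^k,z^k$ from $G^k$, worth $17$ units plus less than one, contradicting his bound of $18$. This is the paper's argument, and it requires no analysis of what is actually played.
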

    \begin{proof}
        Suppose agent $A$ picks in his first turn an item not in $G^k$.
        Then, agent $B$ can guarantee himself utility of $u_B(v^k + w^k)$ and agent $C$ can guarantee himself utility of $u_C(y^k + w^k)$ simply by picking their favourite available item.
        But, in an SPNE agent $A$ must receive utility at least $u_A(u^k + w^k)$ and agent $B$ must receive $v^k$.
        Hence, $\{u^k, w^k\}$ is in $A$'s bundle.
        So, $\{v^k, x^k\}$ must be in $B$'s bundle.
        But this contradicts that agent $C$ receives utility that is at least $u_C(y^k + w^k)$.
    \end{proof}

    \begin{lemma}
        Agent $A$ must pick item $u^k$ or $w^k$ in his first turn.
    \end{lemma}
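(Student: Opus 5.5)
By \Cref{lemma:A-must-pick-Gk}, agent $A$'s first pick lies in $G^k$, so only six candidates need to be considered: $u^k,v^k,w^k,x^k,y^k,z^k$. The plan is to rule out $v^k$, $x^k$, $y^k$ and $z^k$ using two facts. First, by \Cref{lemma:utility-of-A}, $A$ must end up with utility at least $u_A(\{u^k,w^k\})$. Because preferences are lexicographic and the scaling by $2^{6(k-1)}$ makes every item of $G^k$ worth more to every agent than all of $\bigcup_{i<k}G^i$ together, this forces $u^k\in S_A$. It then also forces $S_A$ to contain $w^k$, or some item of $G^k$ that $A$ ranks at least as high as $w^k$, namely $v^k$. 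Second, $B$ is owed utility: the same lexicographic comparison shows that in any SPNE $B$ can secure a bundle containing $v^k$, by picking $v^k$ on his turn whenever it is available.

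Take first the case where $A$ picks $v^k$. Then $u^k$ is still available when $B$ moves. Since $A$ must eventually receive $u^k$, $B$ cannot take it. I would apply \Cref{lemma:lexicographic-spne} to $B$ on the remaining items to show that $B$ instead obtains $x^k$ together with the best remaining item. I would then check that $C$, who now lacks $x^k$, obtains $y^k$. The contradiction should come from comparing $B$'s outcome with the utility $u_B(v^k+\ldots)$ that $B$ could have secured, or from checking directly that $A$'s deviation to $u^k$ is profitable. That deviation gives $A$ at least $\{u^k,w^k\}$ by \Cref{lemma:utility-of-A}, whereas picking $v^k$ first lets $B$ or $C$ snatch $u^k$ whenever that is better for them.

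The cases $x^k$, $y^k$ and $z^k$ I would handle uniformly. If $A$ picks one of these low items, then $u^k$, $v^k$ and $w^k$ all remain available. In the resulting subgame $A$ still needs both $u^k$ and $w^k$ (here $v^k$ cannot substitute, since $B$ secures it). The argument mirrors \Cref{lemma:A-must-pick-Gk}. $B$ can guarantee $v^k$ plus a further high item by picking greedily. $C$ can guarantee $x^k$ or $y^k$ plus $w^k$-level value: $C$ ranks $w^k$ above $z^k$, so if $A$ needs $w^k$, $C$ can grab $w^k$ or extract a concession. Now count the items of $G^k$ left for $B$ and $C$ once $A$ holds $\{u^k,w^k\}$ together with the low item it already picked. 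Only three items remain, so at least one of the guarantees to $B$ or $C$ must fail. For instance, if $A$ took $x^k$, then $B$ must have $v^k$ and $C$ wants $y^k$, and the remaining item $z^k$ cannot give both of them what they are owed.

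The main obstacle I expect is the comparison of utilities across copies. Agents may take items from lower copies $G^i$, $i<k$, in order to delay, as noted in the proof sketch, and I need every lexicographic lower bound to survive this. I would handle it by always reasoning with \Cref{lemma:lexicographic-spne} on $G^k$ only, together with the domination $u_I(g^k)>u_I(\bigcup_{i<k}G^i)$. The delicate point is bookkeeping $C$'s guarantee precisely in the $x^k$ and $z^k$ cases, so I would carry out those two cases explicitly.
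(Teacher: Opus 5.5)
\textbf{The $x^k,y^k,z^k$ cases follow the paper's route.} You use greedy guarantees for $B$ and $C$, note that $A$ needs both $u^k$ and $w^k$, and observe that too few items of $G^k$ remain to satisfy both $B$ and $C$. The $v^k$ case, however, has a genuine gap.

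\textbf{The gap in the $v^k$ case.} The comparison you propose with ``the utility $u_B(v^k+\ldots)$ that $B$ could have secured'' is unavailable, because in this case $A$ has just taken $v^k$. Nor can \Cref{lemma:lexicographic-spne} show that $B$ ``obtains $x^k$'': its hypothesis fails for $g=x^k$, since $u^k$ is still available and $u_B(u^k)>u_B(x^k)$.

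The missing step is that, once $v^k$ is gone, $u^k$ is lexicographically dominant for $B$. Indeed $u_B(u^k)=16\cdot 2^{6(k-1)}$, while $u_B\bigl(\{w^k,x^k,y^k,z^k\}\cup\bigcup_{i<k}G^i\bigr)=16\cdot 2^{6(k-1)}-1$. Likewise $u_A(\{v^k,u^k\})=48\cdot 2^{6(k-1)}>u_A(M-u^k)$. So \Cref{lemma:lexicographic-rr} with $I=B$, $J=A$, $g=u^k$ (or already \Cref{lemma:lexicographic-spne} applied to $B$) forces $B$ to take $u^k$. Then $A$'s payoff is at most $u_A(M-u^k)<u_A(\{u^k,w^k\})$, contradicting \Cref{lemma:utility-of-A}. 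This is the paper's one-line argument.

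Your sentence ``since $A$ must eventually receive $u^k$, $B$ cannot take it'' is exactly where the contradiction should be drawn; it should not serve as a premise from which to deduce $B$'s bundle. The remark that $B$ or $C$ can ``snatch $u^k$ whenever that is better for them'' points the right way, but it never verifies the inequality that makes $B$ actually do so.

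\textbf{Guarantees in the other cases.} Your uniform claim that $C$ can secure ``$x^k$ or $y^k$ plus $w^k$-level value'' is wrong in the $x^k$ and $y^k$ cases. There $B$ may take $y^k$ (resp.\ $x^k$), so greedy play only secures $u_C(\{u^k,z^k\})$ for $C$, as the paper states. The correct bounds are:
\begin{itemize}
\item In the $x^k$ case, $B$ secures $u_B(\{v^k,y^k\})$ and $C$ secures $u_C(\{u^k,z^k\})$. With $u^k$ held by $A$, both need $y^k$.
\item In the $y^k$ case, $B$ secures $u_B(\{v^k,w^k\})$, so with $w^k$ held by $A$ he needs $x^k$. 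This leaves $C$ only $z^k$ plus lower copies, below $u_C(\{u^k,z^k\})$.
\item In the $z^k$ case, $B$ again secures $u_B(\{v^k,w^k\})$ and so needs $x^k$. Here you genuinely need $C$'s guarantee $u_C(\{y^k,w^k\})$, because $C$ is left with only $y^k$ plus lower copies.
\end{itemize}
With these exact bounds your counting argument coincides with the paper's.
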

    \begin{proof}
        By \Cref{lemma:A-must-pick-Gk}, we know that agent $A$ picks an item in $G^k$.
        If agent $A$ picks $v^k$ then agent $B$ must pick item $u^k$ in his turn by \Cref{lemma:lexicographic-rr}.
        So, $A$'s utility would be low contradicting \Cref{lemma:utility-of-A}.
        If agent $A$ picks one of $x^k, y^k, z^k$ then agent $A$ must also pick both $u^k$ and $v^k$ in later turns for \Cref{lemma:utility-of-A} to be fulfilled.
        But, that is not possible. Suppose for concreteness agent $A$ picked $x^k$, then Agent $B$ can guarantee himself utility of $u_B(v^k + y^k)$, similarly agent $C$ can guarantee himself utility of $u_C(u^k + z^k)$.
        However, at an SPNE agent $A$ must receive item $u^k$. So agent $C$ must receive utility of at least $u_C(y^k)$.
        So, agent $B$ cannot receive $y^k$ and so he must receive utility of at least $u_B(v^k + w^k)$. But $A$ must receive $w^k$ and $u^k$ leading to a contradiction.
        The cases where $A$ picks $y^k$ and $z^k$ in the first attempt are similar.
    \end{proof}

    \begin{lemma}
        If agent $A$ picks item $u^k$ in his first turn, then at any SPNE continuation, after each player has had two turns, agent $A$ has a partial allocation of $\{u^k, w^k\}$, agent $B$ a partial allocation $\{v^k, y^k\}$ and agent $C$ a partial allocation $\{x^k, z^k\}$. All items in $\bigcup_{i=1}^{k-1} G^i$ are unallocated.
    \end{lemma}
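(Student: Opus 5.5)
The plan is to run the same chain of forced moves as in the single-copy analysis of \Cref{lemma:multiplicity-of-equilibria}, now inside the subgame $G_q$ where $q$ is the state after $A$ picks $u^k$. The one new ingredient is that every item of $G^k$ is worth more to each agent than all of $\bigcup_{i<k} G^i$ together. This holds because the weights of $G^{k}$ are scaled by $2^{6(k-1)}$ and each copy contributes at most $63\cdot 2^{6(i-1)}$.

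\textbf{Step 1 ($B$ takes $v^k$).} At $q$ agent $B$ holds $\varnothing$. For $B$ and $A$ and the item $v^k$, check the hypotheses of \Cref{lemma:lexicographic-rr}.
\begin{itemize}
\item For $B$: $u_B(v^k)=32\cdot 2^{6(k-1)}$, while all other available items are worth at most $(16+8+2+1)2^{6(k-1)}+u_B(\bigcup_{i<k}G^i) < 32\cdot 2^{6(k-1)}$.
\item For $A$ (holding $u^k$): $u_A(u^k+v^k)$ exceeds $u_A(u^k)+u_A(\{w^k,x^k,y^k,z^k\})+u_A(\bigcup_{i<k}G^i)$, since $16>8+4+2+1$ and the lower copies are negligible.
\end{itemize}
So $B$ must pick $v^k$ at his first turn.

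\textbf{Step 2 ($C$ takes $x^k$).} Apply \Cref{lemma:lexicographic-rr} to $C,B$ and item $x^k$.
\begin{itemize}
\item For $C$: $32>16+2+1$ plus the lower copies.
\item For $B$ (holding $v^k$): $8>4+2+1$ plus the lower copies.
\end{itemize}
So $C$ picks $x^k$.

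\textbf{Step 3 ($A$ takes $w^k$).} Now $A$ holds $u^k$ and the remaining items of $G^k$ are $w^k,y^k,z^k$. Applying \Cref{lemma:lexicographic-rr} with $A,B$ and $w^k$ fails, because $B$ values $w^k$ at only $4<2+1+\dots$? No: $4>2+1$, so that part works. The real issue is that $B$ is holding $\{v^k\}$ and the inequality is $u_B(v^k+w^k)>u_B(v^k+y^k+z^k+\text{rest})$, which holds since $4>3+\epsilon$.

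Both $A$ and $B$ therefore satisfy the hypothesis for $w^k$ ($A$: $8>2+1+\epsilon$), so $A$ picks $w^k$.

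\textbf{Step 4 ($B$ takes $y^k$).} Apply \Cref{lemma:lexicographic-rr} to $B,C$ and $y^k$.
\begin{itemize}
\item For $B$: $2>1+\epsilon$.
\item For $C$: $16>1+\epsilon$.
\end{itemize}
So $B$ picks $y^k$.

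\textbf{Step 5 ($C$ takes $z^k$).} Only $z^k$ remains in $G^k$. Using \Cref{lemma:lexicographic-spne} for $C$ (since $1\cdot 2^{6(k-1)}$ exceeds all lower copies) forces $C$ to end up with $z^k$. To conclude that he picks it \emph{now}, I would argue that $z^k$ is also the item that the next agent $A$ must receive under \Cref{lemma:lexicographic-spne}, since $u_A(z^k)$ dominates the lower copies. Then \Cref{lemma:lexicographic-rr} applies to $C,A$. In every step, none of the picked items lies in a lower copy, so $\bigcup_{i<k}G^i$ remains unallocated.

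\textbf{Main obstacle.} The delicate part is verifying the strict inequalities, including the marginal-bundle form $u_I(S_I'+g)>u_I(S_I'\cup M'-g)$, with the whole lower-copy mass counted. Each time this reduces to a power-of-two domination within $G^k$ plus the bound $u_I(\bigcup_{i<k}G^i)<2^{6(k-1)}$. I would state that bound once as a preliminary claim and then dispatch the five steps.
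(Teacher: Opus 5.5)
Your proposal is correct and is essentially the paper's argument: the paper simply says that the six picks $u^k, v^k, x^k, w^k, y^k, z^k$ are forced by repeated application of \Cref{lemma:lexicographic-rr}, and your preliminary bound $u_I(\bigcup_{i<k} G^i) = 2^{6(k-1)}-1 < 2^{6(k-1)}$ is exactly what makes each hypothesis of that lemma hold. One small bookkeeping slip: in Step 1, $B$'s other available items are $w^k, x^k, y^k, z^k$ (worth $4+8+2+1$ in units of $2^{6(k-1)}$), not $16+8+2+1$, since $u^k$ is already taken; the inequality holds either way.
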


    \begin{proof}
        This is proved by repeatedly applying \Cref{lemma:lexicographic-rr} as in the proof of \Cref{lemma:multiplicity-of-equilibria}.
    \end{proof}

    \begin{lemma}
        If agent $A$ picks $w^k$ in his first turn, agent $B$ cannot pick an item that is not in $G^k$.
    \end{lemma}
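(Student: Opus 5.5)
The plan is to argue by contradiction, in the style of \Cref{lemma:A-must-pick-Gk}: compare what $B$ can guarantee by staying inside $G^k$ with what he can obtain after an outside pick. Fix the state after $A$ has picked $w^k$. There $B$ holds nothing, and the available items are $u^k, v^k, x^k, y^k, z^k$ together with $\bigcup_{i<k} G^i$. The basic tool is the scaling $u_I(g^i)=2^{6(i-1)}u_I(g)$. Because of it, for every agent $I$ any single item of $G^k$ that $I$ values positively is worth more to $I$ than all of $\bigcup_{i<k}G^i$ together. So comparisons between bundles are decided first by their intersection with $G^k$, lexicographically.

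\textbf{Step 1 (what $B$ can guarantee).} Suppose $B$ picks $v^k$ immediately. Before $B$'s next turn, $C$ and $A$ remove at most two items of $\{u^k,x^k,y^k,z^k\}$. Hence $B$ can then take his best remaining one, which is at worst $y^k$. So in every SPNE continuation of this state, $B$'s $G^k$-part must be worth at least $u_B(\{v^k,y^k\})$. By the lexicographic structure, this forces $v^k\in S_B$ together with one of $u^k,x^k,y^k$.

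\textbf{Step 2 ($A$ gets $u^k$, and the items $B$ loses).} Suppose $B$ picks some $h\notin G^k$. First I would show that $A$ still obtains $u^k$. It is $A$'s top remaining item, so as soon as $A$ moves with $u^k$ available, \Cref{lemma:lexicographic-spne} gives $u^k\in S_A$. The only way to block this is for $C$ to take $u^k$ immediately, and I would dispose of that case by comparing it with $C$'s alternative of taking $x^k$, using $C$'s guarantee $u_C(\{x^k,\cdot\})$ obtained by the same counting as in Step 1. Next I would show that $C$ and $A$, who move consecutively before $B$ can act again, end up with items that leave $B$ with less than $v^k$ plus one of $u^k,x^k,y^k$ in $G^k$. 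The key sub-case is whether $C$'s SPNE move takes $x^k$ or $y^k$. In each branch I would apply \Cref{lemma:lexicographic-rr} to the pair $(C,A)$ or $(A,B)$ on the relevant top item. I would also use \Cref{lemma:agents-are-greedy} for the final turns within $G^k$, to pin down which of $\{v^k,x^k,y^k\}$ survive.

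\textbf{Step 3 (contradiction).} Once $B$'s $G^k$-part after picking $h$ is strictly below $\{v^k,y^k\}$, the gain from $h$ and from any other lower-copy items cannot compensate, by the scaling. This contradicts Step 1, so $B$ must pick inside $G^k$.

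The main obstacle is Step 2, namely pinning down $C$'s behaviour once $B$ has effectively skipped a turn. Inside $G^k$, $C$'s preference order is $x>y>u>v>z$, which conflicts only weakly with $A$ and $B$. Moreover $C$ could himself answer with an outside pick, which is the skip-a-turn phenomenon the paper warns about. I would first try to show that any outside pick by $C$ is also dominated, using the same guarantee-versus-achievable comparison. If that works, the case analysis reduces to the finitely many orders in which $x^k,y^k,u^k,v^k$ can be taken, each of which is settled by the lemmas above.
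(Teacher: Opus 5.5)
Your plan is sound and can be completed, but it combines the ingredients differently from the paper. Both arguments use the scaling observation, $B$'s guarantee $u_B(\{v^k,y^k\})$ (take $v^k$, then the best remaining item), and $C$'s guarantee $u_C(\{x^k,v^k\})$ after $B$'s outside pick.

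\emph{How the paper concludes.} The paper gets $u^k\in S_A$ from \Cref{lemma:utility-of-A}, i.e.\ from $A$'s equilibrium payoff at the root. It then simply intersects three requirements:
\begin{itemize}
\item $A$ needs $u^k$;
\item $B$ needs $v^k$ and one of $x^k,y^k$;
\item $C$ needs $x^k$ and one of $v^k,y^k$.
\end{itemize}
These cannot all hold, so the paper never determines what $C$ actually plays.

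\emph{How you conclude.} You prove $u^k\in S_A$ inside the subgame, via \Cref{lemma:lexicographic-spne} and by ruling out $C$ grabbing $u^k$. To close that case: $A$ must then take $v^k$ by \Cref{lemma:lexicographic-rr} applied to $(A,B)$. Next $B$ must take $x^k$ by the same lemma applied to $(B,C)$. So $C$ ends strictly below $u_C(\{x^k,v^k\})$. This costs a little more work but gives a stronger statement: your conclusion holds at the node after $w^k$ in every SPNE. The paper's appeal to \Cref{lemma:utility-of-A} only covers the case where $w^k$ is actually $A$'s equilibrium move, which is all the surrounding chain of lemmas needs.

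The rest of your Step~2 is unnecessary. This covers the case split on whether $C$ takes $x^k$ or $y^k$, and your main obstacle of $C$ answering with an outside pick. Once $u^k\in S_A$ is known, $C$'s guarantee binds whatever $C$ actually plays. So $C$ must end with $x^k$ and one of $v^k,y^k$. Combined with $B$ needing $v^k$ and one of $x^k,y^k$, the paper's intersection argument applies verbatim inside the subgame and closes Step~2 in one line. The skip-a-turn behaviour of $C$ never needs separate treatment.
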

    \begin{proof}
        By playing greedily, agent $B$ can guarantee himself a payoff of $u_B(v^k + y^k)$. If player $B$ does not pick an item in $G^k$, agent $C$ can guarantee himself a payoff of $u_C(x^k + v^k)$. But, agent $B$ must receive item $v^k$ and agent $A$ must receive item $u^k$ at an SPNE.
        Hence, agent $C$ must receive items $x^k$ and $y^k$.
        So, agent $B$ must receive payoff of at least $u_B(v^k +w^k)$.
        But this is impossible if agent $A$ must receive payoff of $u_A(u^k + w^k)$ by \Cref{lemma:utility-of-A}.
    \end{proof}

    \begin{lemma}
        If agent $A$ picks $w^k$ in his first turn, agent $B$ must pick item $x^k$.
    \end{lemma}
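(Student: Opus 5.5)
By the previous lemma, after $A$ picks $w^k$ agent $B$ picks some item of $G^k\setminus\{w^k\}=\{u^k,v^k,x^k,y^k,z^k\}$. I would rule out $u^k$, $v^k$, $y^k$ and $z^k$ one by one, using the same guarantee-and-contradiction pattern as the preceding lemmas. The ingredients are \Cref{lemma:utility-of-A} (agent $A$ gets at least $u_A(\{u^k,w^k\})$ at any SPNE), the lexicographic scaling (within each agent's ordering, any item of $G^k$ outweighs all of $\bigcup_{i<k}G^i$), and \Cref{lemma:lexicographic-spne} / \Cref{lemma:lexicographic-rr} to force picks.

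\textbf{Case $u^k$.} If $B$ picks $u^k$, then $A$ can no longer obtain $u^k$. Since $A$ already holds $w^k$, her best attainable bundle is at most $\{w^k,v^k\}$ together with lower-copy items, and its value is below $u_A(\{u^k,w^k\})$ because $32>16+8+\text{(everything lower)}$. This contradicts \Cref{lemma:utility-of-A}.

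\textbf{Case $v^k$.} After $A$ takes $w^k$ and $B$ takes $v^k$, agent $C$ moves. By \Cref{lemma:lexicographic-spne}, $C$ must end up with $x^k$: we have $u_C(x^k)>u_C(\text{rest})$, since $32>8+4+16+1+\text{lower}$ is false, so I would check this more carefully. Instead, I will argue via guarantees. $C$ can guarantee $u_C(\{x^k,y^k\})$-level utility by taking $x^k$ now and $y^k$ or $u^k$ later. $A$ must still obtain $u^k$. The check is whether $B$ can then secure $y^k$; this must yield a contradiction with $B$'s greedy guarantee $u_B(\{v^k,x^k\})$. Concretely, had $B$ picked $x^k$ instead, then $A$'s forced pick of $u^k$ would leave $B$ able to take $v^k$ next turn, since $A$ prefers $u^k$ and $C$ prefers $y^k$. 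That gives $B$ utility at least $u_B(\{v^k,x^k\})=40\cdot 2^{6(k-1)}$. Picking $v^k$ first, on the other hand, lets $C$ take $x^k$, leaving $B$ at most $v^k$ plus $u^k$ or $y^k$. Since $A$ must get $u^k$, $B$ is left with at most $34\cdot 2^{6(k-1)}+\text{lower}<40\cdot 2^{6(k-1)}$. So $v^k$ is not a best response.

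\textbf{Cases $y^k,z^k$.} The same deviation argument applies. If $B$ instead picks $x^k$, I claim $B$ secures $\{x^k,v^k\}$: $C$ then takes $y^k$ (its best remaining item, or forced via \Cref{lemma:lexicographic-rr}), and $A$ takes $u^k$. Picking $y^k$ or $z^k$ leaves $x^k$ to $C$, who values it most. $B$ then gets at most $v^k$ plus an item of value below $8$ in $G^k$, again strictly less than $40\cdot 2^{6(k-1)}$. The ordering between copies makes lower copies irrelevant.

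\textbf{Main obstacle.} The hard step is justifying the deviation claim: after $B$ picks $x^k$, every SPNE continuation really gives $B$ the item $v^k$. $C$ and $A$ could in principle delay by picking lower-copy items, as the paper warns. My plan is to handle this with the guarantee lemmas: $A$ must receive $u^k$ by \Cref{lemma:utility-of-A}, $C$ can guarantee $y^k$, and if $v^k$ went to $A$ or $C$ this would contradict those guarantees through the lexicographic weights.
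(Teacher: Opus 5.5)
\textbf{Verdict: there is a genuine gap.} Your decomposition is the paper's. On the equilibrium path $A$ must end with $u^k$ (\Cref{lemma:utility-of-A}), so from $G^k$ agent $B$ can hope for at most $\{v^k,x^k\}$. Picking $u^k$ contradicts \Cref{lemma:utility-of-A}, picking $v^k$, $y^k$ or $z^k$ leaves $x^k$ to $C$, and picking $x^k$ should secure $v^k$. Using \Cref{lemma:utility-of-A} in the branches you refute is legitimate, because under the contradiction hypothesis those branches are the equilibrium path.

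The gap is the step you yourself flag, and your plan for it does not work as stated.
\begin{itemize}
\item The branch where $B$ deviates to $x^k$ is off-path, so \Cref{lemma:utility-of-A} says nothing there. Also, $A$ cannot secure $u^k$ on his own there, because $C$ moves before $A$'s second turn.
\item ``$C$ can guarantee $y^k$'' does not by itself stop $C$ from spending his turn on $u^k$ or $v^k$ and collecting $y^k$ later.
\end{itemize}
The missing idea, which is what the paper's proof uses, is why $C$ cannot afford that. If $C$ takes $u^k$ (resp.\ $v^k$), \Cref{lemma:lexicographic-rr} forces $A$ to take $v^k$ (resp.\ $u^k$) immediately. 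Then $B$, holding $x^k$, must take $y^k$, since $u_B(x^k+y^k)$ exceeds $u_B$ of $x^k$ together with everything else still available. So $C$ would not get $y^k$. This contradicts \Cref{lemma:lexicographic-spne} at $C$'s turn: in units of $2^{6(k-1)}$, $16>8+4+1$ plus the lower copies, which total less than one unit.

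You also need that $A$ does not take $v^k$ while $u^k$ is available. Otherwise \Cref{lemma:lexicographic-rr} makes $B$ take $u^k$, contradicting \Cref{lemma:lexicographic-spne} for $A$. Only then does $v^k$ survive to $B$'s second turn, where \Cref{lemma:lexicographic-spne} gives it to him.

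There are also smaller errors.
\begin{itemize}
\item In Case $v^k$ you discarded a correct argument because of an arithmetic slip: $8+4+16+1=29<32$. Since $v^k$ already belongs to $B$, the relevant remainder is only $25$ plus lower copies. So \Cref{lemma:lexicographic-spne} gives $x^k$ to $C$ directly.
\item The claim you substituted, that $C$ can guarantee $u_C(\{x^k,y^k\})$, is false. After $w^k,v^k,x^k$, \Cref{lemma:lexicographic-rr} forces $A$ onto $u^k$ and then $B$ onto $y^k$, so $C$ cannot obtain $y^k$. Only $u_C(x^k)$ is needed.
\item $u_B(\{v^k,x^k\})$ is not $B$'s greedy guarantee; greedy play yields only $v^k$ and $y^k$. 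It is the payoff of the $x^k$-first strategy, which is exactly what still has to be proved.
\end{itemize}
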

    \begin{proof}
        The maximum utility that agent $B$ can receive at an SPNE in this subgame is by receiving bundle $\{v^k, x^k\}$ from $G^k$.
        This is attainable by picking $x^k$.
        After agent $B$ picked $x^k$, agent $C$ can guarantee that he receives utility of at least $u_C(y^k + z^k)$.
        If agent $C$ picks either $u^k$ or $v^k$ in his next turn, agent $B$ will pick $y^k$ by repeated application of \Cref{lemma:lexicographic-rr}. Hence, item $v^k$ will be available at $B$'s next turn, thus guaranteeing he receives utility that is at least $u_B(v^k + x^k)$.

        This allocation is not possible if agent $B$ picks any other item from $G^k$.
        If $B$ picks $u^k$ in his first turn, then $A$ must pick $v^k$ in his next turn.
        If $B$ picks $v^k$ then at every SPNE, agent $C$ will receive item $x^k$, resulting in lower utility for agent $B$.
    \end{proof}

    \begin{lemma}
        If $A$ picks $w^k$ and $B$ picks $x^k$ then $C$ must pick one of $y^k, z^k$ in his next turn.
    \end{lemma}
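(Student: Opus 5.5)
We argue by contradiction: we show that if $C$ picks anything other than $y^k$ or $z^k$ in this state, the constraints already established in this appendix cannot all hold. At this point $A$ holds $w^k$, $B$ holds $x^k$, and every item of $G^k$ except $u^k, v^k, y^k, z^k$ is allocated. All lower copies $G^1,\dots,G^{k-1}$ are still available.

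We first collect the constraints that any SPNE continuation must satisfy. By \Cref{lemma:utility-of-A}, and because utilities are lexicographic across copies, $A$ must end with $u^k$, since $A$ already holds $w^k$. Agent $C$ can guarantee $u_C(y^k + z^k)$ by greedy play from this point. His top remaining item is $y^k$, because $C$ values $x^k$ more but it is gone. Taking $y^k$ now and then his best later item gives at least $y^k$ plus something, and by the lexicographic weights $y^k$ alone beats everything $C$ values below it. So in every SPNE continuation $C$ must receive $y^k$; by \Cref{lemma:lexicographic-spne}, $u_C(y^k) > u_C(M' - y^k)$ once $u^k$ and $v^k$ are assumed to go elsewhere.

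We then split into two cases according to $C$'s pick.

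Case 1: $C$ picks $u^k$ or $v^k$. If $C$ takes $u^k$, then $A$ cannot obtain $u^k$, which contradicts \Cref{lemma:utility-of-A} outright. If $C$ takes $v^k$, we use the subgame reasoning from the previous lemma. It is now $A$'s turn, and $A$ must take $u^k$ by \Cref{lemma:lexicographic-spne}. Next $B$, whose best remaining $G^k$ item is $y^k$ (since $u^k$ and $v^k$ are gone), takes $y^k$ by \Cref{lemma:lexicographic-rr}, applied to $B$ and $C$, who both value $y^k$ above all remaining items. Thus $C$ ends with $\{v^k, z^k\}$ from $G^k$, worth $4+1$. This is less than the $16+1$ he obtains by taking $y^k$ and later $z^k$. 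So this pick is not a best response.

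Case 2: $C$ picks an item outside $G^k$, effectively skipping. This is the main obstacle: we must rule out that skipping lets $C$ secure $y^k$ and more. Let $A$ move next; $A$ must take $u^k$, or else risk losing it to $B$'s next pick, since $B$ ranks $u^k$ second after $v^k$. We apply \Cref{lemma:lexicographic-rr} to $A$ and $B$ with item $u^k$, which is valid when $B$ strictly prefers $u^k$ over the rest excluding $v^k$. This is delicate: $B$'s lexicographic top is $v^k$, so we instead argue directly from $A$'s requirement, using \Cref{lemma:lexicographic-spne}. Then $B$ faces $v^k, y^k, z^k$ and takes $v^k$, his top. $C$ then gets $y^k$, and $A$ or $B$ may take $z^k$ before $C$'s next turn. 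Comparing this with picking $y^k$ immediately, the worry is that the skip pick outside $G^k$ adds only lower-copy value, which is dominated by the value of $z^k$. The plan, then, is to show that in the skip line $C$ loses $z^k$, because $A$ or $B$ grabs it, while the lower-copy item gained is worth less than $z^k$ by the factor $2^{6}$ scaling. The careful part is checking who holds $z^k$ after the skip. If $z^k$ survives to $C$, the two lines tie on $G^k$, and we must argue via tie-breaking on lower copies that the statement still holds, or else refine the claim to "at every SPNE, $C$'s bundle from $G^k$ is $\{y^k, z^k\}$".
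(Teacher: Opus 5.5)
\textbf{Gap: Case 2 (where $C$ picks outside $G^k$) is not proved.} You trace one line of play. You then leave open whether $z^k$ survives to $C$, and even suggest the statement might need weakening. The traced line is itself unjustified. \Cref{lemma:lexicographic-spne} only tells you that $A$ eventually receives $u^k$, not that he takes it at this turn. As you observe, \Cref{lemma:lexicographic-rr} is unavailable, because $u^k$ is not dominant for $B$ or $C$. So ``$B$ then faces $v^k, y^k, z^k$'' does not follow, and other orders of play are not excluded.

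\textbf{The missing idea} is to argue from guaranteed payoffs and count, as the paper does, instead of tracing the play.
\begin{enumerate}
\item After $C$'s skip, $A$ can secure $u^k$ by taking it at once. He then also secures one of $v^k, y^k, z^k$, since only $B$ and $C$ pick before his next turn. Every item of $G^k$ is worth more to $A$ than all of $\bigcup_{i<k} G^i$. So every SPNE of this subgame gives $A$ the item $u^k$ plus a second item of $\{v^k, y^k, z^k\}$.
\item Likewise $B$'s next turn comes when at least three of $u^k, v^k, y^k, z^k$ remain, so he secures one of them by greedy play.
\item Hence $C$ ends with at most one of these four items. His utility is then at most $u_C(y^k) + u_C(\bigcup_{i<k} G^i) < u_C(y^k + z^k)$.
\item But $C$ guarantees $u_C(y^k + z^k)$ by taking $y^k$ now, since the two intervening picks cannot exhaust $u^k, v^k, z^k$.
\end{enumerate}
Thus $z^k$ never reaches $C$ after a skip, and no tie-breaking or refinement of the claim is needed.

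\textbf{A smaller point.} You dismiss $C$'s pick of $u^k$ via \Cref{lemma:utility-of-A}. That lemma is a fact about $A$'s payoff at the root, so it constrains only the equilibrium path. It gives $C$ no reason to avoid $u^k$ at this node. Argue as in your $v^k$ case instead: \Cref{lemma:lexicographic-rr} makes $A$ take $v^k$ and then $B$ take $y^k$. This leaves $C$ with at most $u_C(u^k + z^k)$ plus lower copies, which is below $u_C(y^k+z^k)$.
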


    \begin{proof}
        As mentioned previously, picking $u^k$ or $v^k$ would result in another agent picking item $y^k$ thus reducing agent $C$'s utility below the $u_C(y^k + z^k)$ threshold.
        If agent $C$ picks an item outside of $G^k$, that means that agent $A$ can receive utility of at least $u_A(u^k + z^k)$ by picking greedily. In particular, that means that agent $C$ receives only one item from $G^k$ and that is below the $u_C(y^k + z^k)$ threshold.
        So, agent $C$ must pick $y^k$ or $z^k$.
    \end{proof}

    \begin{lemma}
        The picking order $w^k \rightarrow x^k\rightarrow y^k$ is the initial segment of an SPNE and must be continued by $u^k \rightarrow v^k \rightarrow z^k$.
    \end{lemma}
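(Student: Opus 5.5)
The plan is to analyse the state reached after the picks $w^k \rightarrow x^k \rightarrow y^k$ directly. In that state $A$ holds $\{w^k\}$, $B$ holds $\{x^k\}$, $C$ holds $\{y^k\}$, it is $A$'s turn, and the available items are $\{u^k, v^k, z^k\} \cup \bigcup_{i<k} G^i$. I would first apply \Cref{lemma:lexicographic-rr} three times to force the continuation $u^k \rightarrow v^k \rightarrow z^k$. Then I would argue that the resulting subgame on $\bigcup_{i<k} G^i$ is a fresh copy of the $(k-1)$-copy instance with $A$ to move. Extending by any SPNE of that subgame then shows the segment is the start of an SPNE.

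The basic numerical fact is that for every agent $I$, the utility of all lower copies together is small:
\begin{equation*}
u_I\Bigl(\textstyle\bigcup_{i<k} G^i\Bigr) \le \sum_{i<k} 63\cdot 2^{6(i-1)} = 2^{6(k-1)} - 1 < 2^{6(k-1)}.
\end{equation*}
So every item of $G^k$ with base value at least $1$ is worth strictly more than all lower copies combined. With this, the three forced picks go as follows.

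For $u^k$, I use the pair $(A, C)$ rather than $(A,B)$, since $B$ prefers $v^k$. Agent $A$ has $32 > 16 + 1 + (\text{lower}) $, so $u_A(u^k)$ exceeds the value of all other available items. Agent $C$ has $8 > 4 + 1 + (\text{lower})$. Hence by \Cref{lemma:lexicographic-rr} (with additivity, so the partial bundles do not matter), $A$ must pick $u^k$.

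Next, for $B$ and item $v^k$, I use the pair $(B, A)$. Agent $B$ has $32 > 1 + (\text{lower})$ and agent $A$ has $16 > 1 + (\text{lower})$, so $B$ must pick $v^k$.

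Finally, for $C$ and item $z^k$, I use the pair $(C, A)$. Both have base value $1$ for $z^k$, which exceeds the total value of the lower copies by the bound above. So $C$ must pick $z^k$.

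After these picks $G^k$ is exhausted and $A$ is again to move. Because utilities are additive, the remaining game is strategically identical to the $(k-1)$-copy instance. It has an SPNE, and prepending the six picks gives a strategy profile along this path.

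The main obstacle is the claim that $w^k \rightarrow x^k \rightarrow y^k$ is genuinely an SPNE path, and not merely a path compatible with the necessary conditions from the previous lemmas. For this I would check optimality at each of the first three nodes against the already-characterised continuations.

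At $C$'s node, picking $y^k$ yields $\{y^k, z^k\}$ within $G^k$. By the preceding lemma, $C$'s only candidates are $y^k$ or $z^k$, and picking $z^k$ cannot yield more than $\{y^k,z^k\}$ in $G^k$. In both cases the continuation on the lower copies is the same subgame, so $C$ is indifferent and $y^k$ is a best response.

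At $B$'s node, the lemma showing $B$ must pick $x^k$ already gives that $x^k$ attains $B$'s best achievable $G^k$-bundle $\{v^k, x^k\}$.

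At $A$'s root, both $u^k$ and $w^k$ give $A$ exactly $\{u^k, w^k\}$ from $G^k$, followed by the same $(k-1)$-copy subgame with $A$ to move. So $A$ is indifferent, and choosing $w^k$ together with the same SPNE of the lower subgame is consistent.

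The subtle point is that lower-copy outcomes never outweigh a single $G^k$ difference, which is again the strict bound displayed above. I expect the bookkeeping of identical continuation subgames to be the delicate part.
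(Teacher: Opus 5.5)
Your derivation of the forced continuation is correct and is the paper's proof made explicit. The paper only says ``repeatedly apply \Cref{lemma:lexicographic-rr}'', whereas you supply the partners $(A,C)$, $(B,A)$, $(C,A)$ and the bound $u_I(\bigcup_{i<k}G^i)=2^{6(k-1)}-1$ (every row of the table sums to $63$). Partner $C$ rather than $B$ is indeed necessary for $u^k$. The paper gives no separate argument for the clause that $w^k\to x^k\to y^k$ begins an SPNE. Checking optimality node by node is the right way to supply one, but your check at $C$'s node does not go through as written.

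The problem is the sentence ``in both cases the continuation on the lower copies is the same subgame''. After $w^k\to x^k\to z^k$ it is $A$'s turn with $u^k,v^k,y^k$ and all of $\bigcup_{i<k}G^i$ still available. Nothing forces the three $G^k$ items out in the next three turns: \Cref{lemma:lexicographic-rr} has no partner for $u^k$ there ($B$ prefers $v^k$, $C$ prefers $y^k$), and the paper's very next lemma says $A$ may instead move into $G^{k-1}$. Each agent then holds a reserved $G^k$ item that it can spend as a pass. So a priori the lower copies are not played as the $(k-1)$-copy game with $A$ first, and $C$'s lower-copy payoff after $z^k$ could exceed what it gets after $y^k$. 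Since the $G^k$ bundles tie, this is exactly what decides whether $y^k$ is a best response.

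To close the gap, you must exhibit an SPNE continuation after $z^k$ in which $A$ takes $u^k$ at once. Then $v^k$ and $y^k$ follow by \Cref{lemma:lexicographic-rr} with partners $(B,A)$ and $(C,A)$, and play continues with the same SPNE $\sigma$ of the $(k-1)$-copy game that you use after $y^k$. You must also show that $A$ gains nothing by delaying.

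One way to do this is in three steps. First, show that in this subgame no agent ever takes another's reserved item. Second, show that once any agent takes its own, the other two are forced to take theirs in the next two turns. For example, if $B$ takes $v^k$ and $C$ does not take $y^k$ immediately, then $A$ is forced to take $u^k$ and $B$ then takes $y^k$. Hence passes occur as a single block of three turns that returns the move to the same player. Third, conclude that ``the current mover triggers the block, then everyone plays $\sigma$'' is subgame perfect. Triggering gives the mover its $\sigma$-value at the current node, and this is at least the $\sigma$-value it obtains after any lower pick $g$ followed by the next mover triggering the block.
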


    \begin{proof}
        By repeatedly applying \Cref{lemma:lexicographic-rr}, we show that items are picked in order $u^k \rightarrow v^k \rightarrow z^k$. Hence, all items outside of $G^k$ are unallocated.
        This results in a partial allocation $\{u^k, w^k\}, \{v^k, x^k\}, \{y^k, z^k\}$.
    \end{proof}

    \begin{lemma}
        The picking order $w^k \rightarrow x^k\rightarrow z^k$ is the initial segment of an SPNE. In this SPNE items in $G^k$ are allocated as $\{u^k, w^k\}, \{v^k, x^k\}, \{y^k, z^k\}$. However, the continuation is not unique.
        Agent $A$ may either pick $u^k$ or an item from $G^{k-1}$ in his next turn.
    \end{lemma}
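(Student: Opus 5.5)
After the picks $w^k \rightarrow x^k \rightarrow z^k$, it is $A$'s second turn and the available items of $G^k$ are $u^k, v^k, y^k$. I would first pin down the allocation of these three items, and then show that $A$'s next move is genuinely free between $u^k$ and an item of $G^{k-1}$. All items of lower copies are worth less to every agent than any single item of $G^k$. Hence \Cref{lemma:lexicographic-spne} and \Cref{lemma:lexicographic-rr} apply to items of $G^k$ regardless of what happens in $\bigcup_{i<k} G^i$.

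\textbf{Step 1: the allocation of $u^k, v^k, y^k$.} $A$ holds $\{w^k\}$, and $u_A(u^k) > u_A(M' - u^k)$ for the remaining set $M'$. So by \Cref{lemma:lexicographic-spne}, $A$ receives $u^k$ in every SPNE of this subgame. Similarly, $u_B(v^k) = 32$ dominates everything else available to $B$, so $B$ receives $v^k$.

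For $C$, the only remaining $G^k$ item is $y^k$ (16), and it exceeds $C$'s total value for lower copies. Once $u^k, v^k$ are settled as going to $A, B$, $C$ therefore cannot lose $y^k$ in equilibrium. Concretely, if $y^k$ were still available at $C$'s turn, \Cref{lemma:lexicographic-spne} forces $C$ to get it. If instead $A$ or $B$ took $y^k$ early, they would forgo $u^k$ or $v^k$ respectively; I would show this is a strict loss for them, since the other agent would grab the dominating item. This yields the partial allocation $\{u^k,w^k\},\{v^k,x^k\},\{y^k,z^k\}$, as stated.

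\textbf{Step 2: non-uniqueness of $A$'s continuation.} I need two SPNE continuations, one in which $A$ picks $u^k$ now and one in which $A$ picks an item of $G^{k-1}$ now. The key observation is that no one else wants $u^k$ enough to take it before $A$'s next turn. $B$'s favourite is $v^k$, and $B$ will take $v^k$ since $u_B(v^k)$ beats everything else available to him. $C$'s favourite available item is $y^k$, and by Step 1 $C$ must secure it; picking $u^k$ would cost $C$ the item $y^k$ (to $B$ or $A$ later) by the same argument as in the earlier lemma. So $u^k$ survives until $A$'s following turn either way, and $A$ can delay it by one round.

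When $A$ delays, the play on $G^{k-1}$ is shifted: $A$ takes the first $G^{k-1}$ item, then $B$ takes $v^k$, $C$ takes $y^k$, and $A$ takes $u^k$. Because of the lexicographic scaling, $A$'s payoff on $G^k$ is identical in both branches. Her payoff on lower copies is then determined by an equilibrium of the residual game on $\bigcup_{i<k}G^i$, with $B$ and $C$ having spent one turn each in between. To exhibit the equal-payoff continuation, I would have $A$ pick $u^{k-1}$ or $w^{k-1}$ in that slot and show the interleaving reproduces the same $G^{k-1}$ analysis, giving $A$ utility $u_A(\{u^{k-1},w^{k-1}\})$ by \Cref{lemma:utility-of-A} applied to the smaller instance. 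Hence both moves are optimal for $A$, and both continuations extend to SPNEs.

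\textbf{Main obstacle.} The hard part is Step 2. I must verify that interleaving a $G^{k-1}$ pick does not let $B$ or $C$ profit by deviating into $G^{k-1}$ rather than taking $v^k$ or $y^k$. I would handle this by comparing scales: any deviation that costs them $v^k$ or $y^k$ loses more than all of $G^{k-1}$ is worth. I would then invoke the already-established structure on $G^{k-1}$ inductively.
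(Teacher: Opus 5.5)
Your Step~1 matches the paper, which simply invokes \Cref{lemma:lexicographic-spne} for each of $A$, $B$, $C$. For $C$ the lemma applies directly, since $16 > 8+4$ plus everything in the lower copies, so you need not settle $u^k,v^k$ first. The gap is in Step~2, exactly at the point you flag as the main obstacle: the tool you propose there does not work.

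The deviation you must rule out for $B$ is \emph{not} one that costs him $v^k$. After $A$ takes an item of $G^{k-1}$, $B$ can take a lower item and postpone $v^k$ without ever losing it. If $A$ later took $v^k$, then $u^k$ meets the hypothesis of \Cref{lemma:lexicographic-rr} for $B$ and $A$, so $B$ would take $u^k$. If $C$ took $v^k$, then $A$ is forced onto $u^k$, $B$ is then forced onto $y^k$, and $C$ loses $y^k$. So nobody takes $v^k$ from $B$. Your comparison ``losing $v^k$ outweighs all of $G^{k-1}$'' therefore says nothing about the only deviation that matters. That deviation changes \emph{when} the $G^k$ items are taken, and hence the turn order in the lower copies; this is the ``skipping a turn'' issue from the proof sketch of \Cref{theorem:exponential-equilibria}. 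The same applies to $C$ postponing $y^k$. Your claim that $B$ ``will take $v^k$'' because it is his favourite is greedy reasoning about which item he ends up with, not about timing.

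The missing idea, on which the paper's proof rests, is a cascade. Whichever of $A,B,C$ first takes his designated item among $u^k,v^k,y^k$, \Cref{lemma:lexicographic-rr} forces the other two to take theirs on the immediately following turns:
\begin{itemize}
\item If $A$ takes $u^k$, then $A$ also wants $v^k$ and $y^k$, so $B$ and $C$ are forced directly.
\item If $C$ takes $y^k$, then $C$ also wants $u^k$, which forces $A$ onto $u^k$ and then $B$ onto $v^k$.
\item If $B$ takes $v^k$, then $C$ must take $y^k$ at once; otherwise $A$ is forced onto $u^k$ and $B$ onto $y^k$.
\end{itemize}
Hence the $G^k$ picks always form one contiguous round in which each agent spends exactly one turn. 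Afterwards the same agent moves again in an unchanged lower-copy state. This null round is what the paper means by $G^k$ having ``no strategic value'': it makes the timing irrelevant and justifies reducing to the instance $\bigcup_{i<k}G^i$.

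Relatedly, in your interleaving $A$ also spends a turn, on $u^k$. Nothing changes in the lower copies precisely because all three agents spend one turn, not because ``$B$ and $C$ have spent one turn each''.

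The same reduction also gives the restrictive half of the statement, which you do not address. Applying \Cref{lemma:A-must-pick-Gk} to $\bigcup_{i<k}G^i$ shows that if $A$ does not take $u^k$ now, he must take $u^{k-1}$ or $w^{k-1}$, and not, say, an item of $G^{k-2}$.
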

    \begin{proof}
        If items are picked in order $w^k \rightarrow x^k\rightarrow z^k$, then at every SPNE agent $A$ receives item $u^k$, agent $B$ receives item $v^k$ and agent $C$ receives item $y^k$ by \Cref{lemma:lexicographic-spne}.
        If at any point in the continuation any agent picks one of the items from $G^k$, the conditions of \Cref{lemma:lexicographic-rr} will trigger and the other two agents will also have to pick an item from $G^k$.
        Hence, we treat the items of $G^k$ as having their allocation predetermined, and having no strategic value.
        Now, as this is agent $A$'s turn, he can pick item $u^k$ which would trigger the cleanup operation.
        Alternatively, he can pick an item not in $G^k$.
        But, applying \Cref{lemma:A-must-pick-Gk} to the instance given by $\bigcup_{i=1}^{k-1} G^i$, we see that agent $A$ must pick an item in $G^{k-1}$ and in particular $u^{k-1}$ or $w^{k-1}$.
    \end{proof}

    These lemmas complete the proof.

\subsection{Hardness for Submodular Utility Functions}

\lemmaSubmodularCorrectness*

    \begin{proof}
    Consider the final allocations $S_A$ and $S_B$ of agents $A$ and $B$ at an SPNE.
    In its first two turns, agent $A$ can pick both of $0_1, 1_1$ or one of $0_1, 1_1$ and $g_1$. So, the utility of $A$ at an SPNE is at least $2^{2n+1} + 2^{2n}$.
    In its first turn, agent $B$ can always pick one of $0_1, 1_1$, thus at an SPNE agent $B$ has utility of at least $2^{2n+1}$.
    By the fact that agents have essentially lexicographic preferences, the above is only possible if $A$ picks two out of the three items $0_1, 1_1$ and $g_1$ and $B$ picks one of $0_1, 1_1$.
    $A$ must pick one of $0_1, 1_1, g_1$ in its first turn, and one in its second turn, otherwise $B$ will be able to pick two of these items in its first two turns.
    So, in the first turn $A$ must pick one of $0_1, 1_1, g_1$. If $A$ picks $\ell_1$ for $\ell \in \{0, 1\}$, $B$ must pick $(1-\ell)_1$ and $A$ must then pick $g_1$.
    If $A$ picks $g_1$ in its first turn, then $B$ picks one of $0_1,1_1$, and $A$ then picks the other Boolean value in its next turn. It is now $B$'s turn to pick.
    By applying the same reasoning, we see that $B$ must pick one of $0_2, 1_2, g_2$ in its next turn, then $A$ will pick one of $0_2, 1_2$ and finally $B$ will pick the remaining item.
    By applying this reasoning inductively, we see that at every SPNE, for all $i \in [n]$, agent $A$ and $B$ have exactly one of the $0_i, 1_i$ items. $A$ has all the odd numbered $g_i$ and $B$ has all the even numbered $g_i$.
    \end{proof}

\lemmaSubmodularDecidingPhi*

\begin{proof}
    From \Cref{lemma:submodular-example} we conclude that for odd $i$, agent $A$ can decide if $0_i$ or $1_i$ are in its bundle. It does so with full knowledge of $x_j$ for $j < i$.
    Agent $B$ similarly decides the value for $x_j$ and even $i$.
    Further, in the second to last turn, it will be agent $A$'s turn and the available items will be $\varphi$ and $\neg \varphi$. As this is its final choice, $A$ will simply pick the one that maximises its utility.
    At this point, agent $A$'s bundle induces a valid assignment on the $x_i$.
    If its induced assignment satisfies $\psi$, $A$ prefers item $\varphi$ to $\neg \varphi$.
    Otherwise, $A$ prefers item $\neg \varphi$ to $\varphi$.
    Agent $B$ will receive the item that agent $A$ does not pick. $B$ prefers item $\varphi$ to item $\neg \varphi$ and so has an incentive to falsify the formula.
    
    So, if $\varphi$ is satisfiable, $A$'s bundle at an SPNE will satisfy $\psi$. That is because in its turns, agent $A$ can force $\psi$ to be satisfied and doing so maximises its payoff.
    On the other hand, if $\varphi$ is not satisfiable, $A$'s bundle at an SPNE will not satisfy $\psi$, because agent $B$ can choose items that force $\psi$ to be false and thus maximise its payoff.
\end{proof}

\subsection{Hardness for \OXS{} Utility Functions}
\lemmaPickOneOfBraces*

\begin{proof}
    We can deduce the allocations by repeatedly applying \Cref{lemma:lexicographic-rr}.
    If $a$ is available, agent $A_1$ must pick it, otherwise agent $A_2$ would in its next turn.
    Similarly, agent $A_2$ must pick good $u$ in its turn and agent $A_3$ must pick good $v$.
    At this point, goods $b$ and $c$ have $0$ marginal utility to all agents, so they will not be picked.
    Afterwards, agent $A_1$ picks $x$, $A_2$ picks $y$, and agent $A_3$ picks $z$, resulting in final allocations of $S_1 = \{a, x\}, S_2 = \{u, y\}$ and $S_3 = \{v, z\}$.
    All agents have $0$ marginal utilities to the unallocated items $b, c$.

    If $a$ is unavailable, but $b$ and $c$ are, then by a similar analysis at the end of two rounds, agents receive the following bundles: $A_1 = \{u, c\}, A_2 = \{v, x\}$ and $A_3 = \{b, y\}$. The unallocated fresh item $z$ has no marginal utility to any agent.
\end{proof}

\lemmaPickOneOf*

\begin{proof}
    We proceed by repeatedly applying \Cref{lemma:lexicographic-rr}.
    If $a$ is available then, $S_1 = \{g_1, g_2\}$, $S_2 = \{a, y\}$, $S_3 = \{x, z\}$ and all agents have $0$ marginal utility towards $b, c$.
    In particular, if $a$ is available, the choices of the agents do not depend on whether $b$ and $c$ are available.
    If $a$ is unavailable but $b$ is, then $S_1 = \{g_1, g_2\}$, $S_2 = \{x, y\}, S_3 = \{b, z\}$ and both agents have $0$ marginal utility towards $c$.
    Finally, if $a$ and $b$ are unavailable, but $c$ is available, $S_1 = \{g_1, g_2\}, S_2 = \{x, z\}, S_3= \{y, c\}$.
\end{proof}

\graphColouring*

\begin{proof}
We partition the set of edges into two types. An edge between $x_{i, j}$ and $x_{i', j'}$ is of type A if $i=i'$, and of type B if $j=j'$. By extension, we also talk about neighbours of type A or B, depending on the type of edge connecting the vertex to a particular neighbour. By construction of the graph, every vertex has at most two neighbours of type A and at most two neighbours of type B. Furthermore, we can make the following crucial observation: for every vertex, if it has two neighbours of type A (resp.\ type B), then those are also connected by an edge of type A (resp.\ B).

We colour the graph in a greedy manner. In each round, we pick a vertex $v_0$ that has not yet been assigned a colour. If there exists a colour that has not been assigned to any neighbour of $v_0$, then we assign that colour to $v_0$ and proceed to the next round. Otherwise, if for each of the three colours there already exists some neighbour of $v_0$ with that colour, we need to modify the colouring along a particular path. We describe this next.

Without loss of generality, consider the case where vertex $v_0$ has two neighbours of type A, with colours 1 and 2, as well as one neighbour of type B with colour 3. Furthermore, if the vertex has a second neighbour of type B, and if that neighbour has been assigned a colour, we can safely assume that the colour is 2. Indeed, the colour cannot be 3 (since the two neighbours of type B are adjacent, by the crucial property), and if it is colour 1, then we can simply swap the roles of colours 1 and 2.

Now, let $v_1$ denote the neighbour of vertex $v_0$ that has colour 1, and change its colour to be 3. As a result, we can now assign colour 1 to vertex $v_0$, since none of its neighbours have that colour anymore. If there is no mistake in the colouring, then we can proceed to the next round. However, since we changed the colour of $v_1$, this might have introduced a mistake in the colouring, which we now need to fix. Let $v_2$ denote a neighbour of $v_1$ that has the same colour as $v_1$, namely colour 3. Note that $v_2$ is a neighbour of type B, and it is the only neighbour of $v_1$ that has colour 3. This follows from the crucial property above. Now, we change the colour of $v_2$ from 3 to 1. If $v_2$ does not have any neighbours with colour 1, we can proceed to the next round. Otherwise, let $v_3$ denote a neighbour of $v_2$ that has colour 1. By the crucial property, any neighbour of $v_2$ that has colour 1 must be of type A, since $v_1$ is a neighbour of type B and used to have colour 1 (and thus the other neighbour of type B cannot also have colour 1). Thus, $v_3$ is a neighbour of type A and it is the only neighbour of $v_2$ that has colour 1. The next step is to change the colour of $v_3$ from 1 to 3, and (if needed) consider its unique neighbour $v_4$ that has colour 3. This procedure must terminate with a valid partial colouring, unless it cycles, i.e., we end up changing the colour of a vertex a second time. \Cref{fig:graph-colouring} shows an illustration of this recolouring procedure. Next, we show that the procedure cannot cycle.

In order to prove that the procedure does not cycle, we note some properties of the path $v_0 v_1 v_2 v_3 \cdots$ on which we change the colouring. First of all, the edge $v_k v_{k+1}$ is of type A for even $k$, and of type B for odd $k$. Furthermore, when we reach $v_k$, we change its colour from 3 to 1 when $k$ is even, and from 1 to 3 when $k$ is odd. Finally, for every $k \geq 1$, any neighbours of $v_k$, excluding $v_{k-1}$ and $v_{k+1}$, must be uncoloured or have colour 2. Because of this last point, the only way the path could cycle (i.e., visit the same vertex more than once) is by reentering the path at $v_0$, i.e., $v_0 = v_k$ for some $k \geq 2$. In that case, $v_{k-1}$ would have to be a neighbour of $v_0$ that has colour 1 or 3, and is different from $v_1$. Thus, $v_{k-1}$ would have to be the neighbour of type B that has colour 3. Now, we have a contradiction. If $k-1$ is odd, then, as argued above, the colour of $v_{k-1}$ should be 1 (and we would change it to 3 when we reach it on the path). On the other hand, if $k-1$ is even, then, as argued above, the edge $v_{k-1} v_k$ should be of type A. So we either contradict the fact that $v_{k-1}$ has colour 3, or the fact that the edge $v_{k-1} v_k = v_{k-1} v_0$ is of type B. As a result, the path cannot cycle and terminates. At that point, the colouring is valid and we can proceed to the next round.

This greedy procedure yields a 3-colouring of the graph and it is easy to see that it runs in polynomial time, since in each round the path can visit each vertex at most once.
\end{proof}

\lemmaOxsCorrectness*

    \begin{proof}
        In the first turn, agent $A$ can pick one of the three literals of the clause $C_1$. If $A$ does not pick one of those three items, then they will be available to one of the agents in the \PickOneOf{} gadgets.
        These agents have lexicographic preferences, and these items are their favourite.
        So, by an argument similar to that of \Cref{lemma:lexicographic-rr}, agent $A$ must pick one of the items of $C_1$.
        Now, it is the turn of the $P$ agents. If agent $A$ did not pick the first literal of $C_1$, they will pick $\ell_1^1$ in their \emph{first turn} making it unavailable to $A$.
        Similarly, for agents $Q, R$ and items $\ell_2^1, \ell_3^1$.
        So, when it is the turn of agent $B$, items $\ell_1^1,\ell_2^1,\ell_3^1$ are unavailable, and so $A$'s favourite available item is $g_1$.
        Hence, we can apply \Cref{lemma:lexicographic-rr} to deduce that agent $B$ will pick item $g_1$.
        Now, it is the turn of agent $A$ again, and by \Cref{lemma:lexicographic-rr}, it must pick item $g_2$.
        Agents $P, Q, R$ take a second turn, and then agent $B$ takes item $g_3$.
        In these two turns, agents $P, Q, R$ have also chosen all items that represent the negations of the literal that agent $A$ has picked. Thus, whenever agent $A$ picks some literal, this ensures that it will not be able to pick any item representing the negation of that literal in later turns. In essence, this ensures that when a literal is used to satisfy a clause, its negation cannot be used to satisfy some other clause.
        
        Assuming the correctness of gadgets $P, Q,R$, we can proceed inductively and argue that if $\varphi$ is satisfiable, agent $A$ can make choices such that at least one of the items of clause $C_i$ is always available at turn $2i-1$.
        If agent $A$ acts in this way, then agents $P, Q, R$ will always pick any remaining available item from clause $C_i$ at turn $2i-1$.
        Hence, agent $A$ receives an allocation of exactly one item from each $C_i$ plus all items of the form $g_{3k+2}$.
        If instead $\varphi$ is \emph{unsatisfiable}, regardless of the choices of agent $A$, at some point there will be a clause $C_i$ such that agents $P, Q, R$ have already taken all items from it.
        As such, $A$ will receive strictly lower utility in the case where $\varphi$ is unsatisfiable.

        It remains to argue the correctness of the gadgets.
        \paragraph{First appearance of $x_j$.} If $x_j$ appears for the first time in clause $C_i$, gadget \PickOneOf$(x_j^i > \{\overline{x_j^a}, \overline{x_j^b}\})$ will pick $x_j^i$ if it is available on the first turn. In the second turn it will not pick any of the items $\overline{x_j^a}, \overline{x_j^b}$.
        If, instead, $x_j^i$ is unavailable, then items $\overline{x_j^a},\overline{x_j^b}$ must be available, so the agents of the gadget will pick them.
        Importantly, there is an item that can act as a ``flag'' for whether agent $A$ has picked item $x_j^i$. 
        By \Cref{lemma:pick-one-of-braces}, the fresh item $z$ is picked if and only if $x_j^i$ was not picked by $A$. This will be useful in the case of the second appearance of $x_j$.
        \paragraph{Second appearance of $x_j$.} If $x_j$ appears for a second time in clause $C_k$ and does not appear a third time, we simply need to make sure that item $x_j^k$ is not unavailable after the first turn of the agents of the gadget.
        This is achieved by \PickOneOf$(x_j^k > a > b)$ for fresh items $a, b$.
        Now, suppose $x_j$ appears in a literal in clause $l$ for the third time.
        If literals appear in order $x_j^i, x_j^k, \overline{x_j^l}$, then gadget \PickOneOf$(x_j^k > \overline{x_j^l} > a)$ behaves as expected.
        If $A$ does not pick item $x_j^k$ then the gadget will pick it in the first turn.
        If agent $A$ picks item $x_j^k$ then the gadget will pick its logical negation $\overline{x_j^l}$ after two turns if it is available.
        If instead item $\overline{x_j^l}$ is unavailable, because agent $A$ has picked the first instance $x_j^i$, then the gadget picks the fresh item $a$ which is always available.
        In the second case, where literals appear in order $x_j^i, \overline{x_j^k}, x_j^l$, we need to use gadget \PickOneOf$(\overline{x_j^k} > z > x_j^l)$, where $z$ is the item used in gadget \PickOneOf$(x_j^i > \{\overline{x_j^k}, a\})$.
        If item $\overline{x_j^k}$ is unavailable because agent $A$ picked item $x_j^i$, then item $z$ is also available. So the gadget picks item $z$ without affecting $x_j^l$.
        If instead, agent $A$ picks item $\overline{x_j^k}$ at turn $2k+1$, item $z$ is unavailable, and so the gadget picks $x_j^l$ which is the negation of the picked item.
        Finally, if the order of the literals is $x_j^i, \overline{x_j^k}, \overline{x_j^l}$, then there are no logical consequences to subsequent literals when agent $A$ picks $\overline{x_j^k}$ and the only function of the gadget is to ensure that the item is immediately unavailable.
        This is achieved by gadget \PickOneOf$(\overline{x_j^k}> a >b)$ for fresh $a, b$.
        \paragraph{Third appearance of $x_j$.} There are no logical consequences to subsequent literals when agent $A$ picks agent $x_j$. The only role of the gadget is then to ensure that the literal is not available after the first turn. This is achieved by the gadget $\PickOneOf(x_j^l>a>b)$ for fresh $a$ and $b$.
    \end{proof}

\end{document}